\documentclass[acmtocl]{acmsmall}

\pagestyle{plain}
\usepackage[english]{babel}
\usepackage{amssymb}
\usepackage{times}
\usepackage{latexsym}
\usepackage{amsfonts}
\usepackage{amsmath} 
\usepackage{xspace}
\usepackage[all]{xy}
\usepackage{todonotes}
\usepackage{tikz}
\usepackage{enumerate}
\usepackage{intcalc}
\usepackage{array}

\usetikzlibrary{calc}
\usetikzlibrary{shapes}
\usetikzlibrary{decorations.pathreplacing}

\newtheorem{fact}[theorem]{Fact}
\newtheorem{claim}[theorem]{Claim} 

\newcommand{\leqnomode}{\tagsleft@true}
\newcommand{\reqnomode}{\tagsleft@false}



\newcommand{\nc}{\newcommand}

%

%
%
\newcommand{\fA}{\mathfrak{A}}%
\renewcommand{\phi}{\varphi} 

\newcommand{\AAA}{\mbox{\large \boldmath $\alpha$}}
\newcommand{\BBB}{\mbox{\large \boldmath $\beta$}}


\newcommand{\Sat}{\ensuremath{\textit{Sat}}}
\newcommand{\FinSat}{\ensuremath{\textit{FinSat}}}


\newcommand{\FOt}{\mbox{$\mbox{\rm FO}^2$}}
\newcommand{\GF}{\mbox{$\mbox{\rm GF}$}}
\newcommand{\GFt}{\mbox{$\mbox{\rm GF}^2$}}

\newcommand{\GFtTG}{\mbox{$\mbox{\rm GF}^2{+}{\textsc{tG}}$}}
\newcommand{\GFtTRG}{\mbox{$\mbox{\rm GF}^2{+}{\textsc{trG}}$}}
\newcommand{\GFtPG}{\mbox{$\mbox{\rm GF}^2{+}{\textsc{pG}}$}}
\newcommand{\GFtTSG}{\mbox{$\mbox{\rm GF}^2{+}{\textsc{tsG}}$}}
\newcommand{\GFtEG}{\mbox{$\mbox{\rm GF}^2{+}{\textsc{eG}}$}}




\newcommand{\NP}{\textsc{NP}}
\newcommand{\PTime}{\textsc{PTime}}

\newcommand{\ExpTime}{\textsc{ExpTime}}
\newcommand{\ExpSpace}{\textsc{ExpSpace}}
\newcommand{\NExpTime}{\textsc{NExpTime}}
\newcommand{\TwoExpTime}{2\textsc{-ExpTime}}
\newcommand{\TwoNExpTime}{2\textsc{-NExpTime}}


\newcommand{\str}[1]{{\mathfrak{#1}}}
\newcommand{\restr}{\!\!\restriction\!\!}
\newcommand{\N}{{\mathbb N}}   
\newcommand{\Q}{{\mathbb Q}}   

\nc{\otheta}{\hat{\theta}}

\nc{\oTheta}{{{\mathit{\hat{\Theta}}}}}  
\nc{\nTheta}{{{\mathit{{\Theta}}}}} 

\nc{\boTheta}{{\bm{\mathit{\hat{\Theta}}}}}  
\nc{\bnTheta}{{{\mathit{{\Theta}}}}} 

\nc{\uGamma}[2]{\Gamma_{#1}(#2)}

\newcommand{\sss}{\scriptscriptstyle}

\newcommand{\type}[2]{{\rm type}^{{#1}}({#2})}
\newcommand{\ctype}[1]{{\rm ctype}({#1})}
\newcommand{\neat}{neat }
\newcommand{\sizeOf}[1]{|#1|}

\newcommand{\tsplice}[2]{{#1}\restr{#2}}
\newcommand{\Csplice}[2]{{#1}^{#2}}
\newcommand{\ce}{(\exists)}
\newcommand{\cffs}{(\forall \forall_{s})}
\newcommand{\cfs}{(\forall_{s}) }
\newcommand{\cff}{(\forall \forall)}
\newcommand{\cf}{(\forall)}
\newcommand{\cfes}{(\forall \exists_{s})}
\newcommand{\cfe}{(\forall \exists)}

\newcommand{\cutout}[1]{}

\newcommand{\phiuniv}{\phi_{\sss \forall}}
\newcommand{\phisyms}[1]{\phi^{#1}_{{\sss \hspace*{-2pt}\leftrightarrow}}}
\newcommand{\phisym}{\phi^{\phantom{S_i}}_{{\sss \hspace*{-2pt}\leftrightarrow{,}\exists}}}
\newcommand{\phieq}{\phi_{eq}}
\newcommand{\phifull}[1]{\phi^{#1}}

\newcommand{\ung}{\xi}
\newcommand{\bing}{\zeta}

\title{Finite Satisfiability of the Two-Variable Guarded Fragment with Transitive Guards and Related Variants}
\author{Emanuel Kiero\'nski\affil{University of Wroc{\l}aw}  Lidia Tendera\affil{Opole University}}

\category{F.4.1}{Mathematical Logic}{Finite model theory}

\terms{Theory}

\keywords{two-variable logic, guarded fragment, equivalence relation, transitive relation, finite satisfiability problem, computational complexity}

\begin{abstract}
	We consider extensions of the two-variable guarded
	fragment, \GFt{}, where distinguished binary predicates that occur only in guards are required to be interpreted in a special way (as transitive relations, equivalence relations, pre-orders or partial orders). 
	We prove that the only fragment that retains the finite (exponential) model property is \GFt{} with {\em equivalence guards} without equality. For remaining fragments we show that the size of a minimal finite model is at most doubly exponential. 
	To obtain the result we invent a strategy of building finite
	models that are formed from a number of multidimensional grids
	placed over a cylindrical surface.   The construction yields a
	\TwoNExpTime -upper bound on the complexity of the finite satisfiability problem for these fragments. 
	We improve the bounds and obtain optimal ones for all the fragments considered, in particular \NExpTime{} for \GFt{} with equivalence guards, and  \TwoExpTime{} for \GFt{} with {\em transitive guards}.  To obtain our results  we essentially use some results from integer programming.
\end{abstract}

\begin{document}
	\maketitle
	\begin{bottomstuff}
		This work is supported by the Polish National Science Centre grant DEC-2013/09/B/ST6/01535.\\
		Authors' addresses: Emanuel Kiero\'nski, Institute of Computer Science, University of Wroc{\l}aw,
		Joliot-Curie 15, PL-50-383 Wroc{\l}aw, Poland; email: Emanuel.Kieronski@cs.uni.wroc.pl.
		Lidia Tendera, Institute of Mathematics and Informatics, Opole University, 
		Oleska 48, PL-45-052 Opole, Poland; email: tendera@math.uni.opole.pl
	\end{bottomstuff}

	\section{Introduction} 

The two-variable fragment of first-order logic, \FOt{}, and the two-variable guarded fragment, \GFt{},
are widely investigated formalisms whose study is motivated by their close connections to modal, description and temporal logics.
It is well-known that \FOt{} enjoys the finite model property~\cite{Mor75}, and that its satisfiability
($=$ finite satisfiability) problem is \NExpTime-complete~\cite{GKV97}. Since \GFt{} is contained in
\FOt{}, it too has the finite model property; however, its satisfiability problem is slightly easier, namely \ExpTime-complete~\cite{Gra99}. 

It is impossible,  in \FOt{}, to write a formula expressing the condition that
a given binary relation symbol denotes a transitive relation.   This can be shown by giving examples of {\em infinity axioms}, i.e., satisfiable sentences that have only infinite models (see Section \ref{sec:examples}). Transitivity, combined with various additional conditions,  is used in modal logics to restrict the class of Kripke frames, e.g.~to transitive structures for modal logic K4, transitive and reflexive---for S4, or equivalence structures---for S5. In temporal logic various orderings are used to model time flow. Hence, the question  arises as to whether such a feature like transitivity (or related properties like orderings and equivalence relations) could be added at a reasonable computational cost.  

In the past years 
various extensions of \FOt{} and \GFt{} were investigated
in which certain distinguished binary relation symbols are declared to denote transitive relations, equivalence relations, or various  orderings. It turns out that the decidability of these fragments depends on the {\em number} of the distinguished relation symbols available. 
Namely, the following fragments are undecidable for both finite and unrestricted satisfiability:
\GFt{} with two transitive relations \cite{Kie05}, \GFt{} with one transitive relation and one equivalence relation \cite{KMP-HT14}, \GFt{} with three equivalence relations \cite{KO12}, \GFt{} with three linear orders \cite{Kie2011}. 
Decidability of the satisfiability and finite satisfiability problems for many of the remaining cases has been recently established involving several papers not mentioned above  
\cite{Otto01,SchZ12,MZ13,BDM11,KMP-HT14,ZeumeH2016,KP-HT15}.

The above mentioned undecidability results motivated the study of fragments of \GFt{} with distinguished relation symbols, with an additional restriction that the distinguished symbols may appear only in guards~\cite{GMV99}. This restriction comes naturally when embedding many expressive modal logics in first-order logic or when considering branching temporal logics. 
Two notable examples in this category are the two-variable guarded fragment with {\em transitive guards}, \GFtTG{}, and  the two-variable guarded fragment with {\em equivalence guards}, \GFtEG{}, where all distinguished symbols are requested 
to be interpreted as transitive and, respectively, equivalence relations. 
In this case, the satisfiability problem for  \GFtEG{} remains \NExpTime-complete~\cite{Kie05}, while the satisfiability problem for \GFtTG{} is
\TwoExpTime-complete~\cite{ST04,Kie06}, for {\em any} number of special relation symbols used. We remark in passing that allowing as guards {\em conjunctions of transitive guards} leads to an undecidable fragment \cite{Kaz06}. 

The lack of the finite model property for the above mentioned
fragments naturally leads to the question, whether their finite
satisfiability problems are decidable. 
This was addressed in \citeN{ST05} where it is shown that the finite satisfiability problem for \GFtTG{}  with one transitive relation is $\TwoExpTime$-complete, and later in
\citeN{KT07} where  \TwoExpTime\- and \TwoNExpTime\- upper bounds for \GFtEG{} and, respectively,  \GFtTG{} were given.

This paper can be considered as a full, improved and expanded version of \citeN{KT07}. In particular, we give tight complexity bounds for the finite satisfiability problems for the extensions of \GFt{} with {\em special guards}, where some binary relation
symbols are required to be interpreted as transitive relations, pre-orders, partial orders or equivalence relations, but all these {\em 	special} symbols are allowed to appear only in guards. 
The main results of the paper are as follows: (i) \GFtEG{} without equality has the exponential model property and is $\NExpTime$-complete; (ii) the finite satisfiability problem for \GFtEG{} with equality is decidable in $\NExpTime$; (iii) the finite satisfiability problem for \GFtTG{} with equality is decidable in $\TwoExpTime$, cf.~Table~\ref{tabela}. 
The upper bounds given in (ii) and (iii) correspond to the lower bound for \GFtEG{} with one equivalence relation \cite{Kie05}, and the lower bound for \GFt{} with one partial order \cite{Kie06}. 
The above results allow us also to establish tight complexity bounds for the finite satisfiability problems for extensions of \GFt{} with other special guards of the above mentioned form.

\begin{table}
\tbl{Overview of \GFt{} over restricted classes of structures.}
	{\setlength{\extrarowheight}{5pt}
		\begin{tabular}{|c|c|}
			\hline
			{\bf Special symbols} & {\large{\color{white}I}} {\bf Decidability and Complexity}\\
			\hline\hline
			EG without = & {\bf FMP, \NExpTime{}-complete} { this paper}\\
			\hline 
			EG with = & {\bf \NExpTime-complete}, \Sat:\cite{Kie05} {\bf \FinSat}: {this paper} \\
			\hline			

			1 equivalence r. $(*)$ & FMP, \NExpTime-complete \cite{KO12}\\
			\hline
			2 equivalence r. & \TwoExpTime-complete,  \Sat:\cite{Kie05} {\FinSat}: \cite{KP-HT15}\\
			\hline
			3 equivalence r. & undecidable \cite{KO12}\\
		\hline
		TG & {\bf \TwoExpTime{}-complete} \Sat: \cite{ST04} {\bf \FinSat}: { this paper} \\
		\hline
			1 transitive r. & {\bf \TwoExpTime-complete} \Sat: \cite{Kie05} {\bf \FinSat}: { this paper}\\
			\hline
			2 transitive r. & undecidable \cite{Kie05,Kaz06}\\
			\hline
			1 trans. + 1 equiv. & undecidable \cite{KMP-HT14}\\
			\hline

			1 linear order $(*)$& \NExpTime-complete \cite{Otto01}\\
			\hline			
			2 linear orders $(*)$  & Sat: ? \FinSat{} in \TwoNExpTime{}   \cite{ZeumeH2016} \\
			\hline			
			3 linear orders & undecidable \cite{Kie2011}\\
			\hline			
			\end{tabular}} 
		\begin{tabnote}
		\Note{\vspace*{5pt}The upper bounds in the results marked with $(*)$ are actually obtained for full  \FOt{}. The corresponding lower bounds, if present, hold for \GFt{}.}
		\end{tabnote}
	  \label{tabela} 
\end{table}

The related case with linear orderings can be seen as an exception, because the undecidability result for \FOt{} with three linear orders can be adapted to the case when the order relations are used only as guards \cite{Kie2011}. This is, however, not so surprising, as the presence of a linear ordering no longer allows one to construct new models by taking disjoint copies of smaller models---a technique widely applied in guarded logics, and also in this paper.

In this paper we show that every finitely satisfiable
\GFtEG-formula and every finitely satisfiable \GFtTG-formula has a
model of at most double exponential size. 
The restriction of special symbols to guards suggests that one cannot enforce models where an (unordered) pair of two distinct elements belongs to distinct special relations, as e.g.~the formula $\forall xy (Sxy \rightarrow S'xy)$ contains the atom $S'xy$ in a non-guard position. 
Indeed, the constructions of our paper show that every (finitely) satisfiable formula has a {\em ramified} model that satisfies the above-described property. This in particular allows us to build models for \GFtEG-sentences as multidimensional grids (the number of dimensions equals the number of equivalence relations) where intersections of equivalence classes of distinct special relations have at most one element.

The restriction of special symbols to guards does not, however, prevent us from enforcing pairs of elements connected by the same special relation in both direction, i.e., such that $Sab\wedge Sba$ holds (cf.~Example \ref{ex:large-no-equality} in Section \ref{sec:examples}; see also the proof of Lemma 2 in
\citeN{Kie05}, where it is shown
how to enforce in \GFtTG{} a transitive relation to be an equivalence).  
Hence, when constructing a model for a \GFtTG-formula we first construct a model for the {\em symmetric part} of the formula that have the above described properties of
models for \GFtEG-formulas.         
And then, we place a number of such models over a cylindrical surface and connect them in a regular way, using only non-symmetric special connections. This way we obtain a structure satisfying both the {\em non-symmetric} and the symmetric parts of our formula. In case when the special symbols are required to be partial orders, the construction simplifies, as the symmetric part of the formula is essentially empty. We refer the reader to Section \ref{sec:normalforms}, where the intuition concerning symmetric and non-symmetric parts of the formula is formalized.

Our results give also a trivial, double exponential upper bound on the
size of a single equivalence class in a finite model of a
\GFtEG-formula. We argue that this bound is optimal,
which  contrasts with the case of general satisfiability, where it can be shown that
every satisfiable \GFtEG-sentence has a (possibly infinite) model
with at most exponential equivalence classes.

The established bounds on the size of minimal models  yield  \TwoNExpTime -upper bounds on the complexity of
the finite satisfiability problem for these fragments. 
We work further and improve the bound for \GFtEG{} to \NExpTime, and for \GFtTG{}  to \TwoExpTime; as discussed above, 
these bounds are optimal. To obtain them we do not work on the level of 
individual elements, but rather on the level of equivalence classes of models, or, more precisely,  on their succinct representations in the form of {\em counting types} (cf.~Definition~\ref{d:counting-type} in Section~\ref{sec:prelim}) that give the number of realisations of each $1$-type 
in a class. In fact, we first show that it suffices to count the number of realizations of $1$-types up to some fixed number depending only on the size of the signature, so that the number of all relevant counting types becomes bounded as well.

It is perhaps worth mentioning that there is another interesting
extension of the guarded fragment, namely \GF{} with fixpoints,
that has been shown decidable for satisfiability \cite{GW99} and
of the same complexity as pure \GF{}. Decidability of the
finite satisfiability problem for this fragment and exact complexity bounds have been recently shown  by
\citeN{BaranyB12}.

We also remark that tight exponential
complexity bounds for \GFt{} with counting quantifiers for both
satisfiability and finite satisfiability have been established by \citeN{PH07}, and the interplay between transitive guards and counting for unrestricted satisfiability has been studied by \citeN{T05}. 

The structure of the article is as follows. In Section~\ref{sec:prelim} we introduce various extensions of \GFt{} with special guards and formulate a `Scott-type'  normal form for formulas in these logics. There, in Subsection~\ref{sec:examples}, we also present several examples showing the expressive power of the various fragments. In Section~\ref{sec:types-and-counting} we introduce the main technical notions and give first important observation concerning counting types based on properties of \FOt{}. In Section~\ref{sec:eq-no-equality} we show that \GFtEG{} without equality has the finite model property. In Section \ref{sec:eq-with-equality} we proceed to \GFtEG{} with equality, where we prove the main model theoretic and complexity results concerning \GFtEG{}, and also prepare for the case of \GFt{} with transitive guards studied in Section \ref{sec:transitive}. We conclude with a discussion on the complexity of related extensions and list a few open questions. 

{\bf Related work.} 
There are natural decidable fragments beyond \GF{}. They often allow one to use {\em conjunctions of guard atoms} instead of atomic guards.
Since arbitrary conjunctions of guards lead easily to undecidability, additional restrictions are imposed  on the conjunctions to retain decidability (cf.~loosely guarded \cite{Ben97}, clique guarded \cite{Gradel99}, packed fragment \cite{Marx01}). In the case with two-variables these restrictions boil down to disallowing guards of the form $Px\wedge Qy$ expressing the cross product of two unary relations. The impact of using cross products in \GF{} has been recently studied by \citeN{BourhisMPieris17}.

Transitivity has been observed to easily lead to undecidability both in the description logics world and in database-inspired reasoning problems. Therefore, various syntactic restrictions are introduced to ensure decidability; the restriction to special guards as studied in this paper is one possible way.

The standard description logic allowing one to express transitivity of roles is the DL logic $\mathcal S$ that enjoys the finite model property, even when extended with inverses of roles. Finite model property is lost when additionally roles hierarchies and number restrictions are added, like in $\mathcal{SHIQ}$
	(see, e.g.,~\citeN{Tobies01}). The combination of role hierarchies and transitive roles is expressible in \GFt{} with transitive relations but not directly expressible in \GFtTG{}. The logic $\mathcal{SHIQ}$ is subsumed by even more expressive DL logic $\mathcal{SROIQ}$, where roles can be declared transitive, (ir)reflexive, (anti)symmetric or disjoint. 
(Finite) satisfiability of $\mathcal{SROIQ}$ has been shown decidable  and \TwoNExpTime-complete by \citeN{Kaz08} by an exponential-time reduction to the case of the two-variable fragment with counting. 

Compatibility of transitivity with {\em existential rules}, aka Datalog$\pm$, has been recently studied, e.g.,~by \citeN{GPT13}, \citeN{BagetBMR15} and \citeN{AmarilliBBB16}, who showed in particular how fragile some of the conditions are and how easily slight changes might lead to undecidability.

We also point out that even though in this paper we discuss fragments for which it required more care to prove decidability of the finite satisfiability problem than to prove decidability of the satisfiability problem, this by no means is a general rule. In particular, there are logics such that \Sat{($\cal{L}$)} is undecidable and \FinSat{($\cal{L}$)} is decidable, or vice versa (see, e.g.,~\citeN{MichOW12} for a family of examples from the elementary modal logics).

\section{Preliminaries}\label{sec:prelim}
We employ standard terminology and notation from model theory
throughout this paper (see, e.g.,~\citeN{ck}). In particular, we refer
to structures using Gothic capital letters, and their domains using
the corresponding Roman capitals.  

We denote by \GFt{} the guarded
two-variable fragment of first-order logic (with equality), without
loss of generality restricting attention to signatures of unary and
binary relation symbols (cf.~\citeN{GKV97}).
Formally,
\GFt{} is the
intersection of \FOt{} (i.e.,~the restriction of first-order logic in which only two variables, $x$ and $y$ are available)
and the \emph{guarded fragment}, \GF{}~\cite{ABN98}.
\GF{} is defined as the least set of formulas such that:
(i) every atomic formula belongs to \GF{};
(ii) \GF{} is closed under logical connectives $\neg, \vee,
\wedge, \rightarrow$;
and (iii) quantifiers are appropriately relativised by atoms.
More specifically, in \GFt{}, condition (iii) is understood as follows:
if $\phi$ is a formula of \GFt{},
$\bing$ is an atomic formula 
containing all the free variables of $\varphi$, and $u$ (either $x$ or $y$) is a free variable in $\bing$, then the formulas
${\forall} {u}(\bing \rightarrow  \phi)$ and
${\exists}  {u}(\bing \wedge \phi )$ belong to \GFt{}.
In this context, the atom $\bing$ is called a {\em guard}.
The equality symbol $=$ is allowed in guards.
We take the liberty of counting as guarded those formulas which can be made guarded by trivial logical manipulations, e.g.~$\forall x\forall y (Ryx \wedge \phi \rightarrow \psi)\in \GFt$, for any binary predicate symbol $R$ in the signature and any $\phi$, $\psi \in \GFt$.

\subsection{Special Guards} \label{sec:specialguards}

In this paper we work with relational signatures containing a distinguished subset of {\em special} binary relation symbols, often denoted by $S, S_1, S_2, \ldots$, required to be interpreted in a {\em special way} (as transitive relations,  transitive and reflexive relations, equivalences or partial orders). 
For a given signature $\sigma$, a $\sigma$-structure is {\em special}, if all the special relation
symbols are interpreted in the required way. A \GFt-formula $\phi$ with special  relation symbols  is (finitely) satisfiable, if there exists a special (finite) model of $\phi$.

The two-variable guarded fragment with {\em special guards}
is the extension of \GFt{} where the {\em special} symbols are allowed to appear only in guards. In particular, formulas of the form $\forall x \forall y (Sxy\rightarrow S'xy)$ are not allowed.
\footnote{For technical reasons we later also allow special guards of the form $Sxy \wedge Syx$, $Sxy \wedge \neg Syx$, or $Syx \wedge \neg Sxy$ (cf.~Discussion after Definition~\ref{def:normal}).}
Since the computational complexity of the (finite) satisfiability problem for \GFt{} with special guards depends on the properties required for the special symbols, we distinguish more specific fragments indicating the properties 
required from the special relations.

\GFtTG{}: all special relations are required to be interpreted as transitive relations; this fragment is usually called \GFt{} with {\em transitive guards}.

\GFtTRG{}: all special relations are required to be pre-orders, i.e., transitive and reflexive, 

\GFtTSG{}: all special relations are required to be partial equivalences, i.e., transitive and symmetric,

\GFtEG{}: all special relations are required to be equivalences,

\GFtPG{}: all special relations are required to be partial orders, i.e,~transitive, reflexive and weakly antisymmetric.

Obviously, reflexivity of a special binary relation $T$ can be expressed using equality in guards:
$\forall x (x=x \rightarrow \exists y (Txy \wedge x=y))$
(note that the atom $Txy$ occurs as a guard). Similarly, a formula 
$\forall x\forall y (Txy \wedge Tyx \rightarrow x=y)$ expressing that $T$ is weakly antisymmetric can be simulated using another (non-special) binary relation $R$ and writing a conjunction $\forall x\forall y (Txy\rightarrow Rxy) \wedge \forall x\forall y (Rxy \rightarrow (\neg Ryx \vee x=y))$. 

Taking the above observations into account we have the following inclusions
$$\GFtEG \subseteq \GFtTSG \quad\mbox{ and }\quad \GFtPG \subseteq \GFtTRG \subseteq \GFtTG.$$

On the other hand a \GFt-formula $\forall x \forall y (Txy \rightarrow Tyx)$ expressing that a binary relation $T$ is symmetric cannot be used to define symmetry when $T$ is a special symbol, because the atom $Tyx$ occurs outside guards. 
However,  there is a simple linear reduction  from \GFtEG{} to \GFtTG{} preserving both finite and unrestricted satisfiability \cite{Kie05} (Lemma~2, p.~315). The reduction can be easily modified to work from \GFtTSG. 
We observe additionally that \GFtTG{} reduces to \GFtTRG{}, and that the same reduction works from \GFtTSG{} to \GFtEG. 

\begin{lemma}\label{l:reduction}
	There is a linear time reduction transforming any \GFtTG{} sentence $\phi$ over a signature $\sigma$ into  a \GFtTRG{}-sentence $\phi'$ over a signature $\sigma'$ such that $\phi$ and $\phi'$ are satisfiable over the same domains  and $\sigma'$ consists of $\sigma$ together with some additional unary and (non-special) binary relation symbols.

\end{lemma}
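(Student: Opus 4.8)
The plan is to take a \GFtTG{}-sentence $\phi$ and, for each special (transitive) symbol $T$, introduce a fresh binary symbol $T^\circ$ which will be interpreted as a pre-order (transitive {\em and} reflexive), while keeping $T$ available only indirectly. The idea is that $T^\circ$ should play the role of ``$T$ or the diagonal'': in any model of $\phi'$ we want $T^\circ xy \leftrightarrow (Txy \vee x = y)$. So the reduction replaces every guarded quantification over a $T$-atom by the corresponding quantification over a $T^\circ$-atom, and then adds conjuncts that recover the original transitive relation $T$ (which need no longer be special in $\sigma'$, indeed it can be dropped altogether if we prefer) from $T^\circ$ and equality.

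Concretely, I would proceed as follows. First, fix the correspondence $T \mapsto T^\circ$ for every special symbol, let $\sigma'$ consist of the non-special symbols of $\sigma$, all the new symbols $T^\circ$ (declared special, required to be pre-orders), plus possibly one auxiliary non-special binary symbol per $T$ if it turns out to be convenient for bookkeeping. Second, define $\phi'$ from $\phi$ by: (a) replacing each guard of the form $Txy$ (resp.\ $Tyx$) by $T^\circ xy \wedge x \neq y$ (resp.\ $T^\circ yx \wedge x\neq y$) --- note $x \neq y$ is expressible in a guard position since the disequality can be folded into the guard by our convention on trivial logical manipulations, or alternatively pushed into the matrix as $(T^\circ xy \rightarrow (x\neq y \rightarrow \dots))$; and (b) conjoining, for each $T$, an axiom forcing reflexivity of $T^\circ$ on the ``active'' domain, i.e.\ $\forall x (x = x \rightarrow \exists y(T^\circ xy \wedge x=y))$, which as noted in the excerpt is legitimately guarded. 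Third, verify the two directions of the equisatisfiability-over-the-same-domains claim: given a special model $\mathfrak{A} \models \phi$ with transitive $T^{\mathfrak A}$, set $(T^\circ)^{\mathfrak A} := T^{\mathfrak A} \cup \{(a,a) : a \in A\}$, which is transitive (a routine case check on whether the ``new'' diagonal pairs are used) and reflexive, and check $\mathfrak A \models \phi'$; conversely, given $\mathfrak B \models \phi'$ with $(T^\circ)^{\mathfrak B}$ a pre-order, set $T^{\mathfrak B} := (T^\circ)^{\mathfrak B} \setminus \{(b,b): b\in B\}$ --- this is again transitive because the only pairs removed are on the diagonal, so any transitivity ``triangle'' $T^\circ bc \wedge T^\circ cd$ with all three of $b,c,d$ distinct already lived in $T$ --- and check $\mathfrak B \models \phi$. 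Since the domain is untouched in both directions, satisfiability is preserved over exactly the same domains, which is the form of the statement we need for transferring finite-model-size bounds later.

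The one genuinely delicate point, and the place I would be most careful, is step (a): the transformation must keep the result inside the guarded fragment, so the disequality $x \neq y$ that I want to attach to the $T^\circ$-guard cannot be allowed to turn the guard into a non-atomic formula. The clean way around this is {\em not} to modify the guard at all but to relativise inside the matrix: replace $\forall y (Txy \rightarrow \psi)$ by $\forall y (T^\circ xy \rightarrow (x = y \vee \psi))$ and $\exists y(Txy \wedge \psi)$ by $\exists y(T^\circ xy \wedge x \neq y \wedge \psi)$, where in the existential case $x\neq y$ sits harmlessly in the matrix. One then checks that with $(T^\circ)^{\mathfrak B} = T^{\mathfrak B} \cup \mathrm{diag}$ this rewriting is faithful: for the universal case, when $x=y$ the implication is vacuously true on the right and we have no obligation, matching the fact that $T$ need not be reflexive; when $x\neq y$, $T^\circ xy$ and $Txy$ agree. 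A second small thing to confirm is that the added reflexivity axiom for $T^\circ$ is consistent with $T^\circ$ being interpreted as a pre-order and does not spuriously force anything about $T$ --- it does not, since it only asserts the diagonal is in $T^\circ$, which is exactly what ``reflexive'' means. Everything else is a syntactic induction on $\phi$ showing the rewriting commutes with Boolean connectives and nested quantification, and the linear bound on $|\phi'|$ is immediate since each guarded subformula is expanded by a constant amount.

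Finally, I note that since the transitive symbols $T$ no longer occur in $\phi'$ at all (every occurrence having been routed through $T^\circ$), they may simply be dropped from $\sigma'$; thus $\sigma'$ is indeed $\sigma$ with the special symbols $T$ replaced by special pre-order symbols $T^\circ$, together with at most the few auxiliary unary/binary non-special symbols allowed by the statement, and no special symbol of $\sigma'$ is required to be anything other than a pre-order, so $\phi' \in \GFtTRG{}$ as required.
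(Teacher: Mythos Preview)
Your reduction has a genuine gap in the backward direction. You claim that $T := T^\circ \setminus \{(b,b): b \in B\}$ is transitive ``because the only pairs removed are on the diagonal, so any transitivity triangle $T^\circ bc \wedge T^\circ cd$ with all three of $b,c,d$ distinct already lived in $T$''. But transitivity is not only about triangles of three distinct elements: if $T^\circ$ contains a symmetric non-diagonal pair, say $T^\circ ab \wedge T^\circ ba$ with $a \neq b$, then your $T$ contains both $(a,b)$ and $(b,a)$, and transitivity forces $(a,a) \in T$ --- which you have just removed. A two-element pre-order with $a$ and $b$ in the same clique already breaks the construction, and \GFtTG{} formulas can and do enforce non-trivial $T$-cliques (cf.~Example~\ref{ex:large-no-equality}), so this is not a corner case you can ignore.

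There is a companion problem in the forward direction: the original $T$ may be reflexive at some points and irreflexive at others, and this information is lost once you pass to $T^\circ = T \cup \mathrm{diag}$. Concretely, your translation of $\exists y(Txy \wedge \psi)$ as $\exists y(T^\circ xy \wedge x \neq y \wedge \psi)$ fails when the only witness in the original model is $y = x$ via $Txx$; and you do not address one-variable guards $Tuu$ at all. The paper's proof handles both issues in one stroke: it keeps the symbols $T_i$ themselves as the (now reflexive) special symbols, introduces fresh unary $U_i$ to record the originally reflexive points, and fresh non-special binary $R_i$ carrying an antisymmetry constraint at $\neg U_i$-points. That constraint is precisely what rules out the bad symmetric pair $T_iab \wedge T_iba$ with $a \neq b$ at a non-$U_i$ point when one reconstructs a model of $\phi$ from a model of $\phi'$, so transitivity survives the removal of the unwanted reflexive pairs.
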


\begin{proof}
	Let $\phi$ be \GFtTG{} formula over $\sigma$ and let $T_1,\ldots,T_k$ be the transitive relation symbols in $\sigma$. 
	We translate $\phi$ to $\phi'$ over a signature $\sigma'$ consisting of $\sigma$ together with fresh binary relation symbols $R_1,\ldots,R_k$ and fresh unary relation symbols
	$U_1,\ldots,U_k$. Let $\chi$ be the conjunction of all formulas of the following form, for every $1\leq i \leq k$,
	\begin{eqnarray*}
& 	\forall xy (T_ixy \rightarrow R_ixy)\\
&	 \forall x (x=x\rightarrow (\neg U_ix \rightarrow \forall y (R_ixy \rightarrow (\neg R_iyx \vee x=y))))
\end{eqnarray*}
	Let $\bar{\phi}$ be the formula  obtained from $\phi$  by replacing, for every $i$ ($1\leq i\leq k$), every atom $T_iuu$ for $u \in \{x, y \}$ by the formula $(T_iuu\wedge U_iu)$ and every binary atom $\bing(x,y)$ containing $T_i$ by the formula $(\bing(x,y) \wedge (x\neq y \vee U_i x))$. 
	The formula $\phi'$ is a conjunction of $\chi\wedge \bar{\phi}$. Note that $\phi'$ is guarded and the transitive relation symbols appear only as guards. 
	
	Now, let $\str{A}\models \phi$. We define $\str{A}'$, a model of $\phi'$, taking $\str{A}$, keeping the interpretation of all non-special symbols, adding reflexive pairs to the interpretation of the transitive relation symbols: 
	$$T^{\str{A}'}=T^{\str{A}}\cup \{(a,a) \mid a\in A\}$$	
	defining $R_i^{\str{A}'}=T_i^{\str{A}'}$ and setting the interpretation of each $U_i$ as the set of elements on which $T_i$ is reflexive in $\str{A}$: $$U_i^{\str{A'}}=\{a\in A \mid \str{A}\models T_iaa\}.$$ 
	It is readily verified that $\str{A}'\models \phi'$ and every $T_i$ is reflexive and transitive in $\str{A}'$. 
	
	Similarly, let $\str{A}'$ be a model of $\phi'$ where every $T_i$ is reflexive and transitive. Define $\str{A}$ as the reduct of $\str{A}'$ to $\sigma$, where additionally the  interpretation of the transitive relation symbols contains a reflexive pair of elements only if they were marked by $U_i$ in $\str{A}'$. Formally, 
	$$T_i^{\str{A}}=T_i^{\str{A}'}\setminus \{(a,a) \mid \str{A}'\models \neg U_ia\}.$$
	We show that  after this operation $T_i$ remains transitive. Suppose $\str{A}\models T_iab\wedge T_ibc \wedge \neg T_iac$ for some $a,b,c\in A$.  Since $T_i$ is transitive in $\str{A}'$ we have  $\str{A}'\models T_iab\wedge T_ibc \wedge T_iac$. Defining the interpretation of $T_i$ in $\str{A}$ we only removed some reflexive pairs from $T_i^{\str{A}'}$, hence we have $a=c$ and $\str{A}'\models \neg U_ia$. 
  Then since $\str{A}'\models \chi$ we have that $\str{A}'\models R_iab\wedge R_iba$ and, hence, $a=b$. But this implies $\str{A}'\models T_ibc \wedge \neg T_ibc$, a contradiction. 
  
	One can verify that indeed $\str{A}\models \phi$. 
	
	The same reduction applies when $\phi\in \GFtTSG$; then $\phi'\in \GFtEG$. 
\end{proof}

\subsection{Examples}\label{sec:examples}

Below we present a few examples demonstrating the expressive power of \GFt{} with equivalence guards or with transitive guards.
To justify our study over finite structures we present satisfiable formulas having only infinite models. We also  illustrate
some difficulties arising when dealing with finite models, in particular we present formulas whose finite models have at least doubly 
exponentially many elements.

\begin{example}[Enforcing infinite models in \GFtEG{} with equality]\label{ex:infiniteEG}
If we restrict the number of equivalence symbols to one, then the
finite satisfiability coincides with satisfiability, since even whole \FOt{} with one equivalence relation
has the finite (exponential) model property \cite{KO12}.
However, in the presence of two equivalence symbols, there are satisfiable \GFtEG{} formulas whose all models
are infinite. Let us recall a simple example of such a formula from \citeN{KO12}. Consider the conjunction $\lambda$ of the
following formulas:
\begin{eqnarray*}
& \exists x \; (Px \wedge Sx) \wedge \forall x \; (Sx \rightarrow \neg \exists y \; (E_2xy \wedge x \not= y  ))\\
& \forall x \; (Px \rightarrow \exists y \; (E_1xy \wedge x \not= y \wedge Qy)) \wedge \forall x \; (Qx \rightarrow \exists y \; (E_2xy \wedge x \not= y \wedge Py))\\
&\forall xy \; (E_1xy \rightarrow ((Px \wedge Py) \rightarrow x=y   )) \wedge \forall xy \; (E_2xy \rightarrow ((Qx \wedge Qy) \rightarrow x=y   ))
\end{eqnarray*}
The formula says that  $S \cap P \not= \emptyset$; the $E_2$-class of any element of $S$ is trivial (a singleton);
every element of $P$ is $E_1$-equivalent to one in $Q$; every element of $Q$ is $E_2$-equivalent to one  in $P$;
each $E_1$-class contains at most one element from $P$; each $E_2$-class contains at most
one element from $Q$.

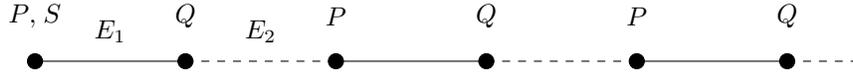
\begin{figure}[htb]
\begin{center}
\begin{tikzpicture}[scale=2]

\foreach \x in {1,2,3,4,5, 6}
   \foreach \y in {0}
      \filldraw[fill=black] (\x, \y) circle (0.05);  
			
\foreach \x in {3,5}{
   \draw (\x,0) -- (\x +1, 0);  			
	\coordinate [label=center:$P$] (A) at ($(\x,0.3)$); 			
 }

\coordinate [label=center: $P{,}  \;S$] (A) at ($(1,0.3)$); 		
\draw (1,0) -- (2,0);

\foreach \x in {2,4}{
   \draw[dashed] (\x,0) -- (\x +1, 0);  			
	\coordinate [label=center:$Q$] (A) at ($(\x,0.3)$); 			
 }

\coordinate [label=center:$Q$] (A) at ($(6,0.3)$); 			
\draw[dashed] (6,0) -- (6.5, 0);

 \coordinate [label=center:$E_1$] (A) at ($(1.5,0.2)$); 			
\coordinate [label=center:$E_2$] (A) at ($(2.5,0.2)$);

\end{tikzpicture}
\caption{An infinite model of $\lambda$.} \label{f:infaxiom}
\end{center}
\end{figure}

It is easy to see that the infinite chain depicted in Figure~\ref{f:infaxiom}  on which $P$ and $Q$ alternate is a model of $\lambda$ and
that every model of $\lambda$ must embed such an infinite chain as a substructure. 
Indeed, the $\forall \exists$-conjuncts always
require fresh elements as witnesses, as an attempt of reusing one of the earlier elements leads to a violation of one of the $\forall \forall$-conjuncts
(or the conjunct stating that the $E_2$-classes of elements in $S$ are singletons). 
\end{example}

\begin{example}[Enforcing large finite models in \GFtEG{} with equality] \label{klasa}
Let us now observe, that finite models for \GFtEG{} have different
properties from infinite models. In the following example we show
how to construct a family of finitely satisfiable  formulas
$\{\lambda_n\}_{n \in \N}$ with only two equivalence relation symbols, such that
every finite model of $\lambda_n$ contains at least one equivalence
class of size at least doubly exponential in $n$, and $|\lambda_n|$
is polynomial in $n$. This is in contrast to the (unrestricted)
satisfiability: as we mentioned,  \citeN{Kie05}  observed that every
satisfiable \GFtEG-formula $\phi$ has a model, in which all
equivalence classes have size at most exponential in $|\phi|$. The
following example is a refinement of the example from 
\citeN{ST05} which used several transitive relations.

Let us assume that $E_1$ and $E_2$ have to be interpreted as
equivalence relations. We construct a finitely satisfiable formula
$\lambda_n$, such that its every model contains some number of full
binary trees of depth $2^n$, whose every leaf requires a root in
its $E_2$-class. Trees will have to be disjoint, so there will
always be $2^{2^n}$ leaves per one root, which will guarantee the
existence of at least one large $E_2$-class.

Except for $E_1$ and $E_2$, we use only a number of unary relation symbols.
Symbols $P_0, \ldots, P_{n-1}$ encode in each element a number from
$\{0, \ldots 2^n-1 \}$ which is the depth of the element in the
tree (the number of the level to which the element belongs). Let
us denote by ${\cal L}_i$ the $i$-th level, i.e., the set of
elements $a$, with the encoded number $i$. Symbol $R$ indicates
roots, symbol $L$ is used to distinguish the left successors from right
successors. The formula $\lambda_n$ consists of conjuncts expressing the following conditions (cf.~Figure~\ref{fig:double-class}):

\begin{enumerate}\itemsep0pt
\item There exists an element satisfying $R$.
\item Every element satisfying $R$ belongs to ${\cal L}_0$.
\item For each element $a$ from ${\cal L}_{2i}$ there exist two elements in ${\cal L}_{2i+1}$ connected to $a$ by $E_1$. One of them
satisfies $L$, the other $\neg L$.
\item For each element $a$ from ${\cal L}_{2i+1}$ (for $2i+1 < 2^n-1$) there exist two elements in ${\cal L}_{2i+2}$ connected to $a$ by $E_2$. One of them satisfies $L$, the other $\neg L$.

\item Every element in ${\cal L}_{2^n - 1}$ is connected by $E_2$ to an element satisfying $R$.

\item If two distinct elements belong to a level ${\cal L}_{2i}$ for some $i$, then they are not connected by
$E_1$.
\item If two distinct elements belong to a level ${\cal L}_{2i+1}$ (for $2i+1<2^n-1$), then they are not connected by $E_2$.
\end{enumerate}
 
 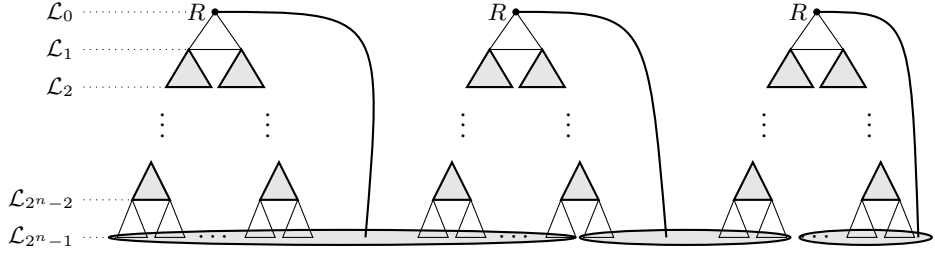
\begin{figure}[hbt]
 	\begin{center}
 		\begin{tikzpicture}[scale=1.0]
 	
 		\draw[style=thick,fill=gray!20] (3.44,0) ellipse(3.1cm and 0.1cm);
 		\draw[style=thick,fill=gray!20] (8.0,0) ellipse(1.4cm and 0.1cm);
 		\draw[style=thick,fill=gray!20] (10.4,0) ellipse(0.88cm and 0.1cm);
 						
 \draw[style=thick] (3.75,0).. controls (4.0,3).. (1.75,3);
 \draw[style=thick] (7.75,0).. controls (7.5,3).. (5.75,3);
 \draw[style=thick] (11.1,0).. controls (11.0,3).. (9.75,3);

 		\draw[dotted,-] (0,3) -- (1.5,3);
 		\draw (0,3) node[left] {\small  ${\cal L}_0$};
 		\draw[dotted,-] (0,2.5) -- (1.5,2.5);
 		\draw (0,2.5) node[left] {\small ${\cal L}_1$};
 		\draw[dotted,-] (0,2.0) -- (1.5,2.);
 		\draw (0,2.0) node[left] {\small ${\cal L}_2$};
 		
 		\draw[dotted,-] (0,0.5) -- (1.0,0.5);
 		\draw (0,0.5) node[left] {\small  ${\cal L}_{2^n-2}$};
 		\draw[dotted,-] (0,0) -- (0.5,0);
 		\draw (0,0) node[left] {\small  ${\cal L}_{2^n-1}$};

 		\foreach \x in {1,5,9}
	 		{\draw (0.4+\x,2.5)  -- (1.1+\x,2.5)  -- (0.75+\x, 3) -- cycle;
	 		\draw (\x+0.5,3) node {\small $R$};
 			\filldraw (\x+0.75, 3) circle (0.04);
		 	\draw (\x+0.75,0) node {$\cdots$};
		 	\draw (\x+0.05,1.6) node {$\vdots$}; 
		 	\draw (\x+1.45,1.6) node {$\vdots$};
		 			 	} 		
 		
		\foreach \x in {1,5,9}
 		{\foreach \y in {0}
 			{\draw[style=thick,fill=gray!20] (0.4+\x+\y,2.5)  -- (0.1+\x+\y,2)  -- (0.7+\x+\y, 2) -- cycle;
 				\draw[style=thick,fill=gray!20] (1.1+\x+\y,2.5)  -- (0.8+\x+\y,2)  -- (1.4+\x+\y, 2) -- cycle;
 			}
 		}

 		\foreach \x in {0.15,1.85,4.15,5.85,8.15,9.85}
 		\filldraw[style=thick,fill=gray!20] (0.5+\x,0.5)  -- (1+\x,0.5)  -- (0.75+\x, 1) -- cycle;
 	
 		\foreach \x in {0.15,1.85,4.15,5.85,8.15,9.85}
 		{\foreach \y in {0}
	  		{\draw (0.5+\x+\y,0.5)  -- (0.3+\x+\y,0)  -- (0.7+\x+\y, 0) -- cycle;
 		  	\draw (1+\x+\y,0.5)  -- (0.8+\x+\y,0)  -- (1.2+\x+\y, 0) -- cycle;
 		  	}
 		}

 		\end{tikzpicture}
 	\end{center}
 	\caption{Doubly exponential $E_2$-class: thin lines correspond to $E_1$ connections, thick lines---to $E_2$; grey regions depict $E_2$-classes.}
 	\label{fig:double-class}
 \end{figure}

It is not difficult to formulate the above sentences in \GFtEG{}, and to see, that each $\str{A} \models \lambda_n$ has a
desired large $E_2$-class. Indeed, suppose $\str{A}\models \lambda_n$. By (1) and (2) there is an element $a\in A$ on level 0 such that $\str{A}\models Ra$. Then, (3) implies that there exist $b, b'\in A$ on level $1$ such that  $\str{A}\models E_1ab \wedge Lb \wedge E_1ab' \wedge \neg Lb'$. Hence, $b\neq b'$ and $\str{A}\models E_1bb'$. Moreover, by (7) $\str{A}\models \neg E_2bb'$.  Now, by (4) there exists  $c,c'\in A$ such that $\str{A}\models E_2bc \wedge Lc \wedge E_2bc' \wedge \neg Lc'$ and there exists $d,d'\in A$ such that $\str{A}\models E_2b'd \wedge Ld \wedge E_2b'd' \wedge \neg Ld'$. Again, we have $c\neq c'$, $d\neq d'$
and $\str{A}\models E_2cc' \wedge E_2dd'$.  
Note that $c\neq d$ and $c'\neq d'$ as $b$ and $b'$ are not in the same $E_2$-class by (7). Repeating the argument for elements at next levels one can show that each root element induces a binary tree with doubly exponentially many leaf nodes (elements on level $2^n-1$).  Moreover, trees induced by different root elements do not overlap. Indeed, suppose $v, v'$ are distinct elements on, say, level $2i$ and $u$ is an element on level $2i+1$ such that $\str{A}\models E_1vu \wedge E_1v'u$. Then, $\str{A} \models E_1vv'$ but this is a contradiction with (6). In case $v,v'$ are on level $2i+1$ we get a contradiction with (7).  

Now, condition (5) ensures that all leaf nodes are connected by $E_2$ to some root element. Hence they are partitioned into at most $k$ $E_2$-classes, where $k$ is the number of root elements. Since the number  of leaf nodes is at least  $k 2^{2^n-1}$, some of the $E_2$-classes have at least $2^{2^n-1}$ elements.   

Observe that to express the last two properties in the example above we need the equality symbol. 
As we will see in Section \ref{sec:eq-no-equality} this is crucial.
\end{example}

\begin{example}[Enforcing infinite models in \GFtTG{} or \GFtTRG{} without equality] \label{ex:infinite-no-equality}
In \GFtTG{} we can easily enforce infinite models even without equality, by the following formula saying that every element in $P$ has a
$T$-successor in $P$ and $T$ is irreflexive.
$$\exists x Px \wedge \forall x (Px \rightarrow \exists y (Txy \wedge Py)) \wedge \neg \exists x Txx.$$
The same task is only slightly harder in \GFtTRG{}. We construct $\lambda^*$ as the conjunction of the following formulas, using an auxiliary
unary symbol $U$ and a binary, non-special symbol $R$:
\begin{eqnarray*}
&\exists x (Px \wedge Ux) \wedge \forall x (Px \rightarrow \exists  y (Txy \wedge Py \wedge (Ux \leftrightarrow \neg Uy)))\\
& \forall xy (Txy \rightarrow Rxy \vee (Ux \leftrightarrow Uy))\\
&\forall xy (Rxy \rightarrow \neg Ryx)
\end{eqnarray*}
This formula is satisfied in the model whose universe consists of natural numbers, $P$ is true everywhere, $U$ is true precisely at even elements,
$T$ is the \emph{less-than-or-equal} relation and $R$ is the \emph{strictly-less} relation. 

Now, let $\str{A}\models \lambda^*$. 
The first conjunct says that $P\cap U \neq \emptyset$ and that each element  in $P$ has another element in $P$ connected by $T$ such that only one of them is in $U$. So, we have at least two elements $a_0, a_1\in A$ such that $\str{A}\models Ua_0\wedge \neg Ua_1 \wedge Ta_0a_1$.  The second conjunct ensures $\str{A}\models Ra_0a_1$. 
The last two conjuncts allow for a symmetric $T$-connection between two elements only when they do not differ with respect to $U$. Hence, $\str{A}\models \neg Ta_1a_0$ and $a_1$ has a new witness $a_2$ for the first conjunct. It follows that $\str{A}\models Ua_2 \wedge Ta_1a_2 \wedge \neg Ta_2a_1$. Repeating the argument one can see that no element from the $T$-chain  $a_0, a_1, \ldots, a_k$ constructed as above can be used as a witness of the first conjunct  for the element $a_k$. Hence, all models of $\lambda^*$ must be infinite. 
\end{example}

\begin{example}[Enforcing large finite models in \GFtTG{} or in \GFtTRG{}, without equality]\label{ex:large-no-equality}
We now explain how to construct a family $\{ \lambda_n^* \}_{n \in \N}$ of finitely satisfiable  \GFtTG{} formulas without equality
employing one transitive symbol $T$ such that every model of $\lambda^*_n$ contains $2^{2^n}$ $T$-cliques (for a formal definition
of a $T$-clique see the beginning of Section \ref{sec:types-and-counting}) 
forming an alternating pattern depicted in Figure~\ref{f:altchain}.  
Since our intended models are reflexive, we can also treat $\lambda_n^*$ 
as $\GFtTRG{}$ formulas.

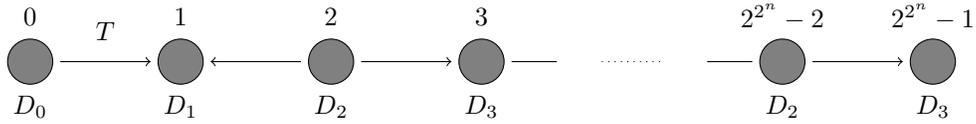
\begin{figure}[htb]
\begin{center}
\begin{tikzpicture}[scale=2]

\foreach \x in {1,2,3,4, 6,7}
   \foreach \y in {0}{
      \filldraw[fill=gray] (\x, \y) circle (0.15);  
				}
			
\draw[->] (1.2,0) -- (1.8,0);
\draw[<-] (2.2,0) -- (2.8,0);
\draw[->] (3.2,0) -- (3.8,0);
\draw[->] (6.2,0) -- (6.8,0);
\draw[-] (4.2,0) -- (4.5,0);
\draw[-] (5.5,0) -- (5.8,0);
\draw[dotted] (4.8,0) -- (5.2,0);
			
\foreach \x in {0,1,2,3}{
     			\coordinate [label=center:$\x$] (A) at ($(\x+1,0.3)$); 			
					\coordinate [label=center:$D_{\x}$] (A) at ($(\x+1,-0.3)$); 			
					}
			
\coordinate [label=center:$2^{2^n}-2$] (A) at ($(6,0.3)$); 		
\coordinate [label=center:$2^{2^n}-1$] (A) at ($(7,0.3)$); 		

\coordinate [label=center:$D_{2}$] (A) at ($(6,-0.3)$); 		
\coordinate [label=center:$D_3$] (A) at ($(7,-0.3)$); 		

\coordinate [label=center:$T$] (A) at ($(1.5,0.2)$);

\end{tikzpicture}
\caption{An alternating chain of $T$-cliques} \label{f:altchain}
\end{center}
\end{figure}

We partition all elements into four sets distinguished by unary predicates $D_0, \ldots, D_3$.
This way each clique is able to distinguish its successor clique from its predecessor clique.
We employ also unary predicates $P_0, \ldots, P_{n-1}$ and think that for every element they 
encode its \emph{local coordinate} in the range $[0, 2^n-1]$. In a standard fashion we can write
the following quantifier-free formulas:
\begin{itemize}
\item $succ(x,y)$, saying that the local coordinate of $y$ is greater by $1$ (addition modulo $2^n$)
than the local coordinate of $x$, 
\item $less(x,y)$, saying that the local coordinate of $x$ is smaller 
than the local coordinate of $y$, and 
\item $equal(x,y)$, saying that the local coordinates of $x$ and $y$ are equal.
\end{itemize}

We enforce the existence of the cliques by the following formulas:
\begin{eqnarray*}
&\forall x (D_l x \rightarrow \exists y (Tyx \wedge D_ly \wedge succ(y,x))) \qquad \mbox{(for even $l$)}\\
&\forall x (D_l x \rightarrow \exists y (Txy \wedge D_ly \wedge succ(x,y))) \qquad \mbox{(for odd $l$).} 
\end{eqnarray*}
They guarantee 
that in any model, starting from any element in $D_l$ we can
find in $D_l$ a $T$-chain of elements with all possible local coordinates. As we work with finite models such a chain must eventually form 
a loop on which all possible local coordinates appear. 

We endow each $T$-clique so obtained with a \emph{global coordinate} in the range $[0, 2^{2^n}-1]$ by regarding
its elements as indices of binary digits encoded by an additional unary predicate $B$. 
We assume that every element with local coordinate $i$ encodes the $i$-th bit of such global coordinate.
The
consistency of this encoding can be enforced in a natural way, by writing for all $l$:
\begin{eqnarray*}
&\forall xy (Txy \rightarrow (\bigwedge_l (D_lx \leftrightarrow D_ly) \wedge \bigwedge_i( P_ix \leftrightarrow P_iy)) \rightarrow (Bx \leftrightarrow By)).
\end{eqnarray*}
 
We further say that there exists a clique in $D_0$ with global coordinate $0$: 
\begin{eqnarray*}
\exists x (D_0x \wedge \forall y (Tyx \rightarrow (D_0y \rightarrow \neg By))),
\end{eqnarray*}
and that each element with local coordinate $0$:
in a clique of type $D_l$ with global coordinate $k < 2^{2^n}$ is joined by a $T$-edge,
oriented right if $k$ is even and oriented left if $k$ is odd, to an element with local coordinate $0$ in $D_{l+1 (\text{mod } 4)}$.

This alternating pattern of $T$-connections provides guards precisely for the pairs of elements 
belonging either to the same clique or to two consecutive cliques. This allows us now to  complete $\lambda_n^*$ by saying that each clique has global 
coordinate greater by one than its predecessor clique.
This can be done as in  the proof of Theorem~7.1 from \citeN{KMP-HT14}
where a similar counting to $2^{2^n}$ is organized in \GFt{} with two equivalence relations (not restricting the equivalence symbols
to guard positions) using non-trivial intersections of equivalence classes. For the sake of completeness we give some details below.

We take the predicate $B^{\sss 1}$ to mark in each clique the least significant position satisfying $B$, and
we take  $B^{\sss 0}$ to mark the least significant positions not satisfying $B$. To this end we write for all $l$ the following \GFtTG{} formulas:
\begin{eqnarray*}
& \forall x \big(B^{\sss 1}x \leftrightarrow (Bx \wedge \forall y (Txy  \rightarrow (D_lx \leftrightarrow D_ly \wedge less(y,x) \rightarrow \neg By))\big), \\
& \forall x \big(B^{\sss 0}x \leftrightarrow (\neg Bx \wedge \forall y (Txy  \rightarrow (D_lx \leftrightarrow D_ly \wedge less(y,x) \rightarrow  By))\big), 
\end{eqnarray*}
We further use $B^{\sss c}$ to mark positions with local coordinates greater than the local coordinate of the elements marked $B^{\sss 0}$:
\begin{eqnarray*}
&\forall xy (Txy \rightarrow( D_lx \wedge D_ly \wedge B^{\sss 0}y \wedge less(y,x) \rightarrow B^{\sss c}x)).
\end{eqnarray*}

Finally we say that for two consecutive cliques, positions marked, respectively, with $B^{\sss 0}$ and $B^{\sss 1}$ have 
the same local coordinates,  and that the values of $B$ from positions marked with $B^{\sss c}$ in the first clique are copied to
the corresponding positions of the second clique. We here show formulas for the case when the first clique is in $D_0$:
\begin{eqnarray*}
&\forall xy (Txy \rightarrow (D_0x \wedge D_1y \wedge B^{\sss 0}x \wedge equal (x,y) \rightarrow B^{\sss 1}y)), \\
& \forall xy (Txy \rightarrow (D_0x \wedge D_1y \wedge equal(x,y) \wedge B^{\sss c}x \rightarrow (Bx \leftrightarrow By))
\end{eqnarray*}
Analogous formulas have to be written for the cases when the first clique is in $D_1, D_2$ and $D_3$, remembering that for the cases of $D_1$ and $D_3$
the guard 
$Tyx$ instead of $Txy$ should be used.

It is readily verified that the structure illustrated in Figure~\ref{f:altchain} is indeed a model of the outlined formula $\lambda_n^*$, and
moreover it must be a substructure  
of any model of $\lambda_n^*$. 

What is interesting, $\lambda_n^*$ has infinite models with only singleton cliques. To force infinite models of our formula
to contain cliques of at least exponential size we can use equality, and say that there is no pair of nodes in the same $D_l$ having
the same local coordinate and joined by $T$.

Returning to finite structures we finally observe 
that each finite model of $\lambda^*_n$ has doubly-exponentially many $T$-cliques 
that are distinguished by the sets of atomic $1$-types realized by its elements (for a formal definition of a $1$-type
see next section).
This contrasts with the case of equivalence relations: soon we will see that for any finitely satisfiable
\GFtEG{} formula one can find a finite model 
for which the collection containing for every equivalence
class the set of $1$-types realized in this class is bounded exponentially.
This is possible  even though, as we saw in Example \ref{klasa}, we must take into consideration classes of at least doubly exponential size. 
\end{example}

\subsection{Normal forms} \label{sec:normalforms}
In this paper, as in many other studies concerning two-variable logics, it is useful to consider `Scott-type' normal forms, allowing us to restrict the nesting of quantifiers to depth two. 
We will adapt the normal form introduced for \GFtTG{} in \citeN{ST04}.  There, a $\GFtTG$-formula is in normal form 
 if it is a conjunction of formulas of the following
	form: 	
	\begin{align*}
	\ce \qquad               &  \exists x (\ung(x) \wedge \psi(x))\\
	\cff    \qquad              & \forall x \forall y (\bing(x,y) \rightarrow \psi(x,y))\\
	\cfe \qquad  &\forall x (\ung(x) \rightarrow \exists y (\bing(x,y) \wedge \psi(x,y)))
	\end{align*}
	where  $\ung$ and $\bing$ are guards  and  $\psi$ is a quantifier-free  formula
	containing no special
	relation symbols. 
	We adopt the following convention for writing guard formulas:  $\ung(x)$ denotes a guard with exactly one variable $x$, $\bing(x,y)$ denotes a guard with both variables  $x,y$. Otherwise, we assume that $x$ and $y$ are allowed (but not required) to appear in $\psi(x,y)$.

Below we recall Lemma~2 from \citeN{ST04} that justifies the above normal form. We remark that the proof applies without changes to cases where the special relations symbols are required to satisfy other axioms, as considered in this paper. 

\begin{lemma}\label{lem:normalform}
	Let $\varphi$ be a \GFt{}-sentence with special guards over a signature $\sigma$. We can compute, in exponential time, a disjunction $\Psi=\bigvee_{i\in I} \psi_i$ of normal form sentences over a signature $\sigma'$ extending $\sigma$ by some additional unary symbols such that:
	(i) $\models \Psi \rightarrow \phi$; (ii) every model of $\phi$ can be expanded to a model of $\Psi$ by appropriately interpreting the additional unary symbols; (iii) $\sizeOf{\psi_i}=O(\sizeOf{\varphi} \log \sizeOf{\varphi})$ \textup{(}$i\in I$\textup{)}. 
\end{lemma}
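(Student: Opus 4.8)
The plan is to follow the standard Scott-style renaming argument, adapted to the guarded two-variable setting, paying attention to two issues specific to our situation: (a) the normal form must only allow special relation symbols in guard positions, and (b) the matrices $\psi$ must be kept free of special symbols. First I would put $\varphi$ into negation normal form and then repeatedly replace innermost quantified subformulas by fresh unary predicates. Concretely, for each subformula of the shape $\exists u\,(p \wedge \chi)$ or $\forall u\,(p \rightarrow \chi)$ occurring in $\varphi$, where $\chi$ is already quantifier-free, introduce a fresh unary symbol $P_\chi(w)$ whose only free variable $w$ is the variable left free by the quantifier, replace the subformula by the appropriate atom ($P_\chi$ applied to that free variable, or a Boolean combination thereof), and add conjuncts asserting the equivalence of $P_\chi$ with the replaced subformula. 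Because every quantifier in \GFt{} is relativised by a guard $p$, each such equivalence axiom naturally splits into one \cfe-type conjunct and one \cff-type conjunct (for the two directions of the biconditional), and the guard $p$ of the original subformula serves as the guard of the new conjuncts. Iterating bottom-up, after finitely many steps all conjuncts have quantifier nesting at most two, which is exactly the required form.

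Next I would handle the bookkeeping that produces the disjunction $\Psi = \bigvee_{i \in I} \psi_i$ rather than a single formula. The point is that when a guard atom $q(x,y)$ uses a special relation symbol $S$, the matrix of the resulting \cff-conjunct is not allowed to mention $S$ (or any other special symbol) again; but the naive renaming may leave special atoms inside $\chi$. To remove them, I would observe that any special atom $Suv$ appearing inside a matrix $\psi(x,y)$ can be guessed: introduce a fresh binary \emph{non-special} symbol $R_S$, add the conjunct $\forall x\forall y\,(Sxy \rightarrow R_S xy)$ (legal, since $S$ is in guard position there), replace occurrences of $Sxy$ in matrices by $R_S xy$, and for occurrences of $Syx$ proceed symmetrically with another fresh symbol — unless $Syx$ appears under a guard that already forces the orientation, in which case no new symbol is needed. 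The nondeterministic choices of which special atoms hold where, together with the standard case distinction on the atomic type of $\{x,y\}$, are what is packaged into the finite index set $I$; each $\psi_i$ is a normal-form sentence over the extended signature $\sigma'$, and $\sigma'$ differs from $\sigma$ only by the added unary symbols (the $R_S$'s can be absorbed or, if one prefers to stay literally within "additional unary symbols", encoded away using the fact that in the matrices they only ever occur applied to the guarded pair; alternatively one simply notes, as the statement does, that $\sigma'$ consists of $\sigma$ plus additional unary symbols — the reader should check that the construction can be arranged so that no new \emph{binary} symbol is strictly necessary, which holds because the orientation of a guard is determined by which of $Sxy$, $Syx$ guards the conjunct).

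The verification of properties (i)--(iv) is then routine. For (i), each $\psi_i$ is a conjunction that includes all the defining equivalences, so $\models \psi_i \rightarrow \varphi$ and hence $\models \Psi \rightarrow \varphi$. For (ii), given $\str{A} \models \varphi$, expand $\str{A}$ by interpreting every fresh unary symbol $P_\chi$ exactly as the extension of the formula it abbreviates (and each auxiliary binary symbol, if used, as the corresponding special relation); one then reads off which disjunct $\psi_i$ is satisfied from the atomic diagram of pairs in $\str{A}$, so the expansion is a model of some $\psi_i$, hence of $\Psi$. For (iii), the bottom-up renaming introduces $O(\sizeOf{\varphi})$ fresh symbols, each of length $O(\log\sizeOf{\varphi})$ once they are written in binary, and each conjunct of a $\psi_i$ mentions only boundedly many of them, giving $\sizeOf{\psi_i} = O(\sizeOf{\varphi}\log\sizeOf{\varphi})$; the whole transformation runs in exponential time only because $|I|$ may be exponential, while each individual $\psi_i$ is small. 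Property (iv) is immediate from the construction. The main obstacle, and the only place requiring genuine care rather than mechanical rewriting, is the second paragraph: ensuring that after eliminating special symbols from the matrices one still has a \emph{sound} normal form, i.e. that the guessed non-special witnesses for special atoms are forced to agree with the true special relations wherever the matrices actually look at them, and that this can be done without introducing a binary symbol in a non-guard position — which is why the disjunction over atomic types of the guarded pair is essential and why the bound is only claimed per-disjunct.
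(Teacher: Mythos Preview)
The paper does not actually prove this lemma; it recalls it from \cite{ST05} (their Lemma~2) and remarks that the argument carries over unchanged to the other kinds of special relations considered here.

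Your first paragraph captures the right idea and is essentially the cited argument: bottom-up replacement of quantified one-free-variable subformulas by fresh unary predicates, with the original guard atom surviving as the guard of the resulting depth-two conjunct.

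Your second paragraph, however, is solving a problem that does not arise. By hypothesis $\varphi$ is a \GFt{}-sentence \emph{with special guards}, which by definition means that every occurrence of a special symbol in $\varphi$ is already a guard atom. In the bottom-up renaming, the matrices $\chi$ of the innermost quantified subformulas are quantifier-free non-guard fragments of $\varphi$, and hence contain no special atoms to begin with. After renaming, the matrices of all depth-two conjuncts are therefore automatically free of special symbols; there is nothing to eliminate, no need for the auxiliary binary symbols $R_S$, and no nondeterministic guessing of ``which special atoms hold where''. This also dissolves your worry about violating condition (iv): only unary symbols are ever introduced.

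As a consequence your account of the index set $I$ is off. The disjunction $\bigvee_{i \in I}\psi_i$ does not come from guessing special-atom configurations; it arises from converting the resulting top-level Boolean combination of depth-$\leq 2$ sentences into disjunctive normal form, with each disjunct then massaged into the required conjunctive shape (possibly introducing a few further unary predicates to handle negated conjuncts). The exponential bound on $|I|$ and the $O(|\varphi|\log|\varphi|)$ bound on each $|\psi_i|$ follow routinely from this.
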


Since our upper bounds are at least \NExpTime{} this lemma allows us to concentrate on normal form formulas.
Indeed, when necessary, we can compute $\Psi$ and non-deterministically guess one of its disjuncts. Also since $\phi$ and $\Psi$ are satisfiable over the same domains,  the same remark applies to bounds on the size of minimal finite models, since in this paper these bounds are at least exponential in the length of the formula.

Below we adapt the above normal form to fit our future purposes.
\begin{definition}\label{def:normal} 
We say that a \GFt-formula $\phi$ with  special guards over a signature $\sigma$ with special relation symbols $S_1, S_2, \ldots S_k $  is {\em in
normal form} if it is a conjunction of formulas of the following
form: 
\begin{align*}
\ce \qquad               &  \exists x (\ung(x) \wedge \psi(x))\\
\cffs \qquad      &\forall x \forall y (\eta(x,y) \rightarrow \psi(x,y))\\
\cfs  \qquad                          & \forall x (S_ixx \rightarrow \psi(x))\\
\cff    \qquad              & \forall x \forall y (\bing(x,y) \rightarrow \psi(x,y))\\
\cf     \qquad             & \forall x (\ung(x) \rightarrow \psi(x)) \\
\cfes \qquad  &\forall x (\ung(x) \rightarrow \exists y (\eta(x,y) \wedge \psi(x,y)))\\
\cfe \qquad  &\forall x (\ung(x) \rightarrow \exists y (\bing(x,y) \wedge \psi(x,y)))
\end{align*}
where  $\ung(x)$ and $\bing(x,y)$ are guards not using special symbols, $\eta(x,y)$ is one of 
the conjunctions $S_ixy\wedge S_iyx$, $S_ixy\wedge \neg S_iyx$ or $S_iyx\wedge \neg S_ixy$, and  
$\psi$ is a quantifier-free formula not containing special relation symbols. 

\end{definition}
When equality is available in the signature, we assume that conjuncts of types $\cfs$ and $\cf$ in the above definition are rewritten as conjuncts of type $\cffs$ and $\cff$. The subscript in $\cfs, \cffs$ and $\cfes$ indicates the conjuncts that essentially use special relation symbols. 

One can argue that the normal form formula is not in \GFt{} with special guards, as the subformulas $\eta(x,y)$ in $\cfes$-conjuncts  admit special symbols in positions that are not strictly guards.  
However, having $\eta(x,y)$ in the above defined way allows us to present our constructions in a more structured and transparent way. Moreover, this is not a restriction: one can  imagine that 
in the process of transforming a given \GFt{}-formula $\phi$ with special guards to its normal form, when $Sxy$ is a guard of some quantifier $Q$ we first replace $Sxy$ in $\phi$ by an equivalent formula $Sxy\wedge  (Syx \vee \neg Syx)$, and appropriately rearrange the resulting  formula using propositional tautologies and distribution laws for quantifiers. 
This process gives a \GFt{}-formula where the special relation symbols are restricted to guard positions of the forms $Sxy\wedge Syx$, $Sxy \wedge \neg Syx$ or $Syx \wedge \neg Sxy$. Naturally, Lemma \ref{lem:normalform} remains true for our adaptation of the normal form from \citeN{ST04}.

We note that the upper bounds in the paper are obtained when we allow this slightly extended syntax but for the corresponding lower bounds atomic guards suffice. 
Hence, by courtesy in the sequel when we write \GFt{} with special guards we refer to the logic with the slightly extended syntax, and we refer to the conjunctions denoted by $\eta(x,y)$ as guards.

For a normal form formula $\phi$ as in Definition~\ref{def:normal} we group together various conjuncts of $\phi$ and write: 
\begin{eqnarray*}
	\phiuniv&:=& \bigwedge\{\psi \mid ~ \psi \mbox{ is a conjunct of $\phi$  of type
		$\cffs$, $\cfs$, $\cff$, or $\cf$}\}\\
	\phisyms{S_i} &:=& \phiuniv \wedge \bigwedge\{\psi \mid ~ \psi \mbox{ is a conjunct of type $\cfes$ where }  \eta(x,y) = S_ixy\wedge S_iyx\}\\ 
		\phisym&:=&  \bigwedge_i   \phisyms{S_i} \wedge \bigwedge\{\psi \mid ~ \psi \mbox{ is a conjunct of $\phi$ of type $\ce$}\}\\
		\phieq &:=& \phisym \wedge  \bigwedge\{\psi \mid ~ \psi \mbox{ is a  conjunct of $\phi$ of type $\cfe$}\}.\\
\end{eqnarray*}
A special guard $\eta(x,y)$ is called {\em symmetric} when $\eta(x,y)=S_ixy \wedge S_iyx$,  otherwise, 
when $\eta(x,y)=S_ixy \wedge \neg S_iyx$, or $S_iyx \wedge \neg S_ixy$, it is called \emph{non-symmetric}. 
Using the above notation we can write equivalently: 
\begin{eqnarray*}
\phi & := & \phieq \wedge \bigwedge\{\psi \mid ~ \psi \mbox{ is a  conjunct of $\phi$ of type $\cfes$ with a non-symmetric guard} \}
\end{eqnarray*}
Note also that if $\phi$ is a \GFtEG{} normal form formula then  $\phi=\phieq$.

 To get some intuition behind $\phisyms{S_i}$ consider first the case where $S_i$ is an equivalence relation. In this case our intention is to gather 
in $\phisyms{S_i}$ those conjuncts of $\phi$
which describe what must be satisfied in \emph{every} equivalence class of $S_i$. And indeed, conjuncts of type $\ce$ are excluded from $\phisyms{S_i}$ since they need to be satisfied only
in \emph{some} $S_i$-classes; conjuncts of type $\cfes$ with guards $S_j$, $i \not=j$, are excluded since as mentioned above
we will always be able to build ramified models, in which every pair of distinct elements is connected by at most one special relation;
finally, conjuncts of type $\forall \exists$ are excluded since for every element we will always be able to provide witnesses for such conjuncts outside its $S_i$-equivalence class. If $S_i$ is not an equivalence but just an arbitrary transitive relation then $\phisyms{S_i}$ describes what must happen in every $S_i$-clique.  

Our notation also outlines a general scheme of model constructions in this paper: in order to build a model for $\phi$ we start by building a model for $\phisym$, then use it to construct a model for $\phieq$, and then proceed to build the final model for $\phi$.

 In next sections we often denote the special relation symbols required to be interpreted as equivalence
 relations using the letter $E$ (possibly with some decorations), and those required to be interpreted as transitive relations using the letter $T$.

\section{Types and Counting types}\label{sec:types-and-counting} 

In this Section we introduce the main technical notions. We start with the standard notion of types.

An {\em atomic} $1$-{\em type} over a signature $\sigma$ is a
maximal consistent set of atomic or negated atomic formulas over
$\sigma$ in  variable $x$. An {\em atomic} $2$-type
is a maximal consistent set of atomic or negated atomic formulas in  variables $x,y$, containing $x \not=y$.  
We often identify a type with the conjunction of its elements. Each $l$-type is a complete description  of a unique  (up to isomorphism) $\sigma$-structure with $l$-elements. 
Let $\str{A}$ be a $\sigma$-structure with universe $A$, and let
$a,b\in A$, $a\neq b$. We denote by $\type{\str{A}}{a}$ the unique atomic
1-type \emph{realized} in $\str{A}$ by the element $a$, i.e., the $1$-type $\alpha(x)$ such that $\str{A} \models \alpha(a)$; similarly by
$\type{\str{A}}{a,b}$ we denote the unique atomic 2-type realized in $\str{A}$ by
the pair $(a,b)$, i.e., the $2$-type $\beta(x,y)$ such that $\str{A} \models \beta(a,b)$.       

Let $\str{A}$ be a $\sigma$-structure and let $P$ be a binary relation symbol in  $\sigma$. A $P$-{\em clique} in $\str{A}$ is a maximal and non-empty set $C\subseteq A$ such that for all $a,b \in C$ we have $\str{A} \models Pab \wedge Pba$. If $P^{\str{A}}$ is an equivalence relation in $\str{A}$ then the notion of a $P$-clique coincides with the notion of an equivalence class. 
If  $C$ is a $P$-clique in $\str{A}$ then  the substructure $\str{A}\restr C$ is called {\em $P$-class}. In particular, if $A$ is a $P$-clique in $\str{A}$ then the whole structure $\str{A}$ is a $P$-class.

For  a given \GFt{}-formula $\phi$ we denote by  $\sigma_\phi$ the signature consisting precisely of the relation symbols used in $\phi$, with $S_1,\ldots,S_k$ denoting the special symbols in $\sigma_\phi$. We denote by $|\phi|$ the length of $\phi$ measured in a standard way, as the number of symbols used to write $\phi$; obviously $|\phi|\geq |\sigma_\phi |\geq k$.  We denote by  $\AAA_\phi$  the set of all atomic
1-types, and by $\BBB_\phi$---the set of all atomic 2-types over $\sigma_\phi$. 
Observe that $|\AAA_\phi|$ and $|\BBB_\phi|$ are at most exponential in $|\phi|$.        

We say that a special structure $\str{A}$  is {\em ramified}, if for every distinct special relation symbols $S, S'$ there are no two distinct elements $a,b \in A$ such that $\str{A} \models (Sab \wedge S'ab ) \vee (Sab \wedge S'ba)$.  \citeN{ST04} show that every satisfiable \GFtTG{}-formula has a ramified model. 
The constructions of our paper will imply that also every finitely satisfiable   \GFt-formula with special guards has a ramified model; in particular distinct equivalence classes will have at most singular intersections.

Below we introduce the crucial notions for this paper. They will be used in contexts in which a normal form formula $\phi$ is fixed, the letter $S$ will always denote a special relation symbol from $\sigma_{\phi}$.

\begin{definition}\label{d:counting-type} 
	A \emph{counting type} (over $\sigma_\phi$) is a function $\theta: \AAA_\phi \rightarrow \N$ not everywhere zero. 
	If for $n>0$ a counting type has all its values in  $\{0,1, \ldots, n\}$ then it is called an $n$-\emph{counting type}.
	An atomic
	type $\alpha \in \AAA_\phi$ {\em appears} in a counting type $\theta$ if $\theta(\alpha)>0$.
	
	\begin{enumerate}[(i)]
		\item Let $\theta$ be a counting type and $n>0$. The {\em cutting of} $\theta$ to $n$ (or the $n$-\emph{cutting} of $\theta$) 
		is the 
		unique $n$-counting type $\theta'$ such that for every $\alpha\in\AAA_\phi$, $\theta'(\alpha)=\min(\theta(\alpha),n)$. 
		\item
		We say that a counting type $\theta'$ {\em extends} a counting type $\theta$ if for every $\alpha \in \AAA_\phi$, 
		$\theta'(\alpha) \ge \theta(\alpha)$, if $\theta(\alpha)>0$,  and $\theta'(\alpha)=\theta(\alpha)=0$, otherwise. We say that $\theta'$ {\em safely extends} $\theta$ if $\theta'$ extends $\theta$ and additionally
		$\theta'(\alpha)=\theta(\alpha)$ if $\theta(\alpha)=1$. 
		\item We say that a finite class $\str{C}$ {\em realizes (has)} 
		 a counting type $\theta$ if for every $\alpha \in \AAA_\phi$, $\theta(\alpha)$ is the number of realizations of $\alpha$ in $\str{C}$. Occasionally we use the definition for infinite structures where we allow  $\theta(\alpha)=\infty$.      
	\end{enumerate}
	Note that a given class $\str{C}$ realizes precisely one counting type, denoted $\ctype{\str{C}}$. 
\end{definition}

\begin{definition}\label{d:admissible}	
	Let $\phi$ be a normal form formula as in Definition \ref{def:normal}.  We say that  
	$\alpha\in \AAA_\phi$  ($\beta\in \BBB_\phi$) is {\em $\phi_\forall$-admissible}, 
	if it is realized in some special one-element (two-element) $\sigma_\phi$-structure satisfying $\phiuniv$. 
We say that a counting type $\theta$ is {\em $\phisyms{S}$-admissible}, 
	if there exists an $S$-class $\str{C}$ whose
	counting type is $\theta$ and $\str{C} \models
	\phisyms{S}$. 
	Such an $S$-class $\str{C}$ is sometimes also called \emph{$\phisyms{S}$-admissible}.       
\end{definition}

Note that if $\str{A} \models \phi$ then every 1-type and every 2-type realized in $\str{A}$ is $\phiuniv$-admissible; moreover, when $S$ is transitive in $\str{A}$ 
then every $S$-class $\str{C}$ in $\str{A}$  and its counting type are
$\phisyms{S}$-admissible. 

The following fact justifies the notions introduced above. 

\begin{fact}\label{fact:models-admissible} 
	Let $\phi$ be a normal form \GFtTRG{}-formula 
	and let $\str{A}$ be a special $\sigma_\phi$-structure. Then $\str{A} \models \phisym$ if and only if the following
		conditions hold:
	\begin{enumerate}[(i)]
		\item $\str{A} \models \bigwedge\{\gamma \mid \text{$\gamma$ is a conjunct  of } \phi \text{ of type $\ce$}\}$,
		\item every 1-type and every 2-type realized in $\str{A}$ is $\phi_\forall$-admissible,
		\item for each special symbol $S_i$ in $\sigma$, each $S_i$-class in $\str{A}$ is $\phisyms{S_i}$-admissible. 
	\end{enumerate}
\end{fact}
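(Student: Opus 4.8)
The plan is to prove Fact~\ref{fact:models-admissible} by unfolding the definition of $\phisym$ and checking the two implications separately, exploiting the fact that each of the conjunct types appearing in $\phisym$ is either a sentence of type $\ce$ or a universally quantified sentence whose truth in a structure can be localized either to pairs of elements (types of the form $\cffs$, $\cff$, $\cfs$, $\cf$) or to single $S_i$-classes (types of the form $\cfes$ with a symmetric guard). Recall that, by definition, $\phisym = \bigwedge_i \phisyms{S_i} \wedge \bigwedge\{\psi : \psi \text{ of type }\ce\}$, and $\phisyms{S_i} = \phiuniv \wedge \bigwedge\{\psi : \psi \text{ of type }\cfes \text{ with } \eta(x,y)=S_ixy\wedge S_iyx\}$. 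So $\phisym$ is logically the conjunction of: (a) all $\ce$-conjuncts; (b) $\phiuniv$; (c) for each $i$, all $\cfes$-conjuncts whose guard is the symmetric guard $S_ixy\wedge S_iyx$.

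\medskip

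First I would prove the ``only if'' direction. Assume $\str{A}\models\phisym$. Condition (i) is immediate since the $\ce$-conjuncts are among the conjuncts of $\phisym$. For (ii): since $\str{A}\models\phiuniv$ and $\phiuniv$ is a conjunction of sentences of the form $\forall x(q(x)\to\psi(x))$, $\forall x(S_ixx\to\psi(x))$, $\forall x\forall y(q(x,y)\to\psi(x,y))$ and $\forall x\forall y(\eta(x,y)\to\psi(x,y))$, every $1$-element substructure and every $2$-element substructure of $\str{A}$ (equipped with the induced types) satisfies $\phiuniv$ as well --- the universal conjuncts are preserved under taking substructures because guards are atomic and $\psi$ is quantifier-free. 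Hence every $1$-type and every $2$-type realized in $\str{A}$ is witnessed in such a substructure, so it is $\phiuniv$-admissible. For (iii): fix a special symbol $S_i$ and an $S_i$-class $\str{C}=\str{A}\restr C$. By the preceding argument, every $1$- and $2$-type realized in $\str{C}$ is $\phiuniv$-admissible, and moreover $\str{C}\models\phiuniv$ since $\str{C}$ is a substructure of $\str{A}$ and $\phiuniv$ is universal. It remains to check that $\str{C}$ satisfies every $\cfes$-conjunct of $\phi$ with guard $S_ixy\wedge S_iyx$, i.e., sentences of the form $\forall x(p(x)\to\exists y(S_ixy\wedge S_iyx\wedge\psi(x,y)))$. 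Let $a\in C$ with $\str{C}\models p(a)$; then $\str{A}\models p(a)$, so since $\str{A}\models\phisyms{S_i}$ there is $b\in A$ with $\str{A}\models S_iab\wedge S_iba\wedge\psi(a,b)$. But $S_iab\wedge S_iba$ forces $b$ to lie in the same $S_i$-clique as $a$ (using transitivity and the maximality of cliques), hence $b\in C$, and since $\psi$ is quantifier-free and special-symbol-free, $\str{C}\models\psi(a,b)$. So $\str{C}\models\phisyms{S_i}$, i.e., $\str{C}$ is $\phisyms{S_i}$-admissible, and its counting type is $\phisyms{S_i}$-admissible by Definition~\ref{d:admissible}.

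\medskip

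Conversely, assume (i), (ii), (iii) hold for the special structure $\str{A}$; I would show $\str{A}\models\phisym$. The $\ce$-conjuncts hold by (i). For $\phiuniv$: a universal sentence $\forall x\forall y(q(x,y)\to\psi(x,y))$ (or its one-variable analogue) holds in $\str{A}$ iff it holds in every two-element (resp.\ one-element) substructure; each such substructure realizes some $2$-type (resp.\ $1$-type) which by (ii) is $\phiuniv$-admissible, hence is realized in some special structure satisfying $\phiuniv$, hence that very conjunct holds there and therefore on the corresponding pair (resp.\ element) of $\str{A}$. Thus $\str{A}\models\phiuniv$. For the symmetric $\cfes$-conjuncts: fix $i$ and a conjunct $\forall x(p(x)\to\exists y(S_ixy\wedge S_iyx\wedge\psi(x,y)))$, and let $a\in A$ with $\str{A}\models p(a)$. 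Let $C$ be the $S_i$-clique of $a$ (nonempty since $a\in C$, as $S_i$ is reflexive --- here is where \GFtTRG{}, hence reflexivity, is used to guarantee $a$ lies in an $S_i$-clique) and $\str{C}=\str{A}\restr C$. By (iii), $\str{C}$ is $\phisyms{S_i}$-admissible, so $\str{C}\models\phisyms{S_i}$, and thus there is $b\in C$ with $\str{C}\models S_iab\wedge S_iba\wedge\psi(a,b)$; then $\str{A}\models S_iab\wedge S_iba\wedge\psi(a,b)$ as well, providing the required witness. Hence $\str{A}\models\phisyms{S_i}$ for every $i$, so $\str{A}\models\phisym$.

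\medskip

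The main obstacle I anticipate is being careful about the role of reflexivity and about the clique/class bookkeeping: the statement is about \GFtTRG{}, where special symbols are pre-orders, so every element sits in a (possibly singleton) $S_i$-clique, and this is exactly what is needed to invoke condition (iii) for an arbitrary element $a$ with $p(a)$; for a merely transitive (non-reflexive) $S_i$ an element need not belong to any $S_i$-clique and the argument would need an extra case. A secondary point worth stating explicitly is the equivalence ``$S_iab\wedge S_iba$ iff $a,b$ lie in a common $S_i$-clique'', which follows from transitivity plus the maximality in the definition of $P$-clique given just before Definition~\ref{d:counting-type}; I would include a one-line justification. Everything else is routine preservation-under-substructure reasoning for universal formulas, combined with the observation that quantifier-free, special-symbol-free $\psi$ is absolute between $\str{A}$ and its substructures.
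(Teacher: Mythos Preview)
Your proposal is correct and is precisely the kind of routine unfolding the paper has in mind: the paper's own proof consists of the single word ``Straightforward.'' Your argument carefully checks both directions, correctly identifies that reflexivity of the special relations (the \GFtTRG{} hypothesis) is what guarantees every element lies in some $S_i$-clique, and handles the substructure/preservation bookkeeping for $\phiuniv$ and the symmetric $\cfes$-conjuncts; there is nothing missing and no step that the paper does differently.
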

\begin{proof}
	Straightforward.       
\end{proof}
We remark that the above fact remains true also when we do not require the special symbols to be reflexive, but we will use the fact only under this assumption.

Our intended models constructed in this paper will be {\em ramified}; below we introduce the notions of an $S$-splice for a 2-type and for an $S$-class that give objects that can be realized in such models.       

Let $\beta\in \BBB_\phi$.  We say that $\beta$ is  {\em binary-free}, if it contains negations of all binary atoms with two variables $x$ and $y$. Let $P$ be a binary relation symbol (special or not). By $\tsplice{\beta}{P}$ we denote the {\em $P$-splice} of $\beta$, namely the  unique 2-type obtained from $\beta$ by replacing each atomic formula of the form $S_ixy$ and $S_iyx$, where $1\leq i\leq k$, $S_i$ is special and $S_i\neq P$,   by its negation. 

For an $S$-class $\str{C}$ we denote by $\Csplice{\str{C}}{S}$ the $S$-{\em splice} of $\str{C}$, i.e., the class obtained from $\str{C}$ by replacing every 2-type $\beta$ realized in $\str{C}$ by $\tsplice{\beta}{S}$. We remark that this operation does not  change $1$-types of elements; in particular we do not remove self-loops.

We note  that for formulas where special relation symbols are restricted to guards the above operations on 2-types  and $S$-classes do not change their admissibility as shown in the following fact.
 \begin{fact}\label{fact:admisible-type}
	If  $\beta\in \BBB_\phi$ is $\phi_\forall$-admissible then $\tsplice{\beta}{P}$ is $\phi_\forall$-admissible, where $P$ is any binary relation symbol in $\sigma_{\phi}$.
	     	If $\alpha,\alpha'\in \AAA_\phi$ are $\phi_\forall$-admissible, then the unique binary-free 2-type $\beta'$ containing $\alpha(x)$ and $\alpha'(y)$ is $\phi_\forall$-admissible.   
	Moreover, for every $\phisyms{S_i}$-admissible $S_i$-class $\str{C}$, the $S_i$-splice of  $\str{C}$ is also $\phisyms{S_i}$-admissible.
\end{fact}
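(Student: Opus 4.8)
The plan is to prove all three assertions by one and the same device: given the spliced object, exhibit a concrete small special structure that realises it and still satisfies the relevant universal theory --- $\phiuniv$ for the two assertions about $2$-types, and $\phisyms{S}$ for the assertion about $S$-classes. Everything rests on the defining restriction of \GFt{} with special guards: in a normal-form conjunct of type $\cffs$ the only special atoms sit inside the guard $\eta$, its body $\psi$ contains no special symbol, and likewise the bodies of the type-$\cfes$ conjuncts collected in $\phisyms{S}$ are special-symbol-free. I will also use that the splice operations, on $2$-types and on classes alike, delete only \emph{off-diagonal} occurrences of special relations other than the distinguished one, so they leave $1$-types, self-loops, and all non-special atoms untouched.

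For the first assertion, fix a two-element special $\sigma_\phi$-structure $\str{B}$ with $\str{B}\models\phiuniv$ realising $\beta$, and let $\str{B}'$ be its $P$-splice. First I would verify that $\str{B}'$ is again special: each special relation $S\neq P$ now collapses to a subset of the identity on the two-element domain, hence is transitive, reflexive, symmetric and antisymmetric, so it meets whatever requirement the fragment imposes, while $P$ itself is unchanged; and $\str{B}'$ realises $\tsplice{\beta}{P}$ because the $P$-splice of a $2$-type and the $P$-splice of a structure coincide. Next I would check $\str{B}'\models\phiuniv$. Conjuncts of types $\cf$ and $\cff$ carry over because guard and body contain no special symbol at all, and conjuncts of type $\cfs$ carry over because the splice changes neither self-loops nor $1$-types. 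For a type-$\cffs$ conjunct $\forall x\forall y(\eta\rightarrow\psi)$: if $\eta$ is built from $P$ it survives verbatim; if it is built from some $S\neq P$, then on every off-diagonal pair $\eta$ is now false --- each normalised special guard demands at least one of the deleted off-diagonal atoms --- so the implication holds vacuously there, while on a diagonal pair $\eta$ and the special-symbol-free $\psi$ keep their truth values. Hence $\tsplice{\beta}{P}$ is $\phi_\forall$-admissible.

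For the second assertion I would take the disjoint union $\str{A}_1\uplus\str{A}_2$ of one-element special models of $\phiuniv$ realising $\alpha$ and $\alpha'$: it is special (no edge is added), its two points realise the binary-free $2$-type $\beta'$, and it models $\phiuniv$ because on the cross pair every atomic guard (the $\eta$ of a type-$\cffs$ conjunct, the atom $q(x,y)$ of a type-$\cff$ conjunct) is false --- so those implications are vacuous --- type-$\cf$ and type-$\cfs$ conjuncts concern single elements only, and inside each singleton component everything holds by hypothesis. The third assertion is the first one transcribed to an $S$-class: for a $\phisyms{S}$-admissible $S$-class $\str{C}$ and its $S$-splice $\Csplice{\str{C}}{S}$, since $S$ is untouched all pairs of $\Csplice{\str{C}}{S}$ remain $S$-connected in both directions, so $\Csplice{\str{C}}{S}$ is again an $S$-class, and it stays special as before; $\Csplice{\str{C}}{S}\models\phiuniv$ is checked exactly as in the first assertion; and for a type-$\cfes$ conjunct $\forall x(p(x)\rightarrow\exists y(Sxy\wedge Syx\wedge\psi(x,y)))$ of $\phisyms{S}$, any witness $b$ that $\str{C}$ provides for an element $a$ with $p(a)$ still works, since $Sab$ and $Sba$ are retained and $\psi(a,b)$, being special-symbol-free, depends only on the part of the $2$-type of $(a,b)$ that the splice leaves alone. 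Thus $\Csplice{\str{C}}{S}\models\phisyms{S}$ and is $\phisyms{S}$-admissible.

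There is no genuinely difficult step; the one point I would single out as the main (minor) obstacle is the diagonal bookkeeping inside the type-$\cffs$ conjuncts. A normalised \emph{symmetric} guard $Sxy\wedge Syx$ can still be satisfied by a pair $(a,a)$ after splicing, so one cannot simply argue that every guard over a removed relation becomes unsatisfiable; instead one must invoke that self-loops and $1$-types are never altered, which is precisely what keeps the special-symbol-free bodies $\psi$ true on those diagonal pairs.
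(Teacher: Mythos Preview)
Your proof is correct. The paper states this Fact without proof, treating it as an evident consequence of the restriction that special relation symbols occur only in guards; your argument fills in precisely the details the paper leaves implicit, and does so via the natural device of exhibiting small witnessing structures and checking each normal-form conjunct type. The one imprecision --- writing that a subset of the identity is ``transitive, reflexive, symmetric and antisymmetric'' --- is harmless, since you already noted that self-loops are preserved by the splice, which is what actually secures reflexivity when the fragment demands it.
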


Given a normal form formula $\phi$ we set $M_\phi=3 |\AAA_\phi||\phi|^3$. 
Note that the value of $M_\phi$ is at most exponential in the size of $\phi$.
We establish some crucial properties of counting
types in the following lemma: parts (i) and (ii) say that admissibility transfers from counting types to their (safe) extensions; part (iii) says that 
admissibility of counting types with potentially large values transfers to their $M_\phi$-cuttings. 

\begin{lemma} \label{l:admissibility}
	Let $\phi$ be a normal form \GFt{}-formula with special guards, let $S$ be a special relation symbol in $\sigma_\phi$, and let $\theta$ be a 
	$\phisyms{S}$-admissible counting type.
	\begin{enumerate}[(i)]
		\item If $\phi$ does not use equality and $\theta'$ is a counting type
		extending $\theta$ then $\theta'$ is $\phisyms{S}$-admissible. 
		
		\item If  $\theta'$ is a counting type safely extending $\theta$ then $\theta'$ is $\phisyms{S}$-admissible. 
		\item If $\theta'$ is the cutting of $\theta$ to $M_\phi$ then  $\theta'$ is $\phisyms{S}$-admissible. 
	\end{enumerate}
\end{lemma}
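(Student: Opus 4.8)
The plan is to exploit that the formula $\phisyms{S}$ has a very restricted syntactic shape: it is a conjunction of universal conjuncts of types $\cffs$, $\cfs$, $\cff$, $\cf$ (all gathered in $\phiuniv$), together with conjuncts of type $\cfes$ whose guard is $Sxy\wedge Syx$. Crucially, $\phisyms{S}$ has \emph{no} conjunct of type $\ce$ and no non-symmetric existential demand. So to build an $S$-class realizing a target counting type $\theta'$, I start from an $S$-class $\str{C}\models\phisyms{S}$ realizing $\theta$ (which exists by $\phisyms{S}$-admissibility of $\theta$) and modify its domain by adding and/or deleting realizations of $1$-types, being careful to (a) preserve all $\phiuniv$-conjuncts, which only constrain $1$-types and $2$-types, and (b) re-supply witnesses for the symmetric $\cfes$-demands of any element that loses a witness.

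For part (i), where $\phi$ is equality-free: given $\theta'$ extending $\theta$, I only \emph{add} elements. For every $\alpha\in\AAA_\phi$ with $\theta'(\alpha)>\theta(\alpha)$ I add $\theta'(\alpha)-\theta(\alpha)$ fresh copies of an existing $\alpha$-element $a$ of $\str{C}$, giving each new copy $a'$ the same $1$-type $\alpha$, and for every old or new element $b$ setting $\type{}{a',b}$ to be the $S$-splice, inside the class, of $\type{}{a,b}$ — i.e. $a'$ connects to $b$ by $S$ in both directions exactly when $a$ does, and carries the same unary/other-binary information as $a$ did toward $b$; between two new copies $a',a''$ of possibly different old elements $a_1,a_2$ put the corresponding splice of $\type{}{a_1,a_2}$. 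Since $\phi$ is equality-free, no conjunct can distinguish $a'$ from $a$, so every $\cffs/\cff/\cfs/\cf$-conjunct of $\phiuniv$ that held of the old structure still holds (every $2$-type realized is one that was already realized, hence $\phi_\forall$-admissible), and every symmetric $\cfes$-demand witnessed for $a$ by some $c$ is witnessed for $a'$ by the copy/original of $c$; old elements are unaffected. The result is an $S$-class of counting type $\theta'$ satisfying $\phisyms{S}$.

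For part (ii), with equality allowed, the same duplication is dangerous only for $1$-types $\alpha$ with $\theta(\alpha)=1$, because an equality-using conjunct may assert "there is exactly one $\alpha$-element in the reach of ...". But safe extension forbids increasing such $\theta(\alpha)$, so I duplicate only $1$-types already realized at least twice (or, for those realized once, I must leave them untouched and may only increase the multiplicity of $1$-types whose count is $\ge 2$ or $=0$; a $1$-type with $\theta(\alpha)=0$ stays $0$ by the definition of extends). For a $1$-type with $\theta(\alpha)\ge 2$, adding more copies cannot be detected by any $\cff/\cfs/\cf$-conjunct even with equality, since such a conjunct is a universal statement about pairs/singletons and the set of realized $1$- and $2$-types does not change; and the existence of $\ge 2$ realizations is preserved, so counting-with-equality assertions of the permitted form survive. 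Symmetric $\cfes$-witnesses are re-supplied exactly as in part (i). Hence $\theta'$ is $\phisyms{S}$-admissible.

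For part (iii), the target is the $M_\phi$-cutting $\theta'$ of $\theta$, so now I must \emph{delete} the surplus realizations of $1$-types $\alpha$ with $\theta(\alpha)>M_\phi$, keeping exactly $M_\phi=3|\AAA_\phi||\phi|^3$ of them. Deletion only ever helps the $\cffs/\cff/\cfs/\cf$-conjuncts (fewer pairs to check, and the surviving $2$-types are still $\phi_\forall$-admissible). The one thing to preserve is: every surviving element still has witnesses for each of its symmetric $\cfes$-demands. Each such demand of an element $b$ is of the form $\exists y\,(Sby\wedge Syb\wedge\psi(b,y))$; the number of distinct $\cfes$-conjuncts is at most $|\phi|$, and the "type" of a needed witness is determined by a $1$-type together with the $2$-type of the connecting pair, so a bounded supply of surviving elements of each $1$-type suffices to witness all demands of all elements; the bound $M_\phi$, being $3|\AAA_\phi||\phi|^3$, is comfortably large enough to, for each of the $\le|\AAA_\phi|$ possible $1$-types $\alpha$, keep enough $\alpha$-elements to serve as witnesses for every $\cfes$-demand of every element (and to keep each $\alpha$-element's own demands witnessed). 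I pick a sufficient subset of survivors by a counting argument — for each $1$-type $\alpha$ with $\theta(\alpha)>M_\phi$, choose $M_\phi$ of its realizations including, for each $(\cfes\text{-conjunct},\text{1-type-of-demander})$ pair, at least one witness — and delete the rest; the remaining induced $S$-class still models $\phisyms{S}$ and realizes $\theta'$.

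The main obstacle I expect is part (iii): making the witness-preservation bookkeeping precise, i.e. verifying that $M_\phi=3|\AAA_\phi||\phi|^3$ really is large enough that one can retain, for every $1$-type $\alpha$, a subset of $M_\phi$ many $\alpha$-realizations that simultaneously (1) witnesses every symmetric $\cfes$-demand that any retained element of \emph{any} $1$-type raises and needs to satisfy via an $\alpha$-element, and (2) keeps the retained $\alpha$-elements' own demands witnessed inside the retained set. Parts (i) and (ii) are routine "add harmless clones" arguments; the only subtlety there is the equality caveat, which is exactly what distinguishes (i) from (ii) and is handled by the definition of \emph{safely extends}.
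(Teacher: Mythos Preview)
Your arguments for (i) and (ii) are essentially the paper's: clone a realization $a$ of $\alpha$ and copy its $2$-types to the old elements; for the edge between the clone $a'$ and $a$ itself, in (i) reuse the reflexive information at $a$ (harmless without equality), while in (ii) use $\type{\str{C}}{a,c}$ for a second realization $c$ of $\alpha$, which exists because $\theta(\alpha)\ge 2$.

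Part (iii), however, has a genuine gap. You propose to delete the surplus realizations and take the \emph{induced} substructure, choosing survivors so that for every pair (conjunct $\delta$, demander's $1$-type $\gamma$) at least one suitable $\alpha$-witness is retained. But retaining one $\alpha$-witness that works for \emph{some} $\gamma$-element does nothing for the other retained $\gamma$-elements: in an induced substructure each surviving element needs its \emph{own} original witnesses to survive, and there may be up to $|\AAA_\phi|\cdot M_\phi$ survivors each with up to $|\phi|$ demands---far more than the $M_\phi$ slots you keep per $1$-type. If you instead mean to overwrite $2$-types and redirect witnesses, you have not said so, and you would then have to handle the circularity where redirecting $b$'s witness onto $c'$ overwrites a $2$-type that $c'$ was relying on toward $b$. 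The paper sidesteps both issues by \emph{not} aiming at $\theta'$ directly: it builds a class $\str{C}_1\models\phisyms{S}$ whose counting type $\theta_0$ lies \emph{below} $\theta'$, via a three-tier construction (a set $B_0$ of $h$ representatives per $1$-type, their witnesses $B_1$, those elements' witnesses $B_2$; then redirect only $B_2\to B_0$ edges, so no redirections collide), and finally applies part (ii) to safely extend $\theta_0$ up to $\theta'$. The bound $|\AAA_\phi|(h+h^2+h^3)\le M_\phi$ is exactly what the constant $3|\AAA_\phi||\phi|^3$ is calibrated for. Your plan is missing this ``undershoot, then pad up via (ii)'' manoeuvre.
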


\begin{proof}
	Let $\str{C}$ be an $S$-class of counting type $\theta$ such that $\str{C} \models \phisyms{S}$. 
		The 
	formula $\chi=\phisyms{S} \wedge \forall xy S xy$ can be treated as a
	normal form $\FOt$ formula and we can use techniques known from $\FOt{}$. In particular, to prove (i) and (ii) it suffices to notice that we can extend $\str{C}$ to a new model $\str{C'}$ of $\chi$ by adding a realization of any $\alpha\in\AAA_\phi$ such that, in case (i), $\theta(\alpha)\geq 1$, and in case (ii), $\theta(\alpha)\geq 2$. 
	Part (iii) can be seen as a reformulation of the small model property for \FOt{}.
	We briefly recall the crucial details below. 
	
	Case (i). Let $\theta(\alpha)\geq 1$.  Choose any $a\in C$ realizing $\alpha$.  Extend $\str{C}$ by a new element $a'$ defining $\type{\str{C}'}{a'}=\alpha$ and for every $b\in A$, $b\neq a$ define $\type{\str{C}'}{a',b}=\type{\str{C}}{a,b}$. 
	Complete $\str{C'}$ setting, for every binary relation symbol $Q\in \sigma_\phi$,  $\str{C}'\models Qaa'$ iff ${\str{C}}\models Qaa$ and $\str{C}'\models Qa'a$ iff ${\str{C}}\models Qaa$. 
	
	Case (ii). Let $\theta(\alpha)\geq 2$.  As before, choose any $a\in C$ realizing $\alpha$.  Extend $\str{C}$ by a new element $a'$ defining $\type{\str{C}'}{a'}=\alpha$ and for every $b\in A$, $b\neq a$, define $\type{\str{C}'}{a',b}=\type{\str{C}}{a,b}$.  To complete $\str{C'}$, find $c\in C$ such that $a\neq c$ and $\type{\str{C}}{c}=\alpha$, and define $\type{\str{C'}}{a,a'}=\type{\str{C}}{a,c}$. 
	
	It is routine to check that in both cases $\str{C'}\models \chi$.

	By adding a sufficient number of realizations of required 1-types, we obtain a $\phisyms{S}$-admissible $S$-class  that has counting type $\theta'$.

	To prove (iii) let $\theta'$ be the cutting of $\theta=\ctype{\str{C}}$  to $M_\phi$ and let $h$ be the number of conjuncts of the form $\cfes$ in $\phisyms{S}$
	(note that in such conjuncts we have $\eta(x,y) = Sxy \wedge Syx$). W.l.o.g.~assume $h \ge 2$.     
	
	We proceed by first selecting a sub-structure $\str{C}_0\subseteq \str{C}$ of bounded size having a counting type $\theta_0$ such that $\theta'$ safely extends $\theta_0$, and then modifying $\str{C}_0$ to obtain a structure $\str{C}_1\models \chi$ of the same counting type $\theta_0$. Admissibility of $\theta'$ will then follow from part (ii) of our lemma. 
	
	For each 1-type $\alpha$ that appears in $\theta$  mark $h$ distinct realisations $c_{\alpha,1}, \ldots, c_{\alpha,h}$      
	of $\alpha$ in $\str{C}$ (or all such realisations if $\alpha$ is realised less than $h$ times). Let $B_0$ be the set of all elements marked in this step. We mark additional elements to form $\str{C}_0$.
	
	For each $a \in B_0$ and each conjunct $\gamma$ of the form $\cfes$ in $\phisyms{S}$, 
	$\gamma=\forall x (\ung(x) \rightarrow \exists y (Sxy \wedge Syx \wedge \psi(x,y)))$,  if
	$\str{C} \models \ung(a) \wedge \neg \psi(a,a)$, then find a witness $b \in C$ such that $\str{C} \models \psi(a,b)$,
	and mark the element $b$.  Let $B_1$ be the set of all elements marked in this step. Thus, witnesses have now been found for all elements of $B_0$.
	
	Similarly, for all conjuncts of the form $\cfes$ in $\phisyms{S}$ we find witnesses for all elements from $B_1$ and mark them forming a set $B_2$. 
	
	Let $\str{C}_0=\str{C}\restr (B_0\cup B_1 \cup B_2)$ and let $\theta_0=\ctype{\str{C}_0}$. Obviously, $\theta$ safely extends $\theta_0$ and, 
	since $|B_0\cup B_1 \cup B_2|\leq |\AAA_\phi|(h+h^2+h^3|)\leq M_\phi$, also $\theta'$ safely extends $\theta_0$.  
		
	We now modify $\str{C}_0$ to yield a structure $\str{C}_1$ with the same domain, where only some 2-types between elements from $B_2$ and $B_0$ are redefined in order to ensure that all elements from $B_2$
		find witnesses 	for all conjuncts of the form $\cfes$ in $B_0$; in particular the 1-types of all elements are preserved, hence, also the counting type.
	
	 More precisely, consider any conjunct $\gamma_i$ ($1\leq i \leq h$) of the form $\cfes$, 
	$\gamma_i=\forall x (\ung(x) \rightarrow \exists y (Sxy \wedge Syx \wedge \psi(x,y)))$, and any element $a\in B_2$ such that
	$\str{C}_0 \models \ung(a) \wedge \neg \exists y \psi(a,y)$. Since $\str{C}\models \gamma_i$  there is  $b_i \in C$ such that $\str{C} \models \psi(a,b)$;  let $\alpha
	=\type{\str{C}}{b}$ and $\beta(x,y) =\type{\str{C}}{a,b}$. Our choice of $B_0$ ensures that $c_{\alpha,1}, \ldots, c_{\alpha,h}\in B_0$ are realizations of $\alpha$ in $\str{C}_0$; otherwise, there are less than $h$ realizations of $\alpha$ in $\str{C}$, and 	$\str{C}_0 \models \ung(a) \wedge \exists y \psi(a,y)$.
 We can, therefore,  replace in $\str{C}_0$ the 2-type $\type{\str{C}_0}{a,c_{\alpha,i}}$ by $\beta(x,y)$, providing a witness of $\gamma_i$ for $a$ in $B_0$ without violating any conjunct from $\phiuniv$, as $\beta$ is $\phiuniv$-admissible, and without disturbing other witnesses defined for $a$ for conjuncts $\gamma_j$ with $j\neq i$.   
  Moreover, by the choice of $B_1$ the element $c_{\alpha,i}$ does not require  any witnesses in $B_2$, hence the replacement of $\type{\str{C}_0}{a,c_{\alpha,i}}$ by $\beta$ does not destroy any essential witness for $c_{\alpha,i}$.

	 Denote the structure obtained by the above described modification by $\str{C}_1$. 
		From the construction it is clear that $\str{C}_1$ is an $S$-class,   $\ctype{\str{C}_1}=\theta_0$, and $\str{C}_1$ satisfies all conjuncts of the form $\phisyms{S}$, hence,
	$\str{C}_1$ is $\phisyms{S}$-admissible. 
	Since 
	$\theta'$ safely extends $\theta_0$,  applying part (ii) of 
	our lemma to $\theta'$ and $\theta_0$ the claim follows. 
\end{proof}

Part (iii) of Lemma \ref{l:admissibility} implies in particular the following.
\begin{remark}\label{cor:admissible}
	Let $\phi$ be a \GFt-formula with special guards in normal form and $S$ be a special symbol in $\sigma_\phi$. If $\str{A} \models \phi$ then 
	 for every $S$-class $\str{C}$ in $\str{A}$ the $M_\phi$-cutting of $\ctype{\str{C}}$ is $\phisyms{S}$-admissible.
	  \end{remark}

\section{Equivalence guards. The case without equality}\label{sec:eq-no-equality}

We now show that for \GFtEG{} the presence of the equality symbol is crucial both for
enforcing infinite models (cf.~Example \ref{ex:infiniteEG}) and enforcing finite models of doubly
exponential size (cf.~Example \ref{klasa}). This section can also be treated as a warm-up before more involved Section \ref{sec:eq-with-equality}. We show:

\begin{theorem}\label{theorem:noequality}
Every satisfiable \GFtEG{} formula without equality has a finite model
of exponentially bounded size.
\end{theorem}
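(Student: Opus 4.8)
The plan is to build a finite model directly from the normal-form data, exploiting the fact that without equality one cannot force distinct realisations of the same $1$-type to differ, so each equivalence class can be collapsed to a bounded-size "multiplied prototype". First I would invoke Lemma~\ref{lem:normalform} to reduce to a single normal-form $\GFtEG$-sentence $\phi$, so that $\phi=\phieq$ in the notation of Section~\ref{sec:normalforms}, and note that by Fact~\ref{fact:models-admissible} it suffices to produce a special (equivalence) structure $\str{A}$ satisfying the $\exists$-conjuncts, whose realised $1$- and $2$-types are all $\phiuniv$-admissible, in which every $E_i$-class is $\phisyms{E_i}$-admissible, and which supplies witnesses for all $\cfe$-conjuncts. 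From a fixed model $\str{B}\models\phi$ I would read off: the set $\mathcal{T}\subseteq\AAA_\phi$ of realised $1$-types, a choice of $\phiuniv$-admissible witness $2$-types for each $\cfe$-conjunct and each $\alpha\in\mathcal{T}$, and for each $E_i$ a single $\phisyms{E_i}$-admissible class-prototype realising (a cutting of) each $1$-type in $\mathcal{T}$.

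The construction itself: take $K$ "columns", one column being a copy of a bounded-size witnessing gadget containing, for every $\alpha\in\mathcal{T}$ and every $\cfe$-requirement, an element providing the needed witness outside its own $E_i$-class; cross-connect elements of distinct columns by the $\phiuniv$-admissible binary-free $2$-type guaranteed by Fact~\ref{fact:admisible-type} (so no special atom holds across columns), which is sound because special symbols occur only in guards. Within each column, for each equivalence symbol $E_i$ independently, partition the elements into $E_i$-classes, each of which is (an $E_i$-splice of, via Fact~\ref{fact:admisible-type}) a copy of the $\phisyms{E_i}$-admissible prototype; ramification is arranged exactly as sketched in Section~\ref{sec:normalforms}, giving a multidimensional grid where classes of distinct $E_i$'s meet in at most one element. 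Here the key point that replaces the doubly-exponential blow-up of Example~\ref{klasa} is Lemma~\ref{l:admissibility}(i): since $\phi$ has no equality, any $\phisyms{E_i}$-admissible counting type extends to a larger one, so I may freely duplicate realisations of any already-present $1$-type inside a class without losing admissibility. Thus each class can be taken of size polynomial (times $|\AAA_\phi|$) and still absorb all the $\cfes$-witness demands of the elements placed in it.

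The main obstacle, and where I would spend most of the care, is bookkeeping the witnesses so that the model stays finite and genuinely has the exponential bound: each element $a$ in column $c$ must have (i) its intra-class $E_i$-witnesses for $\cfes$-conjuncts inside its own $\phisyms{E_i}$-admissible class — handled by choosing the prototype to already contain them, using part (i) of Lemma~\ref{l:admissibility} to pad — and (ii) its $\cfe$-witnesses, which I route to a designated neighbouring column $c+1$, where a fixed gadget realising all of $\mathcal{T}$ lives; cyclically closing the columns ($c+1$ mod $K$) keeps the structure finite. One must check that connecting across columns by binary-free special-free $2$-types never creates an unwanted guard and never violates a $\cffs/\cff/\cf$-conjunct — true because those connections realise $\phiuniv$-admissible $2$-types by Fact~\ref{fact:admisible-type} — and that merging the per-$E_i$ grids on a common domain does not conflict, which is exactly ramification. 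Counting up: $|\mathcal{T}|$, the number of $\cfe$-conjuncts, and each class size are all at most exponential in $|\phi|$, and the number of columns and the grid dimension ($=k\le|\phi|$) are likewise bounded, so $|A|$ is at most exponential in $|\phi|$, giving the claimed exponentially bounded finite model.
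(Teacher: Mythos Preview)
Your outline has the right architecture---normal form, a grid for ramification, Lemma~\ref{l:admissibility}(i) to pad classes, $E_i$-splices via Fact~\ref{fact:admisible-type}, cyclic copies for $\cfe$-witnesses---and this is essentially the paper's approach. But there is a real gap in the class construction. You extract for each $E_i$ \emph{a single} $\phisyms{E_i}$-admissible prototype and then make every $E_i$-class in your model a copy of it. This fails in general: a $\cffs$-conjunct such as $\forall xy\,(E_1xy \rightarrow (Px \rightarrow \neg Qy))$ forbids any $E_1$-class from containing both a $P$-element and a $Q$-element, yet both $1$-types may be realised in $\str{B}$, so there need not exist any single $\phisyms{E_1}$-admissible class realising all of $\mathcal{T}$. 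Lemma~\ref{l:admissibility}(i) lets you multiply $1$-types \emph{already present} in a class, but it does not let you introduce new ones. The paper handles this by taking, for each $E_i$, a minimal \emph{collection} $\str{C}^i_1,\ldots,\str{C}^i_{l_i}$ of $E_i$-classes of $\str{B}$ jointly covering $\mathcal{T}$, recording their $M_\phi$-counting types $\theta^i_j$, sizing the base domain to $n_\alpha=\max_i\sum_j\theta^i_j(\alpha)$ realisations of each $\alpha$, and then partitioning each grid line into pieces matching these several types; leftover (``redundant'') elements are absorbed into some piece containing their $1$-type via Lemma~\ref{l:admissibility}(i).

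A smaller point: you cite Section~\ref{sec:normalforms} for the grid/ramification construction, but that section only covers normal forms. The concrete device---placing at grid node $(x_1,\ldots,x_k)$ the base-domain element with index $(x_1+\cdots+x_k)\bmod l$, so that every axis-parallel line is a fresh copy of the base domain and can be partitioned independently for its own $E_i$---is part of the present proof and should be spelled out rather than referenced.
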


\begin{proof}
Let $\phi$ be a satisfiable normal form \GFtEG{}-formula without equality, $E_1, \ldots, E_k$ be  the equivalence relation symbols in $\sigma_\phi$, 
and let $\str{B}$ be a (not necessarily finite) model of $\phi$.  
Recall that in $\phi$ the      
special guards $\eta(x,y)$ in conjuncts of the form $\cfes$ are symmetric and  can be simplified to $\eta(x,y)=E_ixy$ for some $i$. 

We show how to construct a finite,  exponentially bounded model $\str{A} \models \phi$.
First, we show how to build a model $\str{A}_0 \models \phisym$, i.e., a structure that can be partitioned into $\phisyms{S_i}$-admissible equivalence classes (for $i=1, \ldots, k$). 
Then, a model $\str{A} \models \phi$, will be obtained by taking
an appropriate number of copies of $\str{A}_0$ and setting the
2-types joining these copies in a simple way.

Let $\AAA_0$ be the set of $1$-types realized in $\str{B}$. For each $i$ ($1 \le i \le k$) choose a minimal collection $\str{C}^i_1, \ldots, \str{C}^i_{l_i}$ of
$E_i$-classes of $\str{A}$ such that for each $\alpha \in \AAA_0$ there is $j$ such that $\alpha$ is realized in $\str{C}_j^i$. 
For each class $\str{C}^i_j$ from this collection let $\theta^i_j$ be the $M_\phi$-cutting of $\ctype{\str{C}^i_j}$;
by Remark~\ref{cor:admissible}  
each $\theta^i_j$ is $\phisyms{E_i}$-admissible. 

We now define a \emph{base domain}, which is a non-empty  set of elements with their $1$-types predefined.
To built  $\str{A}_0$ we will then use some number of copies of the base domain.
For every $\alpha \in \AAA_0$ let $n_\alpha= \max_i \sum_{j=1}^{l_i} \theta^i_j(\alpha)$. 
We put $n_\alpha$ copies of realizations of each $\alpha$ into the base domain.
 Let $a_0, \ldots, a_{l-1}$ be an enumeration 
of the elements of the so obtained base domain.  Note that $l$ is bounded by $M_\phi |\AAA_0|^2 \le M_\phi |\AAA_\phi|^2$, exponentially in $|\phi|$.

Let us now observe that for each $i$ ($1 \le i \le k$) the base domain can be partitioned into subsets on which  ramified $\phisyms{E_i}$-admissible 
$E_i$-classes can be built. 
Fix $1 \le i \le k$.
By the choice of the numbers $n_\alpha$ we have enough elements of every $1$-type in the base domain to construct  sets $P^i_1, \ldots, P^{i}_{l_i}$, such that $P^i_j$
contains precisely $\theta^i_j(\alpha)$ realizations of $\alpha$, for every $\alpha\in \AAA_0$.  After this step some elements of the base domain (call them \emph{redundant})
may  be not     
assigned to any $P_j^i$.  For each redundant element $a$, assuming its predefined $1$-type is $\alpha$, choose a set $P^i_j$ containing a realization of $\alpha$ and join $a$ to $P^i_j$.
After this step, the sets $P^i_j$ form a partition of the base domain. 
Let ${\overline\theta}^i_j$ be the counting type such that ${\overline\theta}^i_j(\alpha)$ is   the number of elements of 1-type $\alpha$ in $P^i_j$. Obviously, ${\overline\theta}^i_j$ extends ${\theta}^i_j$, so by  part (i) of  Lemma \ref{l:admissibility} ${\overline\theta}^i_j$ is $\phisyms{E_i}$-admissible, and we can expand $P^i_j$ to a $\phisyms{E_i}$-admissible class  (c.f., Definition~\ref{d:admissible}). Finally, due to Fact \ref{fact:admisible-type}
each such class can be replaced by its $E_i$-splice, without affecting its $\phisyms{E_i}$-admissibility.

We emphasise that it is the step of adjoining redundant elements to the sets $P^i_j$  in which the absence of equality is crucial. Indeed, with equality we can forbid more than one
realization of a $1$-type in a class, e.g., by saying $\forall xy (E_ixy \rightarrow (Px \wedge Py \rightarrow x=y))$. Thus, for the case with equality (Section 
\ref{sec:eq-with-equality}) we will need a more careful strategy of building finite models. In particular, instead of part (i) we will use a less convenient part (ii) of Lemma \ref{l:admissibility}.

 To form $\str{A}_0$ we put
copies of elements from the base domain into the nodes of a
$k$-dimensional grid:
at location $(x_1, \ldots, x_k)$, for $0 \le x_i < l$, we
put an element whose 1-type equals the 1-type of $a_s$,
where $s=(x_1 + \ldots + x_k) \; \mbox{mod}\; l$.
This way each line of the grid (set of tuples having the same coordinates on all but one position) 
consists precisely of a copy of the base domain. 
Figure~\ref{grid} presents the arrangement of elements for $l=5$ and $k=2$.

\begin{figure}[htb]
\begin{center}
\begin{tikzpicture}[scale=1.2]

\begin{scope}[transparency group]
\begin{scope}[blend mode=multiply]

\foreach \x/\y/\l in {1/1/3, 3/4/3, 2/5/3, 4/1/2, 3/2/2, 2/3/2, 1/4/2}{
\draw[draw=none, rounded corners=5pt, fill=black!70!white, opacity=0.5] (\x-0.15, \y-0.15) rectangle (\x+\l+0.15-1, \y+0.15) {};
}

\foreach \x/\y/\l in {1/2/2, 1/3/1, 1/5/1}{
\draw[draw=none, fill=black!70!white, opacity=0.5]
  (\x-0.25, \y-0.15) {[rounded corners=5pt] --
  ++(\l+0.4-1,0)  -- 
  ++(0,0.3)} --
  ++(-\l-0.5+1, 0) --
  cycle;
	}

\foreach \x/\y/\l in {5/2/1, 4/3/2, 5/5/1}{
\draw[draw=none, fill=black!70!white, opacity=0.5]
  (\x+\l-1+0.25, \y+0.15) {[rounded corners=5pt] --
  ++(-\l-0.4+1,0)  -- 
  ++(0,-0.3)} --
  ++(\l+0.5-1, 0) --
  cycle;
	}
	
	\foreach \x/\y/\l in {1/1/2, 1/3/2, 2/2/2, 3/1/2, 3/4/2, 4/3/2, 5/2/2, 5/4/2, 1/5/1, 2/4/1, 3/3/1, 4/2/1, 5/1/1} {
\draw[draw=none, rounded corners=5pt, fill=black!30!white, opacity=0.5] (\x-0.15, \y-0.15) rectangle (\x+0.15, \y+\l-1+0.15) {};
}

\foreach \x/\y/\l in {2/5/1, 4/5/1}{
\draw[draw=none, fill=black!30!white, opacity=0.5]
  (\x-0.15, \y+0.35) {[rounded corners=5pt] --
  ++(0, -0.5)  -- 
  ++(0.3,0)} --
  ++(0, 0.4) --
  cycle;
	}

\foreach \x/\y/\l in {2/1/1, 4/1/1}{
\draw[draw=none, fill=black!20!white, opacity=0.5]
  (\x+0.15, \y+0.65-1) {[rounded corners=5pt] --
  ++(0, 0.5)  -- 
  ++(-0.3,0)} --
  ++(0, -0.4) --
  cycle;
	}
\end{scope}
\end{scope}

\foreach \x in {1,2,3,4,5}
   \foreach \y in {1,2,3,4,5}{
      \filldraw[fill=black] (\x, \y) circle (0.05);  
			\pgfmathtruncatemacro \aux {\intcalcMod{\x+\y-2}{5}};
			\coordinate [label=center:$a_{\aux}$] (A) at ($(\x-0.3,\y+0.3)$); 			
 }

\end{tikzpicture}
\caption{Arrangement of elements on the two-dimensional grid and an illustrative division into classes. $E_1$-classes are dark grey,
$E_2$-classes are light grey.} 
\label{grid}
\end{center}
\end{figure}
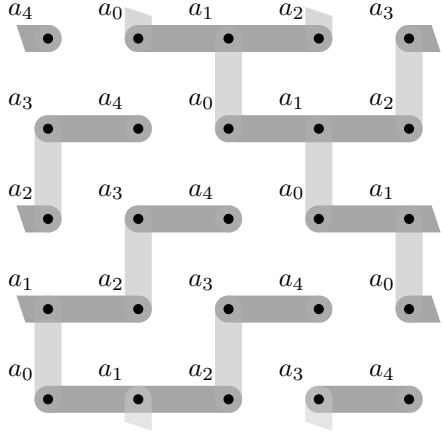

 Now, horizontal lines of the grid are partitioned to form ramified $\phisyms{E_1}$-admissible  $E_1$-classes, 
which is possible, as explained above.
Similarly, vertical lines are partitioned to form ramified $\phisyms{E_2}$-admissible $E_2$-classes, lines going in the third dimension -- to form ramified $\phisyms{E_3}$-admissible $E_3$-classes, and so on (cf.~Figure~\ref{grid}). We
complete the definition of $\str{A}_0$ by setting all 2-types for
pairs of distinct elements not belonging to the same class as binary-free types. These types are $\phi_\forall$-admissible and do not affect the equivalence classes defined.

It is probably worth remarking why we use a grid instead of a single copy of the base domain. An attempt of forming a model just of a single copy of the base domain
would require joining some pairs of distinct elements by more than one equivalence relation, i.e., to form a model which is not ramified. This however may be forbidden, e.g., in the presence of two equivalences, by conjuncts
$\forall xy (E_1xy \rightarrow x \not=y \rightarrow Rxy) \wedge \forall xy (E_2xy \rightarrow x \not=y \rightarrow \neg Rxy)$, where $R$ is a non-special binary symbol.
    On the other hand, it follows from our construction that every finitely satisfiable sentence has a ramified model.

Observe that the structure $\str{A}_0$ satisfies conditions (i)-(iii) of Fact \ref{fact:models-admissible}. Indeed, conditions (ii) and (iii) hold, as explained, and condition (i) holds as every $\alpha\in \AAA_0$ is realized in  $\str{A}_0$ (by the choice of the collection of classes at the beginning of the construction).  Hence, $\str{A}_0 \models \phisym$. 

To construct $\str{A}$ we take three sets of copies of $\str{A}_0$,
each consisting of $h$ elements, where $h$ is the number of
conjuncts of $\phi$ of the form $\cfe$ (and thus the maximal number of witnesses an element may require for $\cfe$-conjuncts);
   w.l.o.g., we can suppose $h>0$. 
This
 step is reminiscent of the construction
of the small model for \FOt{} from \citeN{GKV97}. In this step we provide witnesses for all conjuncts of type $\cfe$ of $\phi$ without
 changing the $E_i$-classes defined in the copies of $\str{A}_0$; this will ensure that $\str{A}\models \phi$. 

Let $A=A_0 \times \{0,1,\ldots,h-1\} \times \{0,1,2\}$. We set
$\type{\str{A}}{a,i,j}:=\type{\str{A}_0}{a}$  and
$\type{\str{A}}{(a,i,j),(b,i,j)}:=\type{\str{A}_0}{a,b}$, for all $i,j$ ($0\leq i<h$, $0\leq j<3$). 
Witnesses for all elements in $A$ for conjuncts of $\phi$ of the
form $\cfe$ are provided in a circular way: elements from $A_0
\times \{0,1,\ldots,h-1\} \times \{j\}$  find witnesses in $A_0
\times \{0,1,\ldots,h-1\} \times \{j'\}$, where $j' = j + 1 \;
\mbox{mod} \; 3$. Consider an element $(a,i,j)$.
Let $\delta_0, \ldots, \delta_{h-1}$ be all conjuncts of $\phi$ of
the form $\cfe$,  $\delta_m=\forall x (\ung(x) \rightarrow \exists y
(\bing(x,y) \wedge \psi(x,y)))$. 
Find in $\str{A}$ an element $b$ such that $\type{\str{A}}{b}=\type{\str{A}_0}{a}$.
For $0 \le m < h$, if $\str{A} \models \ung(b) \wedge \neg(\bing(b,b) \wedge \psi(b,b))$,  choose a witness $b_m \in A$ 
such that $\str{A} \models \ung(b) \rightarrow 
(\bing(b,b_m) \wedge \psi(b,b_m))$. Let $\alpha_m=\type{\str{A}}{b_m}$ 
and let $\beta_m=\tsplice{\type{\str{A}}{b,b_m}}{Q}$, where $Q$ is the relation symbol of the atom $\bing(x,y)$.  Let $b_m$ 
be an element of type $\alpha_m$ in $\str{A}_0$.
We set $\type{\str{A}}{(a,i,j), (b_m,m,j+1 \;
\mbox{mod} \; 3 ) }:=\beta_m$. 

 This circular scheme guarantees no
conflict in setting 2-types, since a 2-type for a pair of elements
in $\str{A}$ is defined at most once. Moreover, in this step we use only splices of 2-types from the original model  that do not affect the equivalence classes and by Fact \ref{fact:admisible-type} are $\phi_\forall$-admissible.  We complete the structure
by setting 2-types for all remaining pairs of elements. Let
$(a,i,j)$, $(a',i',j')$ be such a pair. Let $\alpha=\type{\str{A}_0}{a}$,
$\alpha'=\type{\str{A}_0}{a'}$. Let $\beta$ be the unique binary-free 2-type containing $\alpha(x)$ and
$\alpha(y)$. We set
$\type{\str{A}}{(a,i,j),(a',i',j')}:=\beta$. 

This finishes our construction of $\str{A}$. 
To see that $\str{A}$ is indeed a model of $\phi$ note that conjuncts of the form $\ce$ are satisfied since $\str{A}$ realizes
all the $1$-types realized in the original model $\str{B}$,  conjuncts of the form $\cfes$ are satisfied 
since all $E_i$-classes ($i=1, \ldots, k$) in $\str{A}$ are $\phisyms{E_i}$-admissible, the conjuncts of the form $\cfe$ are taken care  about in the  two previous paragraphs,      
and conjuncts of the form $\cff$ are satisfied since all $2$-types in $\str{A}$  are either explicitly required to be $\phi_\forall$-admissible 
or are binary-free.
 Note that $|A_0| \le (M_\phi|\AAA_0|^2)^k$ and
thus the size of $\str{A}$ is bounded by $3h (M_\phi|\AAA_\phi|^2)^k$, exponentially in $|\phi|$. 
\end{proof} 

Having Theorem \ref{theorem:noequality} proved 
we can state the following.
\begin{corollary}\label{c:nexpnoeq}
The (finite) satisfiability problem for \GFtEG{} without equality is \NExpTime-complete. 
\end{corollary}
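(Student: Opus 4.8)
The statement to prove is the Corollary: the (finite) satisfiability problem for \GFtEG{} without equality is \NExpTime-complete. The plan is to combine the model-theoretic result of Theorem~\ref{theorem:noequality} with the standard normal-form reduction (Lemma~\ref{lem:normalform}) for the upper bound, and to cite a known lower bound for the matching hardness.

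\emph{Upper bound.} First I would reduce an arbitrary \GFtEG{}-sentence $\phi$ without equality to a disjunction $\Psi = \bigvee_{i\in I}\psi_i$ of normal-form sentences using Lemma~\ref{lem:normalform}; note each $\psi_i$ has size $O(|\phi|\log|\phi|)$, is still a \GFtEG{}-formula (the auxiliary symbols added are unary, so no new special symbols appear), and $\phi$ is satisfiable iff some $\psi_i$ is. A nondeterministic procedure guesses an index $i$, computes $\psi_i$, and then must decide satisfiability of a normal-form \GFtEG{}-sentence without equality. By Theorem~\ref{theorem:noequality}, $\psi_i$ is satisfiable iff it has a finite model of size at most $3h(M_{\psi_i}|\AAA_{\psi_i}|^2)^k$, which is at most exponential in $|\psi_i|$, hence at most exponential in $|\phi|$. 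The procedure then guesses such a structure $\str{A}$ (its size is exponential, so this is a \NExpTime{}-sized object) together with the interpretations of all relation symbols, verifies that each special symbol is interpreted as an equivalence relation (polynomial in $|\str{A}|$), and verifies $\str{A}\models\psi_i$ by checking each conjunct — this is polynomial in $|\str{A}|$ since the quantifier depth is two. Overall this is a \NExpTime{} procedure. (Finite satisfiability coincides with satisfiability here precisely because of the finite model property established in Theorem~\ref{theorem:noequality}.)

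\emph{Lower bound.} For \NExpTime-hardness it suffices to exhibit the hardness already without any special symbols: plain \FOt{} satisfiability is \NExpTime-complete~\cite{GKV97}, and in fact \GFt{} satisfiability lower bounds of \NExpTime{} are standard; alternatively one can cite the \NExpTime-hardness recorded for \GFtEG{} (with one equivalence relation) in~\cite{KO12,Kie05}, observing that the hard instances used there do not require equality. Since the class of \GFtEG{}-sentences without equality contains \GFt{}-sentences without special symbols, hardness transfers directly.

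The two bounds together give \NExpTime-completeness. I do not expect any real obstacle here: the content is entirely in Theorem~\ref{theorem:noequality}, and the Corollary is the routine packaging of that finite-model bound into a complexity statement, exactly as flagged by the sentence ``Having Theorem~\ref{theorem:noequality} proved the following is straightforward.'' The only point requiring a word of care is checking that the normal-form transformation of Lemma~\ref{lem:normalform} preserves the ``without equality'' and ``equivalence guards'' restrictions, which it does since it only adds fresh unary symbols.
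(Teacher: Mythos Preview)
Your upper-bound argument is correct and is exactly the ``straightforward'' argument the paper has in mind: finite model property with an exponential size bound, combined with the normal-form reduction, yields a guess-and-check \NExpTime{} procedure.

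The lower-bound argument, however, contains an error. You write that ``\GFt{} satisfiability lower bounds of \NExpTime{} are standard'' and conclude that hardness transfers from \GFt{}-sentences without special symbols. This is false: as recalled in the Introduction, plain \GFt{} is \ExpTime-complete~\cite{Gra99}, not \NExpTime-hard. So the inclusion of \GFt{} into \GFtEG{} gives you only an \ExpTime{} lower bound; the equivalence guard is genuinely needed for \NExpTime-hardness.

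Your ``alternative'' route is the correct one and is what the paper in fact relies on (cf.~the lower-bound paragraph in the proof of the \GFtEG{}-with-equality theorem): the \NExpTime{} lower bound comes from \GFt{} with a single equivalence relation~\cite{Kie05}, and the reduction there does not use equality. You should make this the primary argument and drop the incorrect appeal to plain \GFt{}.
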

\begin{proof}
The lower bound follows from the lower bound for the 
	fragment with one equivalence relation  (that enjoys the finite
	model property) which holds without equality \cite{Kie05}. The procedure justifying the upper bound
	takes a formula $\phi$, converts it to normal form $\phi'$, guesses an exponentially bounded structure
	and verifies that it is a model of $\phi'$.
\end{proof}

\section{Equivalence Guards. The case with equality}\label{sec:eq-with-equality}

In this section we show that every finitely satisfiable
\GFtEG-formula has a model of bounded (double exponential) size and that 
the finite satisfiability problem for \GFtEG{} is \NExpTime-complete. 
We remark here that the case of finite satisfiability is much more
complicated than that of general satisfiability. Indeed, every
satisfiable formula has a (usually infinite) tree-like model with equivalence classes
bounded exponentially. To check if a given normal form formula has a model it suffices to guess a set of
1-types which are going to be realized, verify that for a realization of each 1-type one can build
its (small) $E_i$-classes (independently for different $i$), and check some additional simple conditions. 
 \citeN{Kie05} discusses details of this construction. For finite satisfiability we need
to develop a different and more involved approach.
Recall in particular, that we have to take into consideration equivalence classes of doubly exponential size, as shown in Example \ref{klasa}.  Since we are aiming at an $\NExpTime$ upper bound, similarly as in the previous section 
we approximate counting types of classes by 
their $M_\phi$-cuttings  (where $M_\phi$ is the number defined in Section~\ref{sec:types-and-counting} 
that admits application of part (iii) in Lemma~\ref{l:admissibility}). However, the number of $M_\phi$-counting types is doubly exponential, so we cannot work on this level of abstraction directly and we reduce the finite satisfiability problem to linear/integer programming. The proof is presented in two parts. In Subsection~\ref{ss:sysstems-of-linear} we identify  conditions of tuples of sets of $M_\phi$-counting types that obviously hold when 
these sets are taken from models of \GFtEG{}-formulas.
These conditions are formulated in terms of linear inequalities. In Subsection~\ref{ss:neat-model} we show the core part of the reduction, namely that checking solutions of these inequalities is sufficient for finite satisfiability.  

The idea of reducing the (finite) satisfiability problem of a logic to linear (integer) programming is not new.  It was, e.g., employed by \citeN{Cal96}, \citeN{LST05} and Pratt-Hartmann  \citeyear{PH05,PH07}
to establish the complexity of some fragments of the  two-variable logic  with counting quantifiers. On the other hand the apparatus of counting types
and their $M_\phi$-cuttings was introduced in the conference version of this paper \cite{KT07} and it was later adopted (after some changes in terminology)
in \citeN{KMP-HT14} and \citeN{KP-HT15}
to deal with related logics with equivalence relations.

Let us now recall some notions and results concerning linear and integer programming.

 A {\em linear equation} ({\em inequality}) is an expression $t_1 = t_2$ ($t_1 \geq t_2$),
where $t_1$ and $t_2$ are linear terms with coefficients in $\N$.
We are going to work with systems containing both equations and inequalities;  for simplicity
we will call them just \emph{systems of inequalities}, as any equation can be presented as two inequalities in an obvious way.
Given a system $\Gamma$ of linear inequalities, we take the
{\em size} of $\Gamma$ 
to be the total number of
bits required to write $\Gamma$ in standard notation; notice that the size of $\Gamma$
may be much larger than the number of inequalities in $\Gamma$.  

The problem {\em linear } ({\em integer})
	{\em programming} is as follows: given a system $\Gamma$ of linear
	inequalities decide  if $\Gamma$ has a solution over
	$\Q$ (over $\N$). 
It is well-known that liner programming is in \PTime{} \cite{khachiyan79} and integer programming is in \NP{} \cite{BoroshandTreybig76}. 
We recall the following useful bound on solutions of systems of linear inequalities. 

\begin{proposition}[\cite{Papa81}]\label{pr:papadimitriou}
	Let $\Gamma$ be a system of $m$ linear inequalities in $n$ unknowns, and let the
	coefficients and constants that appear in the inequalities be in $\{-a,  \ldots, a-1, a\}$. If $\Gamma$
	admits a non-negative integer solution, then it also admits one in which the values assigned to the
	unknowns are all bounded by $n(m \cdot a)^{2m+1}$.
\end{proposition}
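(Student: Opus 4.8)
The plan is to reduce the claim to the analogous statement for systems of linear \emph{equations} in non-negative integer unknowns, and then, given such a solvable system, to exhibit a solution whose $\ell_1$-norm is at most the number of lattice points in a box of side polynomial in $m$ and $a$; this is achieved by a Steinitz-type rearrangement of the columns followed by a pigeonhole argument.

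\emph{Reduction to equalities.} First I would bring $\Gamma$ into standard form by replacing every inequality $t_1 \geq t_2$ with an equation $t_1 - t_2 - s = 0$ for a fresh unknown $s \geq 0$, leaving the equations of $\Gamma$ unchanged. This produces a system $Ax = b$, $x \geq 0$, where $A$ is an integer $m \times N$ matrix with $N \leq n+m$, all entries of $A$ and of $b$ lie in $\{-a,\ldots,a\}$ (the slack columns only add the entries $0$ and $-1$, which is harmless as $a \geq 1$), $\Gamma$ has a non-negative integer solution iff $Ax=b$, $x \geq 0$ does, and any coordinatewise bound for the latter transfers to the original $n$ unknowns.

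\emph{Bounding a minimal solution.} Assume $Ax = b$, $x \geq 0$ has a solution over $\N$ and let $\bar x$ minimise $\mathbf{1}^{\top}\bar x = \sum_i \bar x_i =: L$ among them; if $b = 0$ then $\bar x = 0$ and we are done, so assume $b \neq 0$. Regard $\bar x$ as a multiset $\mathcal{M}$ of $L$ columns of $A$, column $i$ taken with multiplicity $\bar x_i$, so that $\sum_{c \in \mathcal{M}} c = b$. By the Steinitz lemma for the $\ell_\infty$-norm on $\mathbb{R}^m$, the elements of $\mathcal{M}$ can be ordered as $c_1,\ldots,c_L$ with all prefix sums $s_j := c_1 + \cdots + c_j$ satisfying $\lVert s_j \rVert_\infty \leq (m+1)a$ --- the Steinitz deviation from the trajectory $j b / L$ is at most $m\cdot\max_c \lVert c \rVert_\infty \leq ma$, and $\lVert j b / L \rVert_\infty \leq \lVert b \rVert_\infty \leq a$. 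Hence $s_0 = 0, s_1, \ldots, s_L$ are $L+1$ integer points inside a box containing at most $(2(m+1)a + 1)^m$ points, so if $L$ were larger we would have $s_j = s_k$ for some $j < k$; then the non-empty sub-multiset $\{c_{j+1},\ldots,c_k\}$ of $\mathcal{M}$ sums to $0$, and deleting it from $\bar x$ yields a non-negative integer solution of $Ax=b$ with strictly smaller $\ell_1$-norm, contradicting the choice of $\bar x$. Therefore $L \leq (2(m+1)a+1)^m$, whence $\bar x_i \leq L \leq (2(m+1)a+1)^m$ for all $i$.

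\emph{Conclusion, and the main obstacle.} It then suffices to note the elementary inequality $(2(m+1)a+1)^m \leq n(m \cdot a)^{2m+1}$, valid for all $m,n,a \geq 1$; pinning the Steinitz constant down precisely (and, if one wants the bound exactly in the stated form, treating separately the finitely many cases with very small $m$ and $a$) recovers $n(ma)^{2m+1}$ as in \cite{Papa81}. The step I expect to cost the most effort is the Steinitz rearrangement with the attached pigeonhole: the argument works only because the prefix sums can be confined to a box of side \emph{polynomial} --- not exponential --- in $m$ and $a$, and one must verify that the sub-multiset deletion genuinely decreases the $\ell_1$-norm while keeping the solution non-negative and integral. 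The reduction to equalities and the final numerical comparison are routine bookkeeping.
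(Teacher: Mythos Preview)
The paper does not prove this proposition at all; it is simply quoted from Papadimitriou's 1981 paper and used as a black box. So there is no ``paper's own proof'' against which to compare your proposal.

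That said, your route is a legitimate and well-known one: the Steinitz-rearrangement plus pigeonhole argument you sketch is essentially the Eisenbrand--Weismantel approach, and it does yield a bound of the shape $(O(ma))^m$ on the $\ell_1$-norm (hence on each coordinate) of a minimal non-negative integer solution of $Ax=b$. Papadimitriou's original argument is different: it goes via Cramer's rule and Hadamard-type determinant bounds on basic feasible solutions of the LP relaxation, together with a kernel-subtraction step, and it is this argument that produces the specific constant $n(ma)^{2m+1}$.

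There is one concrete gap in your write-up. The inequality you assert at the end, $(2(m{+}1)a+1)^m \le n(ma)^{2m+1}$ ``valid for all $m,n,a \ge 1$'', is false: for $m=n=a=1$ it reads $5 \le 1$. Your Steinitz bound is of the right asymptotic order but does not literally imply the stated constant, and ``treating separately the finitely many cases with very small $m$ and $a$'' does not help, since for every fixed $a=1$ and $n=1$ your bound exceeds $n(ma)^{2m+1}=m^{2m+1}$ only for small $m$ but the discrepancy at $m=1$ already shows the inequality is not universal. For the purposes of the present paper this is harmless --- downstream only a bound singly exponential in $m$ and polynomial in $a$ is needed --- but if you want exactly $n(ma)^{2m+1}$ you should cite or reproduce Papadimitriou's determinant-based argument rather than try to massage the Steinitz constant.
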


\subsection{Systems of linear inequalities}\label{ss:sysstems-of-linear}
 
We assume that $E_1, \ldots, E_k$ are equivalence relation symbols available in the signature $\sigma_\phi$ of a normal form \GFtEG-formula $\phi$. Recalling the notation after Definition~\ref{def:normal} we note that we have now $\phi=\phieq$. 

We work with tuples of sets of $n$-counting types (for some $n >0$) of the form 
$\bnTheta= (\nTheta^{E_1}, \nTheta^{E_2}, \ldots, \nTheta^{E_k})$.
We say that an atomic type $\alpha \in \AAA_\phi$ {\em appears} in $\nTheta^{E_i}$ if 
$\alpha$ appears in some $\theta \in \nTheta^{E_i}$,  
and $\alpha$
appears in $\bnTheta$ if $\alpha$ appears in $\nTheta^{E_i}$ for some $i$. 
We denote by $\AAA_{\bnTheta}$  the set of 1-types that appear in $\bnTheta$.

\begin{definition} \label{d:tupleadmissibility}
Let 
$\bnTheta= (\nTheta^{E_1}, \nTheta^{E_2}, \ldots, \nTheta^{E_k})$
be a tuple of  non-empty sets of $n$-counting types (for some $n>0$)
and $\phi$---a normal form \GFtEG{} formula. We say that
$\bnTheta$ is $\phi$-admissible 
     if the following conditions hold:
\begin{enumerate}[(i)]
\item each $\theta \in \nTheta^{E_i}$ is $\phisyms{E_i}$-admissible, ($i=1, \ldots, k$)
\item for every conjunct $\delta$ of $\phi$ of the form $\ce$, $\delta=\exists x (\ung(x) \wedge \psi(x))$
there is $\alpha \in {\AAA_{\bnTheta}}$ such that $\alpha \models \ung(x) \wedge \psi(x)$,
\item for  every conjunct $\delta$ of $\phi$ of the form  $\cfe$, $\delta=\forall x
(\ung(x) \rightarrow \exists y (\bing(x,y) \wedge \psi(x,y)))$, for every
$\alpha \in \AAA_{\bnTheta}$ such that $\alpha\models \ung(x)$ and $\alpha \not\models \bing(x,x) \wedge \psi(x,x)$  there is $\alpha' \in {\AAA_{\bnTheta}}$
such that for some $\phi_\forall$-admissible 2-type $\beta
\in \BBB$ such that $\alpha(x) \subseteq \beta$, $\alpha'(y) \subseteq
\beta$, we have  $\beta
\models  \bing(x,y) \wedge   \psi(x,y) \wedge \neg E_1xy \wedge \neg E_1yx \wedge \ldots \wedge \neg E_kxy \wedge \neg E_kyx$. 
\end{enumerate}
\end{definition}

Recall $M_\phi = 3|\AAA_\phi||\phi|^3$.
The following observation is straightforward (cf. Remark \ref{cor:admissible}).
 
\begin{claim}
Let  $\phi$ be a \GFtEG{}-formula $\phi$ in normal form, $\str{A}$ be a finite model of $\phi$ and
$\bnTheta = (\nTheta^{E_1},
\nTheta^{E_2}, \ldots, \nTheta^{E_k})$, where $\nTheta^{E_i}$ is the set of the $M_\phi$-cuttings of the counting types realized in  $\str{A}$ by $E_i$-classes. Then
$\bnTheta$ is $\phi$-admissible.
\end{claim}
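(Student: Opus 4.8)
The plan is to verify each of the three conditions in Definition~\ref{d:tupleadmissibility} directly against the finite model $\str{A} \models \phi$, using the fact that $\phi = \phieq$ (since $\phi$ is a \GFtEG-formula) together with Remark~\ref{cor:admissible} and Fact~\ref{fact:admisible-type}. First I would note that since $\str{A}$ is finite and $\phi$ is finitely satisfiable, every $E_i$-class $\str{C}$ of $\str{A}$ is itself finite, hence realizes a well-defined $M_\phi$-counting type; thus $\nTheta^{E_i}$ is a nonempty set (at least one class exists because the conjunct $\ce$ forces at least one element, and every element lies in some $E_i$-class), and the tuple $\bnTheta$ is of the required form.

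For condition~(i), I would apply Remark~\ref{cor:admissible}: since $\str{A} \models \phi$ and $\phi$ is in normal form with special symbol $E_i$, for every $E_i$-class $\str{C}$ in $\str{A}$ the $M_\phi$-counting type of $\str{C}$ is $\phisyms{E_i}$-admissible. As $\nTheta^{E_i}$ consists exactly of such $M_\phi$-counting types, every $\theta \in \nTheta^{E_i}$ is $\phisyms{E_i}$-admissible. For condition~(ii), given a conjunct $\delta = \exists x(p(x) \wedge \psi(x))$ of $\phi$, since $\str{A} \models \delta$ there is $a \in A$ with $\str{A} \models p(a) \wedge \psi(a)$; let $\alpha = \type{\str{A}}{a}$. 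Since $a$ lies in some $E_i$-class $\str{C}$, the type $\alpha$ is realized in $\str{C}$, so $\alpha$ appears in the $M_\phi$-counting type of $\str{C}$ (note $M_\phi \geq 1$ so counting types record at least one realization), hence $\alpha \in \AAA_{\bnTheta}$; and as $\psi$ is quantifier-free in one variable, $\alpha \models p(x) \wedge \psi(x)$.

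For condition~(iii), let $\delta = \forall x(p(x) \rightarrow \exists y(q(x,y) \wedge \psi(x,y)))$ be a conjunct of $\phi$ of type $\cfe$, and let $\alpha \in \AAA_{\bnTheta}$ with $\alpha \models p(x)$ and $\alpha \not\models q(x,x) \wedge \psi(x,x)$. Pick an element $a \in A$ realizing $\alpha$ (it exists since $\alpha$ appears in some $M_\phi$-counting type of a genuine class of $\str{A}$). Since $\str{A} \models \delta$ and $\str{A} \models p(a)$, there is $b \in A$ with $\str{A} \models q(a,b) \wedge \psi(a,b)$; because $\str{A}\not\models q(a,a)\wedge\psi(a,a)$ we have $a \neq b$. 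Set $\alpha' = \type{\str{A}}{b}$ and $\beta_0 = \type{\str{A}}{a,b}$; then $\beta_0$ is $\phi_\forall$-admissible (it is realized in $\str{A}$, hence in a two-element substructure satisfying $\phiuniv$). Now take $\beta = \tsplice{\beta_0}{Q}$, the $Q$-splice of $\beta_0$, where $Q$ is the predicate of $q$. By Fact~\ref{fact:admisible-type}, $\beta$ is still $\phi_\forall$-admissible; the splice does not touch the one-types, so $\alpha(x) \subseteq \beta$ and $\alpha'(y) \subseteq \beta$; since $q$ does not involve any special symbol, $\beta \models q(x,y)$ still holds, and $\psi$ likewise contains no special symbols so $\beta \models \psi(x,y)$; finally, the $Q$-splice replaces every special literal $E_jxy$, $E_jyx$ by its negation, so $\beta \models \neg E_1xy \wedge \neg E_1yx \wedge \cdots \wedge \neg E_kxy \wedge \neg E_kyx$. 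Thus $\alpha' \in \AAA_{\bnTheta}$ (it appears in the class of $b$) and $\beta$ witnesses condition~(iii). Combining the three cases, $\bnTheta$ is $\phi$-admissible.

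I do not expect a serious obstacle here — the statement is flagged as "straightforward" precisely because each clause is an immediate unwinding of the semantics combined with already-established facts. The only point requiring a little care is condition~(iii): one must remember to pass from the genuinely realized 2-type $\beta_0$ to its $Q$-splice $\beta$ in order to get the negated-special-literal conjunct, and to invoke Fact~\ref{fact:admisible-type} to see that admissibility survives the splice and that splicing leaves one-types (and all non-special binary literals, in particular $q$) intact. One should also be mindful that $\alpha \not\models q(x,x) \wedge \psi(x,x)$ is exactly what guarantees the witness $b$ is distinct from $a$, which is what makes $\type{\str{A}}{a,b}$ a legitimate 2-type of a pair of distinct elements (and hence a candidate for splicing).
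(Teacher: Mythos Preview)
Your proposal is correct and is precisely the straightforward unwinding the paper has in mind; the paper provides no proof beyond the remark ``straightforward (cf.~Remark~\ref{cor:admissible})'', and your verification of conditions (i)--(iii) via Remark~\ref{cor:admissible} and Fact~\ref{fact:admisible-type} (in particular passing to the $Q$-splice for condition~(iii)) is exactly what is intended.
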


Given a tuple of sets of $M_\phi$-counting types 
$\bnTheta= (\nTheta^{E_1}, \nTheta^{E_2}, \ldots, \nTheta^{E_k})$,
we say that $\alpha \in \AAA_\phi$ is {\em royal} for $E_i$-classes, if $\alpha$
appears in $\nTheta^{E_i}$ and, for every $\theta \in
\nTheta^{E_i}$, we have $\theta(\alpha) < M_\phi$. 
In our construction we will be adding  realizations of non-royal $1$-types to  some $\phisyms{E_i}$-admissible $E_i$-classes, 
without affecting their admissibility. This will be possible due to part (ii) of Lemma \ref{l:admissibility}. 
For readers familiar with the classical paper on \FOt{} by \cite{GKV97} it is worth commenting that our definition of royal types might seem excessive as the obvious candidates for royal types are those that do not occur more than once in $E_i$-classes. Our definition, however, allows us not to bother with some details, and is sufficient for the complexity bounds we want to show.

 We associate with $\bnTheta$ a system
 of linear inequalities $\Gamma_{\bnTheta}$ defined below  describing conditions that hold when $\bnTheta$ contains the $M_\phi$-cuttings of all counting types realized in some finite model of $\phi$. The solutions of the constructed system will suggest the number of elements (and their $1$-types)  in a \emph{base domain}. Models constructed in the next subsection  will be built using some number of copies of the base domain, as in the case without equality.
 
  For each $i$ and each
 $\theta \in \nTheta^{E_i}$ we use a variable $X^{E_i}_{\theta}$,
 whose purpose is to suggest    
the number of all $E_i$-classes $\str{C}$ such that the $M_\phi$-cutting of $\ctype{\str{C}}$ equals $\theta$, 
 and, for each $\alpha\in \AAA_{\bnTheta}$, we use a variable
 $Y^{E_i}_\alpha$ to count how many elements of $1$-type $\alpha$ are required to
form the suggested $E_i$-classes.

For every  $\alpha\in \AAA_{\bnTheta}$ and every $i$ ($1\leq i\leq k$), we put to
$\Gamma_{\bnTheta}$ the following equations and inequalities:
\begin{itemize}
	
\item $Y^{E_i}_\alpha$ is the  \emph{suggested}  number of realizations of $\alpha$
     in $E_i$-classes:
\begin{equation*}
{ Y^{E_i}_\alpha=\sum_{\theta \in \nTheta^{E_i}} \theta(\alpha) \cdot
X^{E_i}_{\theta}}, \tag{E0}
\end{equation*}

\item

if $\alpha$ is royal for $E_i$-classes, then there is an $E_i$-class in which $\alpha$ appears: \quad 
\begin{equation*}
 Y^{E_i}_\alpha \ge 1,  \tag{E1a}
\end{equation*}

\item
if $\alpha$ is not royal for $E_i$-classes, then there is an $E_i$-class in which $\alpha$ appears at least $M_\phi$-times: 
\begin{equation*}
\sum_{\theta \in \nTheta^{E_i}, \theta(\alpha)=M_\phi} X_{\theta}^{E_i} \ge
1,  \tag{E1b}
\end{equation*}

\item  if $\alpha$ is royal for $E_i$-classes, then, for every $j \not=i$, the suggested number of realizations of $\alpha$ in  $E_i$-classes is at least equal
to the suggested number of realizations of $\alpha$ in $E_j$-classes:
\begin{equation*}
Y^{E_i}_\alpha \ge Y^{E_j}_\alpha \tag{E2}.
\end{equation*}
\end{itemize}

The intuition behind the inequalities (E2) is as follows: when a 1-type appears in $\nTheta^{E_i}$ and is royal for $E_i$-classes then $Y^{E_i}_\alpha$
 corresponds to the {\em exact} number of realisation of $\alpha$ in $E_i$-classes. 
 When $\alpha$ is not royal, then  $Y^{E_i}_\alpha$ is just a lower bound, as in this case inequality (E1b) will enforce our model to contain  an $E_i$-class $\str{C}$ 
with at least $M_\phi$-realizations of $\alpha$, and we will then be allowed to extend it by additional realizations of $\alpha$ (which will not affect the  $M_\phi$-cutting of $\ctype{\str{C}})$.
 Please note that 
 if $\alpha$ is royal for both $E_i$-classes
 and $E_j$-classes ($i \not= j$) then we write two inequalities of
 the form (E2) that give $Y^{E_i}_\alpha = Y^{E_j}_\alpha$.

One can easily check that the following  holds.
\begin{proposition}\label{f:model-to-solution}
Let  $\phi$ be a \GFtEG{}-formula $\phi$ in normal form, $\str{A}$ be a finite model of $\phi$ and
$\bnTheta = (\nTheta^{E_1},
\nTheta^{E_2}, \ldots, \nTheta^{E_k})$, where $\nTheta^{E_i}$ is the set of the $M_\phi$-cuttings of the counting types realized in  $\str{A}$ by $E_i$-classes. Then $\Gamma_{\bnTheta}$ has (in particular) the following non-negative integer solution:

$X^{E_i}_{\theta} =r^{E_i}_{\theta} = |\{\str{C}\mid \str{C} \text{ is an $E_i$-class in }  \str{A} \text{ and }  \theta \text{ is the $M_\phi$-cutting of } \ctype{\str{C}}\}|$,    

$Y^{E_i}_\alpha= \sum_{\theta \in \nTheta^{E_i}} \theta(\alpha)
\cdot r^{E_i}_{\theta}$.
\end{proposition}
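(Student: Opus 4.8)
The plan is to verify directly that the proposed assignment of values to the variables $X^{E_i}_\theta$ and $Y^{E_i}_\alpha$ satisfies every inequality and equation of $\Gamma_{\bnTheta}$. Since the assignment is manifestly non-negative and integral, all that remains is feasibility. The verification splits according to the four families of constraints (E0), (E1a), (E1b), (E2), and for each we argue that the defining property of the numbers $r^{E_i}_\theta$ (``number of $E_i$-classes in $\str{A}$ of $M_\phi$-counting type $\theta$'') forces the constraint to hold.

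First, (E0): the value assigned to $Y^{E_i}_\alpha$ is by definition $\sum_{\theta\in\nTheta^{E_i}}\theta(\alpha)\cdot r^{E_i}_\theta$, which is exactly $\sum_{\theta}\theta(\alpha)\cdot X^{E_i}_\theta$ under the substitution $X^{E_i}_\theta = r^{E_i}_\theta$; so (E0) holds as an identity. Moreover this common value has a concrete meaning: since $\nTheta^{E_i}$ is precisely the set of $M_\phi$-counting types realized by $E_i$-classes of $\str{A}$, and the $r^{E_i}_\theta$ count how many classes realize each, the sum $\sum_\theta\theta(\alpha)\cdot r^{E_i}_\theta$ counts (with multiplicity, cutting each class's contribution at $M_\phi$) the realizations of $\alpha$ inside $E_i$-classes of $\str{A}$. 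I will use this reading below.

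For (E1a): if $\alpha$ is royal for $E_i$-classes then, by definition of royal, $\alpha$ appears in some $\theta\in\nTheta^{E_i}$, i.e.\ $\theta(\alpha)\ge 1$ and $r^{E_i}_\theta\ge 1$ (that $\theta$ is realized at all), so $Y^{E_i}_\alpha\ge\theta(\alpha)\cdot r^{E_i}_\theta\ge 1$. For (E1b): if $\alpha$ is not royal for $E_i$-classes but does appear in $\bnTheta$ --- here one must be slightly careful, and I expect this to be the only genuinely delicate point --- we need that $\alpha$ appears in $\nTheta^{E_i}$ at all. If $\alpha$ appears in $\nTheta^{E_i}$, then since it is not royal there is $\theta\in\nTheta^{E_i}$ with $\theta(\alpha)=M_\phi$, and as that $\theta$ is realized we have $r^{E_i}_\theta\ge 1$, giving $\sum_{\theta:\theta(\alpha)=M_\phi}X^{E_i}_\theta\ge 1$. (The case where $\alpha\in\AAA_{\bnTheta}$ but $\alpha$ does not appear in $\nTheta^{E_i}$ needs to be inspected against the precise placement of the quantifier ``$\alpha\in\AAA_{\bnTheta}$'' in the constraint schema; I would check that (E1b) is only imposed for $\alpha$ appearing in $\nTheta^{E_i}$, or else that the non-royal clause is vacuous otherwise.)

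Finally, (E2): suppose $\alpha$ is royal for $E_i$-classes and $j\ne i$. Because $\alpha$ is royal for $E_i$, every $E_i$-class realizes $\alpha$ fewer than $M_\phi$ times, so cutting at $M_\phi$ loses nothing and $Y^{E_i}_\alpha$ equals the \emph{exact} total number of realizations of $\alpha$ lying in $E_i$-classes of $\str{A}$ --- but in a partition by an equivalence relation every element of $\str{A}$ lies in exactly one $E_i$-class, so $Y^{E_i}_\alpha$ is exactly the number of elements of $\str{A}$ of $1$-type $\alpha$. On the other hand $Y^{E_j}_\alpha$ counts realizations of $\alpha$ inside $E_j$-classes, cut at $M_\phi$ per class, which is at most the total number of realizations of $\alpha$ in $\str{A}$ (every realization lies in exactly one $E_j$-class, and cutting only decreases the count). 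Hence $Y^{E_i}_\alpha\ge Y^{E_j}_\alpha$, and if $\alpha$ is royal for both we get the two opposite inequalities and therefore equality, consistently. This closes the verification; the main obstacle, as noted, is just making sure the indexing conventions in the constraint schema match the intended reading of the $r^{E_i}_\theta$, after which each line is a short counting argument.
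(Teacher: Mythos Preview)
Your verification is correct and matches what the paper treats as routine: the paper states the proposition with the remark ``One can easily check that the following holds'' and gives no further proof. Your case analysis for (E0), (E1a), (E1b), (E2) is exactly the intended argument.

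The one point you flag as delicate in (E1b) is easily resolved: if $\alpha\in\AAA_{\bnTheta}$ then $\alpha$ is realized somewhere in $\str{A}$ (since some $E_j$-class contains an element of type $\alpha$), and that element also lies in a unique $E_i$-class, whose $M_\phi$-counting type then witnesses that $\alpha$ appears in $\nTheta^{E_i}$. Hence for $\alpha\in\AAA_{\bnTheta}$, ``not royal for $E_i$-classes'' reduces to the existence of some $\theta\in\nTheta^{E_i}$ with $\theta(\alpha)=M_\phi$, and since every $\theta\in\nTheta^{E_i}$ is realized we have $r^{E_i}_\theta\ge 1$, giving (E1b).
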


Our aim now is to show that solutions of $\Gamma_{\bnTheta}$  correspond to certain {\em neat} models of $\phi$, defined below. 

\begin{definition}\label{def:neat}
	Let $\phi$	be a \GFtEG-sentence in normal form.
	A special ramified  structure $\fA\models \phi$ is a 
	{\em \neat model} for $\phi$ if the following conditions hold
	\begin{enumerate}[(i)]
		\item  the number of  distinct $M_\phi$-cuttings of counting types realized in $\fA$ is bounded by $\mathfrak{f}(|\phi|)$, for a fixed exponential function $\mathfrak{f}$. 
		\item   the size of $A$ is at most doubly exponential in $|\phi|$. 
	\end{enumerate}
\end{definition}
One could refer at this point to Example~\ref{ex:large-no-equality}, where it is shown that condition (i) of the above definition could not be obtained if the special symbols of $\phi$ were not declared to be equivalences, but rather arbitrary transitive relations.   

For technical reasons we additionally distinguish an equivalence relation $E_i$     
and a single $M_\phi$-counting type $\theta\in\nTheta^{E_i}$ that will be required to appear in a model (as the
$M_\phi$-cutting of the counting type of some $E_i$-class).     
This will become important in Section \ref{sec:transitive}.
Accordingly,  let $\uGamma{\bnTheta}{i,\theta}$  
be the system of linear inequalities obtained from $\Gamma_{\bnTheta}$ by adding the inequality: 
\begin{equation*}
X^{E_i}_{\theta} \geq 1 \tag{E3}.
\end{equation*}

Observe that the number of inequalities in $\Gamma_{\bnTheta}(i,\theta)$ is polynomial in  $|\AAA_\phi|$, 
and
the number of variables is  bounded by $|\AAA_\phi|+(M_\phi+1)^{|\AAA_\phi|}$ (i.e.,~doubly exponential in $|\phi|$). 
Let us also make a small observation concerning solutions of 
$\uGamma{\bnTheta}{i,\theta}$ in the lemma below. 

\begin{lemma}\label{lem:equations} The following conditions are equivalent 
	\begin{enumerate}[(i)]
	\item $\uGamma{\bnTheta}{i,\theta}$ has a non-negative integer solution.
	\item $\uGamma{\bnTheta}{i,\theta}$ has a non-negative
	rational solution. 
	\item $\uGamma{\bnTheta}{i,\theta}$ has 
	a non-negative integer solution with at most $m$ non-zero unknowns and in which all values are bounded by 
	$m(m\cdot M_\phi)^{2m+1}$, where  $m$ is the number of inequalities in $\uGamma{\bnTheta}{i,\theta}$. 
	\end{enumerate}
\end{lemma}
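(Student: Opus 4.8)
The plan is to prove the three implications (iii)$\Rightarrow$(i)$\Rightarrow$(ii)$\Rightarrow$(iii), of which only one is substantial. The implications (iii)$\Rightarrow$(i) and (i)$\Rightarrow$(ii) are immediate: a non-negative integer solution with the additional size bound is in particular a non-negative integer solution, and every non-negative integer solution is a fortiori a non-negative rational solution. So the real content is (ii)$\Rightarrow$(iii): from a rational solution we must extract an integer solution that is moreover sparse (at most $m$ non-zero unknowns) and small (all values bounded by $m(m\cdot M_\phi)^{2m+1}$).

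For (ii)$\Rightarrow$(iii) the key observation is that the system $\uGamma{\bnTheta}{\theta}$ is \emph{homogeneous} in a useful sense: inspecting the inequalities (E0), (E1a), (E1b), (E2), (E3), every constant appearing on the right-hand side is either $0$ (in (E0) and (E2), which are really of the form ``linear combination $\ge$ linear combination'' or ``$=$'') or $1$ (in the threshold inequalities (E1a), (E1b), (E3)). Crucially, the inequalities of the form ``$\ge 1$'' are the only ones carrying a non-zero constant, and they are \emph{scale-stable}: if a vector of non-negative rationals satisfies them, so does any positive integer multiple of it. Hence from a rational solution $v$ I would first clear denominators by multiplying $v$ by a common denominator $N$, obtaining a non-negative \emph{integer} solution $Nv$ — the equalities (E0) and the homogeneous inequalities (E2) are preserved under scaling, and the ``$\ge 1$'' inequalities only get easier. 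This already gives a non-negative integer solution, re-establishing (i), but with no control on sparsity or magnitude.

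To get the sparsity and size bounds I would now invoke Proposition~\ref{pr:papadimitriou} (Papadimitriou's bound). The system $\uGamma{\bnTheta}{\theta}$ has $m$ inequalities, and all its coefficients and constants are bounded in absolute value by $M_\phi$: the coefficients in (E0) are the values $\theta(\alpha)\in\{0,\ldots,M_\phi\}$, the coefficients in (E2) and in the threshold inequalities are $0$ or $1$, and $M_\phi \ge 1$. Having just shown the system admits \emph{some} non-negative integer solution, Proposition~\ref{pr:papadimitriou} yields a non-negative integer solution in which every unknown is bounded by $n(m\cdot M_\phi)^{2m+1}$, where $n$ is the number of unknowns; bounding $n\le m$ (or, if that bound on the number of variables versus inequalities is not literally true, absorbing the factor appropriately) gives the stated bound $m(m\cdot M_\phi)^{2m+1}$. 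The sparsity claim — at most $m$ non-zero unknowns — would come either from the same vertex/basic-feasible-solution machinery underlying Proposition~\ref{pr:papadimitriou} (a basic feasible solution of a system with $m$ constraints has at most $m$ non-zero coordinates), or by a standard argument: take a solution with the fewest non-zero unknowns; if it had more than $m$, the positive coordinates would satisfy a non-trivial linear dependence among the columns, and moving along that dependence in the direction that keeps all coordinates non-negative would zero out a coordinate while preserving all equalities and keeping the ``$\ge 1$'' inequalities satisfiable after a further small adjustment — contradicting minimality.

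The step I expect to be the main (though still routine) obstacle is reconciling the sparsity bound with Papadimitriou's statement as quoted: Proposition~\ref{pr:papadimitriou} as stated only bounds the magnitude of the values, not the number of non-zero unknowns, so I would need to either cite the proof of that proposition (which does produce a basic solution) or give the short independent argument sketched above, and then check that the magnitude bound for the sparse solution is still $m(m\cdot M_\phi)^{2m+1}$ — which it is, since restricting to the support of a basic solution leaves a system with at most $m$ inequalities in at most $m$ unknowns, to which Proposition~\ref{pr:papadimitriou} applies directly. Everything else is bookkeeping about which constants appear in $\uGamma{\bnTheta}{\theta}$ and the trivial scaling argument for clearing denominators.
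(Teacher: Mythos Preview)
Your proposal is correct and follows essentially the same approach as the paper: clear denominators using the scale-stability of the system (all inequalities are of the form $\sum c_i x_i = 0$ or $\sum c_i x_i \ge b$ with $b \ge 0$), obtain sparsity via the basic-feasible-solution fact that a feasible system of $m$ inequalities has a solution with at most $m$ non-zero unknowns, restrict to that support, and apply Proposition~\ref{pr:papadimitriou} to the resulting system in at most $m$ unknowns. The only presentational difference is that the paper secures sparsity first (citing a standard LP reference rather than sketching the argument) and clears denominators afterwards, whereas you clear denominators first and then argue sparsity; either order works, and your correct instinct that the direct bound $n \le m$ on the number of unknowns is false (it is in fact doubly exponential) is exactly why the restriction step is needed before invoking Proposition~\ref{pr:papadimitriou}.
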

\begin{proof}
(i) $\Rightarrow$ (ii) and (iii) $\Rightarrow$ (i) are obvious.

(ii) $\Rightarrow$ (iii): It follows from basic facts of algebra  that if $\uGamma{\nTheta}{i,\theta}$ has a non-negative rational solution, then it has one in which the number of non-zero unknowns is not greater than the number of inequalities (see, e.g., \citeN{Paris}, Chapter 10).
Additionally, observe that all equations  and inequalities  in $\uGamma{\bnTheta}{i,\theta}$ are actually either  of the
form $\sum_i c_i x_i =0$ or $\sum_i c_i x_i\ge b$, with $b\ge 0$.
So, if $\uGamma{\nTheta}{i,\theta}$ has a rational solution, one can get an integer
solution by multiplying the rational solution  by the product of all the denominators. 
Let $X$ be such a solution. 
 Consider now the system  $\Gamma'$   obtained from $\uGamma{\bnTheta}{i,\theta}$ by inserting zeroes for all unknowns that have value zero in $X$. $\Gamma'$ has now at most $m$ unknowns. 
 Obviously, the above  solution of $\uGamma{\bnTheta}{i,\theta}$ restricted to the unknowns of $\Gamma'$ is a solution of $\Gamma'$. Applying Proposition \ref{pr:papadimitriou}, $\Gamma'$ has a solution $Y$ where all values are bounded as required. $Y$ can be extended to a solution of $\uGamma{\bnTheta}{i,\theta}$ by adding zero values for all unknowns that do not appear in $\Gamma'$. 
\end{proof}

\subsection{Neat model construction}\label{ss:neat-model}
Now we proceed with the main task.

\begin{lemma}\label{lem:from-solution-to-model} 
	Let  $\phi$ be a \GFtEG{}-formula $\phi$ in normal form. 
	 Let
	$\bnTheta = (\nTheta^{E_1},
	\nTheta^{E_2}, \ldots, \nTheta^{E_k})$ be a $\phi$-admissible  tuple of sets of $M_\phi$-counting types. 
	Let $\theta\in \nTheta^{E_i}$ for some $E_i$. 
	If the system $\uGamma{\bnTheta}{i,\theta}$ has a non-negative integer solution, then there is a neat model $\str{A}\models \phi$ such that
\begin{enumerate}[(i)]
\item 
for every $E_i$, for every $E_i$-class $\str{C}$ of $\str{A}$ the  $M_\phi$-cutting  of $\ctype{\str{C}}$ is in $\nTheta^{E_i}$,
\item  $\theta$ equals the $M_\phi$-cutting of $\ctype{\str{C}}$ for some $E_i$-class $\str{C}$ of $\str{A}$.

\end{enumerate}	
	\end{lemma}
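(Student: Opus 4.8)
The plan is to adapt the construction from the proof of Theorem~\ref{theorem:noequality}, using now a solution of $\uGamma{\bnTheta}{\theta}$ — instead of a given model — to fix a \emph{base domain}, and replacing the free use of part~(i) of Lemma~\ref{l:admissibility} by the more delicate part~(ii). First I would fix a non-negative integer solution $(x^{E_i}_{\theta'})$, $(y^{E_i}_\alpha)$ of $\uGamma{\bnTheta}{\theta}$; by Lemma~\ref{lem:equations}(iii) it may be assumed to have at most $m$ non-zero unknowns and all values bounded by $m(m\cdot M_\phi)^{2m+1}$, where $m$, the number of inequalities, is polynomial in $|\AAA_\phi|$. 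Note that, since $\uGamma{\bnTheta}{\theta}$ is solvable, inequality (E1b) forces every $\alpha\in\AAA_{\bnTheta}$ to appear in each $\nTheta^{E_i}$ — with some entry equal to $M_\phi$ whenever $\alpha$ is not royal for $E_i$ — and in particular $n_\alpha:=\max_i y^{E_i}_\alpha\ge 1$ for every $\alpha\in\AAA_{\bnTheta}$. The base domain $D$ consists of $n_\alpha$ elements with predefined $1$-type $\alpha$, one block per $\alpha\in\AAA_{\bnTheta}$, with no $2$-types fixed yet; thus $|D|\le|\AAA_\phi|\cdot m(m M_\phi)^{2m+1}$, doubly exponential in $|\phi|$.

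Next I would show that, for each $i$, $D$ can be partitioned into $\phisyms{E_i}$-admissible $E_i$-classes each realizing an $M_\phi$-counting type from $\nTheta^{E_i}$, with at least one realizing the distinguished $\theta$. For every $\theta'\in\nTheta^{E_i}$ with $x^{E_i}_{\theta'}>0$ I carve out $x^{E_i}_{\theta'}$ pairwise disjoint \emph{groups}, a group holding $\theta'(\alpha)$ elements of type $\alpha$ for each $\alpha$; by (E0) this consumes exactly $y^{E_i}_\alpha$ elements of type $\alpha$ across all groups, and by (E3) a group of type $\theta$ is among them. The leftover \emph{redundant} elements — $n_\alpha-y^{E_i}_\alpha$ of type $\alpha$ — must be absorbed: if $n_\alpha>y^{E_i}_\alpha$ then $\alpha$ cannot be royal for $E_i$ (by (E2), royalty would give $y^{E_i}_\alpha\ge y^{E_j}_\alpha$ for all $j$, i.e.\ $y^{E_i}_\alpha=n_\alpha$), so (E1b) provides a group whose counting type $\theta'$ has $\theta'(\alpha)=M_\phi$, into which I throw all redundant realizations of $\alpha$. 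Each group is then turned into an actual $E_i$-class: starting from a $\phisyms{E_i}$-admissible class of exact counting type $\theta'$ (Definition~\ref{d:tupleadmissibility}(i)) and adding the redundant elements produces a class whose exact counting type safely extends $\theta'$ — the only entries that grow were already equal to $M_\phi\ge 2$, hence never $1$ — so by Lemma~\ref{l:admissibility}(ii) it remains $\phisyms{E_i}$-admissible, and its $M_\phi$-counting type is still $\theta'\in\nTheta^{E_i}$. Finally every such class is replaced by its $E_i$-splice (Fact~\ref{fact:admisible-type}), which keeps it $\phisyms{E_i}$-admissible and removes stray $E_j$-connections.

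I would then assemble $\str{A}_0$ exactly as in Theorem~\ref{theorem:noequality}: place copies of $D$ on a $k$-dimensional grid of side $|D|$, the position $(x_1,\dots,x_k)$ carrying the $1$-type of $a_{(x_1+\dots+x_k)\bmod|D|}$ for a fixed enumeration $a_0,\dots,a_{|D|-1}$ of $D$, so that every axis-parallel line in direction $i$ is a copy of $D$, partitioned into $\phisyms{E_i}$-admissible $E_i$-classes as above; all remaining pairs of distinct elements get binary-free $2$-types. Then $\str{A}_0\models\phisym$ by Fact~\ref{fact:models-admissible}, condition~(i) holding because every $\alpha\in\AAA_{\bnTheta}$ is realized and Definition~\ref{d:tupleadmissibility}(ii) supplies the witnessing $1$-types. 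To go from $\phisym$ to $\phi$ I take three blocks of $h$ copies of $\str{A}_0$ ($h$ the number of $\cfe$-conjuncts) and provide $\cfe$-witnesses in the circular fashion of Theorem~\ref{theorem:noequality}: Definition~\ref{d:tupleadmissibility}(iii) yields, for every relevant $1$-type, a witnessing $1$-type together with a $\phi_\forall$-admissible connecting $2$-type $\beta$ that already negates all $E_j xy$ and $E_j yx$, so adding these $2$-types (as $Q$-splices) between blocks disturbs no $E_i$-class; the other new pairs get binary-free $2$-types. Satisfaction of $\phi$ follows conjunct by conjunct as in Theorem~\ref{theorem:noequality} ($\ce$ from~(ii) of Definition~\ref{d:tupleadmissibility}; $\cfes$ since all $E_i$-classes are $\phisyms{E_i}$-admissible; $\cfe$ from the circular step; $\cff$ since every $2$-type used is $\phi_\forall$-admissible or binary-free), and conditions (i)–(ii) of the conclusion hold by construction. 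For neatness: condition~(i) holds because two distinct grid positions sharing both an $E_i$-line and an $E_j$-line ($i\ne j$) would coincide; condition~(ii) because the $M_\phi$-counting types realized are among the $\le m$ types $\theta'$ with some $x^{E_i}_{\theta'}>0$, hence exponentially bounded; condition~(iii) because $|A|=3h\cdot|D|^k$ with $k\le|\phi|$, still doubly exponential.

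The main obstacle is the partition step — in particular, seeing that surplus realizations of a $1$-type can only be absorbed into classes in which that type already occurs $M_\phi$ times, which is exactly what the inequalities (E1)/(E2) guarantee (through non-royalty) and exactly what makes part~(ii) of Lemma~\ref{l:admissibility} applicable while leaving the $M_\phi$-counting type, hence membership in $\nTheta^{E_i}$, intact. The rest is a routine transfer of the equality-free construction.
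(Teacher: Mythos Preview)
Your proposal is correct and follows essentially the same approach as the paper: fix a bounded solution via Lemma~\ref{lem:equations}(iii), build the base domain from the values $n_\alpha=\max_i y^{E_i}_\alpha$, partition it for each $i$ into groups of shape $\theta'$ using (E0), absorb redundant elements into groups with $\theta'(\alpha)=M_\phi$ via (E1b) and the non-royalty argument from (E2), invoke part~(ii) of Lemma~\ref{l:admissibility} for safe extensions, take $E_i$-splices, assemble the $k$-dimensional grid, and finish with the three-fold circular witness scheme for $\cfe$-conjuncts using Definition~\ref{d:tupleadmissibility}(iii). Your treatment of the neatness conditions---in particular tying condition~(ii) to the bound on the number of non-zero unknowns from Lemma~\ref{lem:equations}(iii)---is exactly what the paper does implicitly, and the size bound $3h\,|D|^k$ matches the paper's $3h(B|\AAA_\phi|)^k$.
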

	\begin{proof} Let 
		$X^{E_i}_{\theta} = r^{E_i}_{\theta}$ and
		$Y^{E_i}_\alpha=r^{E_i}_\alpha$,  ($1\leq i\leq k$, $\theta\in \bnTheta$) be a bounded  non-negative integer solution of  $\uGamma{\bnTheta}{i,\theta}$ as promised by part (iii) of  Lemma~\ref{lem:equations}.  We denote by $B$ the maximum value from the solution; it is ensured that $B$ is doubly exponential in $|\phi|$. 
		We will build a model $\str{A}$ for $\phi$ in which for each $i$ and $\theta\in \bnTheta^{E_i}$ there will be an $E_i$-class having a counting type which cut  to  $M_\phi$ equals      
		$\theta$, only if $r^{E_i}_{\theta}>0$.     
		This will ensure that the number of distinct $M_\phi$-cuttings of counting types       
		 realized in $\str{A}$ will be properly bounded, as required by part (i)  of Definition~\ref{def:neat}. 
	
		The construction of the required model $\str{A}\models \phi$ comprises of similar steps as the construction  in the proof of   Theorem \ref{theorem:noequality}. 
	 
	 We first build a model $\str{A}_0 \models \phisym$. Again as in Theorem \ref{theorem:noequality} we will construct $\str{A}_0$ out of some number of copies
	of a \emph{base domain}, which will be a set of elements with their $1$-types predefined.      
	 Here, the number $n_\alpha$
	 of realizations of a 1-type $\alpha$ in the base domain is obtained from the solution of $\uGamma{\bnTheta}{i,\theta}$: $n_\alpha= \max\limits_i \{ r^{E_ i}_\alpha\}$, for every $\alpha \in \AAA_{\bnTheta}$. Inequalities (E0) and (E1a) or (E1b) ensure that $n_\alpha>0$.

	Observe that the base domain can be partitioned for every $i$ into  $\phisyms{E_i}$-admissible $E_i$-classes. First, let us see that for every $i$,
		the base domain can be partitioned into $ \sum\limits_{\theta \in
			\nTheta^{E_i}} r^{E_i}_{\theta}$ disjoint parts, so that for each
		part $P$ there exists a counting type $\theta'$, safely
		extending some $\theta \in \nTheta^{E_i}$, such that the number of
		elements of every atomic 1-type $\alpha$ in $P$ equals $\theta'(\alpha)$.
		The desired partition is obtained as follows. We create
		$r^{E_i}_{\theta}$ parts for every $\theta \in \nTheta^{E_i}$. To
		each of this parts we put exactly $\theta(\alpha)$ elements of type
		$\alpha$, for every $\alpha \in \AAA_{\bnTheta}$. 
		Note, that, because of the choice
		of the numbers $n_\alpha$, we have enough copies of elements of every
		type $\alpha$. After this step we may have some elements remaining, call them \emph{redundant}.
		Observe that none of the
		types of the redundant elements is royal for $E_i$-classes.
		Indeed, if $\alpha^*$ is royal for $E_i$-classes then 
		 due to the inequalities of the form (E2)
		the value of $n_{\alpha^*}$ equals $r^{E_i}_{\alpha^*}$, and since by (E0) we have
		$r^{E_i}_{\alpha*}=\sum_{\theta \in \nTheta^{E_i}} \theta(\alpha^*) \cdot
    r^{E_i}_{\theta}$, all elements of type $\alpha^*$ from the base domain are required to build the suggested $E_i$-classes. 
		All the
	  redundant elements of type $\alpha$ are joined to a part which contains
		at least $M_\phi$ elements of $\alpha$.  Note that such a part exists due to
		inequalities of the form (E1b).
				To define the structure on each of the classes we use the fact
		that each variable $X^{E_i}_\theta$ in the system corresponds to a
		 $\phisyms{E_i}$-admissible $M_\phi$-counting type $\theta^{E_i}$
		and, (if necessary) part (ii) of Lemma \ref{l:admissibility}. Moreover, to ensure that the model is ramified 
		each of the $E_i$-classes is defined to be an $E_i$-splice of the structures given by Definition \ref{d:admissible}.

		Now, exactly as in Theorem \ref{theorem:noequality} and using the above observations, we form a ramified   $k$-dimensional grid structure $\str{A}_0\models \phisym$ that satisfies conditions (i)-(ii) of our lemma.

	Finally, we construct $\str{A}$ taking three sets of copies of $\str{A}_0$,
		each consisting of $h$ elements, where $h$ is the number of
		conjuncts of $\phi$ of the form $\cfe$,
		$A=A_0 \times \{0,1,\ldots,h-1\} \times \{0,1,2\}$; 
		we can suppose $h>0$ and provide witnesses 
		for all elements in $A$ for conjuncts of $\phi$ of the
		form $\cfe$ in a circular way, as in the proof of Theorem \ref{theorem:noequality}. 
	We argue that this is always possible. 
		
		Let us take for example an element $(a,i,j)$.
		Let $\delta_0, \ldots, \delta_{h-1}$ be all conjuncts of $\phi$ of
		the form $\cfe$, $\delta_m=\forall x (\ung(x) \rightarrow \exists y
		(\bing(x,y) \wedge \psi(x,y)))$. Let $\alpha=\type{\str{A}_0}{a}$. For $0 \le m < h$,
		if $\str{A}_0 \models \ung(a) \wedge \neg(\bing(a,a) \wedge \psi(a,a))$,
		we choose $\alpha'_m \in \AAA_{\bnTheta}$,  
		and $\beta_m \in \BBB$, whose existence
		is ensured by condition (iii) from Definition \ref{d:tupleadmissibility} of $\phi$-admissibility of $\bnTheta$. By an appropriate inequality of
		the form (E1a) or (E1b),  $\alpha_m'$ is realized in $\str{A}_0$,  say,
		by an element $b_m$. We set $\beta((a,i,j), (b_m,m,j+1 \;
		\mbox{mod} \; 3 )) :=\beta_m$. 
		The type $\beta_m$ contains $\neg E_ixy$ or $\neg E_iyx$ for every $i$, so it does not affect the equivalence relations already defined.
		
		We complete $\str{A}$
		by setting binary-free 2-types for all remaining pairs of elements. 
		This step also does not affect the equivalence classes transferred to $\str{A}$ from  $\str{A}_0$, 
	so $\str{A}$ evidently is a ramified structure that  satisfies conditions (i) and (ii) of our lemma. 
To see that $\str{A}$ is indeed a model of $\phi$ note that conjuncts of the form $\ce$ are satisfied due to
condition (ii) from Definition \ref{d:tupleadmissibility}  of $\phi$-admissibility of $\bnTheta$, conjuncts of the form $\cfes$ are satisfied due
to condition (i) from the same definition, for conjuncts of the form $\cfe$ we take care in the paragraph above,
and conjuncts of the form $\cff$ are satisfied since we use in $\str{A}$ $2$-types which are either explicitly required to be $\phi_\forall$-admissible 
or are binary-free.

 Finally, note that $|A|=3h(B|\AAA_\phi|)^k$, hence it is doubly exponential in $|\phi|$, satisfying condition (ii) of Definition~\ref{def:neat}. 
 So, $\str{A}$ is a neat model as required. 
\end{proof}

\begin{corollary}\label{lem:neat}
Every finitely satisfiable \GFtEG-sentence $\phi$ has a model of size at most doubly exponential in $|\phi|$. 
\end{corollary}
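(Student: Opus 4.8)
The plan is to assemble the results of this section with the normal-form reduction of Section~\ref{sec:normalforms}. We may assume $k\geq 1$: if there are no special symbols then $\phi$ is a \GFt-sentence, hence an \FOt-sentence, and the finite (exponential) model property of \FOt{} \cite{Mor75} already gives the claim.

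First I would pass to normal form. Applying Lemma~\ref{lem:normalform} to $\phi$ yields a disjunction $\Psi=\bigvee_{i\in I}\psi_i$ of normal-form \GFtEG-sentences with $\models\Psi\rightarrow\phi$, such that every model of $\phi$ expands to a model of $\Psi$ and $|\psi_i|=O(|\phi|\log|\phi|)$ for each $i$. Since $\phi$ is finitely satisfiable, fix a finite model of $\phi$; it expands to a finite model of $\Psi$, and hence to a finite model $\str{B}$ of some disjunct, which I will call $\psi$. Note that $\psi\models\phi$ because $\models\Psi\rightarrow\phi$, so it suffices to produce a small finite model of $\psi$.

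Next I would feed $\str{B}$ into the machinery developed above. For each $i$ let $\nTheta^{E_i}$ be the set of $M_\psi$-counting types realized by $E_i$-classes of $\str{B}$, and set $\bnTheta=(\nTheta^{E_1},\ldots,\nTheta^{E_k})$. By the Claim following Definition~\ref{d:tupleadmissibility}, $\bnTheta$ is $\psi$-admissible. Fix any $i$ and any $\theta\in\nTheta^{E_i}$; this is possible since $k\geq 1$ and $\str{B}$ is non-empty, so at least one $E_i$-class, and hence at least one $M_\psi$-counting type, occurs. By Proposition~\ref{f:model-to-solution}, $\Gamma_{\bnTheta}$ has a non-negative integer solution in which $X^{E_i}_{\theta}$ is the number of $E_i$-classes of $\str{B}$ carrying that counting type; as $\theta$ is realized in $\str{B}$ this value is at least $1$, so the same assignment also satisfies (E3) and is therefore a non-negative integer solution of $\uGamma{\bnTheta}{\theta}$.

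Finally, Lemma~\ref{lem:from-solution-to-model} applies to $\psi$, $\bnTheta$ and $\theta$, and produces a \neat model $\str{A}\models\psi$. By part~(iii) of Definition~\ref{def:neat} the size of $\str{A}$ is at most doubly exponential in $|\psi|$, hence at most doubly exponential in $|\phi|$ because $|\psi|=O(|\phi|\log|\phi|)$; and $\str{A}\models\phi$ since $\psi\models\phi$. I do not anticipate any real obstacle: the whole argument is bookkeeping, and the only point needing attention is checking that the size bound is expressed in the original $|\phi|$ rather than in the normalized $|\psi|$, which follows at once from the $O(|\phi|\log|\phi|)$ blow-up in Lemma~\ref{lem:normalform}.
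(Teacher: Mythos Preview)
Your proposal is correct and follows essentially the same route as the paper: reduce to normal form via Lemma~\ref{lem:normalform}, read off the tuple $\bnTheta$ of $M_\psi$-counting types from a finite model, invoke Proposition~\ref{f:model-to-solution} to obtain a solution of $\uGamma{\bnTheta}{\theta}$, and apply Lemma~\ref{lem:from-solution-to-model}. You are in fact more careful than the paper on several points---you explicitly check that the solution from Proposition~\ref{f:model-to-solution} satisfies (E3), you cite the Claim to get $\psi$-admissibility of the whole tuple (the paper only mentions $\phisyms{E_i}$-admissibility of the individual types via Remark~\ref{cor:admissible}), and you track the $O(|\phi|\log|\phi|)$ blow-up to get the bound in terms of $|\phi|$.
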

\begin{proof}
	By Lemma~\ref{lem:normalform} it suffices to consider $\phi$ in normal form. 
	Suppose $\str{A}\models \phi$, $\str{A}$ is finite. Let $\bnTheta = (\nTheta^{E_1},
\nTheta^{E_2}, \ldots, \nTheta^{E_k})$, where $\nTheta^{E_i}$ is the set of the $M_\phi$-cuttings of counting types  
realized in  $\str{A}$ by $E_i$-classes.  By Remark \ref{cor:admissible}
every element of every $\nTheta^{E_i}$ is $\phisyms{E_i}$-admissible. Additionally, fix some $\theta\in \nTheta^{E_1}$. 
Then the system 
$\Gamma_{\bnTheta}(i,\theta)$ has a non-negative integer solution (e.g.~the one corresponding to the model of $\str{A}$, cf.~Fact~\ref{f:model-to-solution}). 
	So, Lemma \ref{lem:from-solution-to-model} ensures that $\phi$ has a neat model, that in particular, has the required size. 
	\end{proof}

\begin{theorem}
	The finite satisfiability problem for \GFtEG{} is \NExpTime-complete.
\end{theorem}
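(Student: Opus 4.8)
The plan is as follows. The lower bound is immediate: finite satisfiability of \GFtEG{} is already \NExpTime-hard with a single equivalence symbol, since the satisfiability problem for that fragment is \NExpTime-hard \cite{Kie05} and, by the finite model property of \FOt{} with one equivalence relation \cite{KO12}, on those instances satisfiability and finite satisfiability coincide. For the upper bound the naive route --- guess a finite model and check it --- is too expensive: by Corollary~\ref{lem:neat} a smallest finite model may be doubly exponential, which would yield only a \TwoNExpTime{} bound. Instead I would guess and verify a succinct certificate of finite satisfiability that can be written down in exponential space and checked in exponential time.

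First I would use Lemma~\ref{lem:normalform}: compute the disjunction $\Psi=\bigvee_i\psi_i$ and nondeterministically fix one disjunct $\psi=\psi_i$; each $\psi_i$ has polynomial length, $\models\Psi\rightarrow\phi$, and every finite model of $\phi$ expands (over the same domain) to a finite model of some $\psi_i$, so $\phi$ is finitely satisfiable iff some $\psi_i$ is. Hence I may assume $\phi$ is a normal form \GFtEG-formula, so $\phi=\phieq$. The certificate then consists of: (a) for each equivalence symbol $E_i$, a set $\nTheta^{E_i}$ of $M_\phi$-counting types of cardinality bounded polynomially in $|\AAA_\phi|$, together with, for each $\theta$ in it, an explicitly described $E_i$-class of size at most $M_\phi|\AAA_\phi|$ witnessing that $\theta$ is $\phisyms{E_i}$-admissible; (b) a distinguished $\theta_0\in\nTheta^{E_1}$; and (c) a non-negative integer solution of $\Gamma_{\bnTheta}(\theta_0)$, where $\bnTheta=(\nTheta^{E_1},\dots,\nTheta^{E_k})$, with all values bounded by $m(m\cdot M_\phi)^{2m+1}$ for $m$ the number of inequalities of the system. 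Everything here is of exponential size: an $M_\phi$-counting type needs $|\AAA_\phi|\log(M_\phi{+}1)$ bits, a witnessing class has exponentially many elements, $\Gamma_{\bnTheta}(\theta_0)$ has only $O(k^2|\AAA_\phi|)$ inequalities, and the solution values, though doubly exponential as numbers, have only exponentially many bits.

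The verifier then checks, in exponential time: that each guessed $E_i$-class really is an $E_i$-class realising the associated $\theta$ and satisfying $\phisyms{E_i}$ (establishing condition~(i) of Definition~\ref{d:tupleadmissibility}); conditions~(ii) and~(iii) of Definition~\ref{d:tupleadmissibility}, which are quantifications over $\AAA_\phi$, $\BBB_\phi$ and the polynomially many conjuncts of $\phi$; and that the guessed values solve $\Gamma_{\bnTheta}(\theta_0)$. If all checks succeed, $\bnTheta$ is $\phi$-admissible and $\Gamma_{\bnTheta}(\theta_0)$ is solvable, so Lemma~\ref{lem:from-solution-to-model} produces a (neat) model of $\phi$; thus acceptance implies finite satisfiability. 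For the converse, given a finite $\str{A}\models\phi$, take $\nTheta^{E_i}$ to be the set of all $M_\phi$-counting types realised by $E_i$-classes of $\str{A}$; by Remark~\ref{cor:admissible} the tuple is $\phi$-admissible, and by Proposition~\ref{f:model-to-solution} the system $\Gamma_{\bnTheta}(\theta_0)$ has a non-negative integer solution for any fixed $\theta_0\in\nTheta^{E_1}$.

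The main obstacle is that the tuple $\bnTheta$ read off from a model may contain doubly exponentially many counting types, whereas the certificate must stay small, so I would shrink it by a support argument. By Lemma~\ref{lem:equations}(iii) the system $\Gamma_{\bnTheta}(\theta_0)$ has a solution with at most $m=O(k^2|\AAA_\phi|)$ non-zero unknowns; let $\bnTheta'$ be obtained from $\bnTheta$ by retaining only those $M_\phi$-counting types whose $X$-variable is non-zero in this solution (including $\theta_0$, which is non-zero by (E3)), so that each component of $\bnTheta'$ has polynomially many counting types. The crucial point is that $\AAA_{\bnTheta'}=\AAA_{\bnTheta}$: for any $\alpha$ appearing in some $\nTheta^{E_i}$, equation~(E0) together with (E1a) (if $\alpha$ is royal for $E_i$-classes) or (E1b) (otherwise) forces some $\theta\in\nTheta^{E_i}$ with $\alpha\in\theta$ and $X^{E_i}_\theta>0$, so $\alpha$ is retained. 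Hence conditions~(ii) and~(iii) of $\phi$-admissibility, which depend only on $\AAA_{\bnTheta}$, survive, condition~(i) is preserved under taking subsets, and the restriction of the solution solves $\Gamma_{\bnTheta'}(\theta_0)$. Each retained counting type, being $\phisyms{E_i}$-admissible, is realised by an $E_i$-class with at most $M_\phi$ copies of each of at most $|\AAA_\phi|$ one-types, hence of exponential size, which can be guessed. Thus $\bnTheta'$, the witnessing classes, $\theta_0$ and the bounded solution form a valid certificate that the nondeterministic algorithm can guess, placing finite satisfiability of \GFtEG{} in \NExpTime; together with the lower bound this yields \NExpTime-completeness.
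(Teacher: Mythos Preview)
Your proposal is correct and takes essentially the same approach as the paper: guess a polynomially-in-$|\AAA_\phi|$-sized $\phi$-admissible tuple $\bnTheta$ of $M_\phi$-counting types together with a bounded solution of $\Gamma_{\bnTheta}(\theta_0)$, check admissibility by guessing small witnessing classes, and invoke Lemma~\ref{lem:from-solution-to-model}. The only cosmetic difference is that the paper routes the size bound on $\bnTheta$ through the existence of a neat model (Corollary~\ref{lem:neat}, whose construction already uses Lemma~\ref{lem:equations}(iii)), whereas you argue the shrinking directly via the small-support solution; your explicit verification that royalty status and $\AAA_{\bnTheta}$ are preserved under this shrinking is a nice detail that the paper leaves implicit.
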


\begin{proof} 
	The lower bound follows from the complexity of the fragment without equality, established in Corollary \ref{c:nexpnoeq}.
	
	By Corollary \ref{lem:neat} every finitely satisfiable \GFtEG-formula $\phi$ has a \neat model. 
	Recall that in a \neat model the number of $M_\phi$-cuttings of the counting types realized by equivalence classes 
	 is at most $m$, where $m$ is
	the number of inequalities in  $\uGamma{\bnTheta}{i,\theta}$; in turn $m$ is bounded  polynomially in the number
	of $1$-types and thus exponentially in the length of $\phi$.
  As we argue below, this allows us to check existence of a \neat model in nondeterministic exponential time. 
	
	First, we nondeterministically choose a tuple $\bnTheta$ of sets of $\phi$-admissible $M_\phi$-counting types containing at most $m$ $M_\phi$-counting types. Admissibility of each element of the guessed sets can be checked nondeterministically by guessing appropriate structures; recall that the sizes of relevant structures
	are bounded exponentially, since they contain at most $M_\phi$ realizations of every $1$-type.  
	
	Then we distinguish an $M_\phi$-counting type $\theta$  from one of these sets, say $\nTheta^{E_i}$, write the system of inequalities $\uGamma{\bnTheta}{i,\theta}$, and check if it has a 
	 non-negative integer solution. 
	 The system $\uGamma{\bnTheta}{i,\theta}$ has exponential size. 
	Since integer programming is in \NP, we can nondeterministically 
	check the existence of a non-negative integer solution of
	$\uGamma{\bnTheta}{i,\theta}$ in time polynomial w.r.t.~the size of  $\uGamma{\bnTheta}{i,\theta}$.
	
	All these gives a non-deterministic procedure working in 
	exponential time. 
	
	In fact, we could simplify the last step of the above procedure, as by Lemma \ref{lem:equations},  $\uGamma{\bnTheta}{i,\theta}$ has a non-negative integer solution if and	only if it has a non-negative rational solution. So, it suffices to look for a non-negative rational solution that can be done deterministically in time polynomial w.r.t.~the size of  $\uGamma{\bnTheta}{i,\theta}$. 
	 \end{proof}

\section{Transitive Guards}\label{sec:transitive}
 
In this section we show that every finitely satisfiable \GFtTRG{}
formula $\phi$ has a model of size at most doubly
exponential in $|\phi|$ and that the finite satisfiability problem
for \GFtTRG{} is \TwoExpTime-complete.

\subsection{Basic notions and outline of the bounded size model construction}

Recall that for a given transitive and reflexive special symbol $T$, a $T$-clique in a structure $\str{A}$
is a maximal set $C \subseteq A$, such that for all $a,b \in C$ we have $\str{A} \models Tab \wedge Tba$; if $C$ is a $T$-clique in $\str{A}$ then $\str{C}=\str{A}\restr C$ is a $T$-class.    

In the previous section we considered counting types 
of equivalence classes. In this section we will analogously work with counting
types of classes formed by transitive cliques.
 Moreover, to be able to provide witnesses for conjuncts of the form $\cfes$ 
with non-symmetric special guards, we enrich the notion of the counting type 
with two subsets of 1-types, $\cal A$ and $\cal B$,  corresponding
to 1-types of elements located in $\str{A}$ {\em above},
respectively {\em below}, the elements of the clique. 

\begin{definition}      
An {\em enriched counting type} is a tuple $\otheta=(\theta, \cal{A}, \cal{B})$, where $\theta$ is a counting type and $\cal{A}, \cal{B}$ are
sets of $1$-types.
We say
	that a $T$-class $\str{C}$ in a structure $\str{A}$   \emph{realizes} (or \emph{has}) an enriched counting
	type $\otheta=(\theta, {\cal A}, {\cal B})$ if:    
	\begin{enumerate}[(i)]
		\item
		$\ctype{\str{C}}=\theta$, 
		\item
		${\cal A} = \{\alpha  \mid \exists  a \in C, b \in A \mbox{ such that } \;
		\type{\str{A}}{b}=\alpha  \wedge \str{A} \models Tba \wedge \neg Tab
		\},$
		\item
		${\cal B} = \{\alpha  \mid \exists  a \in C, b \in A \mbox{ such that } \;
		\type{\str{A}}{b}=\alpha  \wedge \str{A} \models Tab \wedge \neg Tba
		\}$.
	\end{enumerate}     
	In this case we also say that the enriched counting type $\otheta$ {\em enriches} the counting type $\theta$.
	Enriched counting types whose $\theta$-components are $n$-counting types are called $\emph{enriched}$ $n$-\emph{counting types}.
	For an enriched counting type $\otheta=(\theta, {\cal A}, {\cal B})$, the {\em cutting of} $\otheta$ to $n$ (or \emph{the} $n$-\emph{cutting} of $\otheta$) is the 
		enriched $n$-counting type $\otheta'=(\theta', \cal{A}, \cal{B})$, where $\theta'$  is the cutting of $\theta$ to $n$. 
\end{definition}

Similarly to the previous section, we usually work in contexts in which a normal form $\phi$ is fixed and then we are interested in 
$M_\phi$-cuttings of enriched counting types,     
 where  $M_\phi=3|\AAA_\phi||\phi|^3$.
Before introducing the next notion let us recall the definition of $\phisyms{T}$ and introduce one more fragment of the formula $\phi$ denoted by $\phifull{T}$ below. 
\begin{eqnarray*}
	\phisyms{T} &:=& \phiuniv \wedge \bigwedge\{\psi \mid ~ \psi \mbox{ is a conjunct of type $\cfes$ where }  \eta(x,y) = Txy\wedge Tyx\}\\ 
	\phifull{T} &:=& \phiuniv \wedge \bigwedge\{\psi \mid ~ \psi \mbox{ is a conjunct of type $\cfes$ where }  \eta(x,y) \vdash Txy \vee Tyx\} 
\end{eqnarray*}
To simplify notation we  occasionally allow ourselves to treat a counting type $\theta$ as a set consisting of those $1$-types which appear in $\theta$; this  applies in particular to expressions like $\alpha \in \theta$ or $\cal{A} \cup \theta$, for $\cal{A} \subseteq \AAA_\phi$.

\begin{definition} \label{d:tadmiss}
	We say that an enriched counting type      
	$\otheta=(\theta, \cal{A}, \cal{B})$   is
	$\phifull{T}$-admissible   if 
	\begin{enumerate}[(i)]
		\item $\theta$ is  $\phisyms{T}$-admissible     
		in the
		sense of Definition \ref{d:admissible}, 
		\item \begin{enumerate}[(a)]
		\item for any $\alpha \in \theta$ and  $\alpha' \in \cal{B}$
		there is a $\phi_\forall$-admissible $2$-type $\beta$ such that $\beta(x,y) \models Txy \wedge \neg Tyx \wedge \alpha(x) \wedge \alpha'(y)$; 
		\item analogously,  for any $\alpha \in \theta$ and  $\alpha' \in \cal{A}$ there is a $\phi_\forall$-admissible  $2$-type $\beta$ such that $\beta(x,y) \models  \neg Txy \wedge Tyx \wedge \alpha(x) \wedge \alpha'(y)$;
		\end{enumerate}
		\item 
		\begin{enumerate}[(a)]
			\item for every $\alpha \in \theta$, if $\alpha(x) \models \ung(x)$ then for every conjunct of $\phifull{T}$  of the form $\cfes$ of the shape 
			$\forall x (\ung(x) \rightarrow \exists y (T xy \wedge \neg T yx \wedge \psi(x,y)))$ 
			there is $\alpha' \in \cal{B}$ and a $\phi_\forall$-admissible $2$-type $\beta$
			such that $\beta(x,y) \models Txy \wedge \neg Tyx \wedge \psi(x,y) \wedge \alpha'(y)$;
			\item
			analogously, for every $\alpha \in \theta$, if $\alpha(x) \models \ung(x)$ then for every conjunct of the form $\cfes$  of the shape $\forall x (\ung(x) \rightarrow \exists y (T yx \wedge \neg T xy \wedge \psi(x,y)$
			there is $\alpha' \in \cal{A}$ and a $\phi_\forall$-admissible $2$-type $\beta$
			such that $\beta(x,y) \models Tyx \wedge \neg Txy \wedge \psi(x,y) \wedge \alpha'(y)$.
		\end{enumerate}
	\end{enumerate}
\end{definition}

\medskip

Let $\str{A}$ be a finite model of $\phi$. Our plan is to extract from $\str{A}$ 
a \emph{certificate for finite satisfiability of} $\phi$, collect its
important properties, and argue that these properties allow us to build a ramified 
model $\str{B} \models \phi$ of bounded size.

One of the crucial steps will be an application of some results obtained in the case of \GFtEG{}. Indeed, if we remove from $\str{A}$  non-symmetric transitive connections
then, in the obtained structure $\hat{\str{A}}$, transitive and reflexive relations
become equivalences and transitive cliques behave
like equivalence classes.
It can be easily seen that $\hat{\str{A}} \models \phieq$. 
For every enriched 
counting type $\otheta=(\theta, \cal{A}, \cal{B})$ realized in $\str{A}$ by a $T$-class we use 
Lemma
\ref{lem:from-solution-to-model}  to produce a neat  model of
$\phieq$  containing a distinguished $T$-class of counting type $\theta'$ such that the $M_\phi$-cuttings
of $\theta$ and $\theta'$ are identical.
These structures are the building blocks for the construction
of $\str{B}$. We arrange some number of copies of these structures on a cylindrical surface, and provide non-symmetric transitive witnesses in a regular manner, enforcing the enriched types of the distinguished $T$-classes to have a form $\otheta'=(\theta', {\cal A}', {\cal
	B}')$, for some $\cal{A}'$, $\cal{B}'$ such that ${\cal A}' \subseteq {\cal A}$ and ${\cal B}'
\subseteq {\cal B}$.    

Below we describe the construction in detail.

\subsection{Certificates of finite satisfiability}

Below we define the notion of a \emph{certificate of finite satisfiability} for a normal form \GFtTRG{} formula $\phi$ and  we observe that such a certificate can be easily extracted from an existing model of $\phi$. 

\begin{definition} \label{d:cert}
	We say that a tuple $(\boTheta, \cal{F})$,
	where
	\begin{itemize}
	\item
	$\boTheta = ( \oTheta^{T_1}, \ldots, \oTheta^{T_k} )$ is a list of non-empty sets of enriched $M_\phi$-counting types, and
	\item $\cal{F} = (\cal{F}_1, \ldots, \cal{F}_k)$ is a list of functions, such that for each $\otheta \in \oTheta^{T_i}$, $\cal{F}_i$  returns
	a finite structure  $\str{F}^{T_i}_{\otheta}$,
	\end{itemize}
	is \emph{a certificate} of
	(or \emph{certifies}) finite satisfiability for a normal form \GFtTRG{} formula $\phi$ if the following conditions hold:
	\begin{enumerate}[(i)]
		\item  for each $ \otheta=(\theta, \cal{A}, \cal{B})  \in \oTheta^{T_i}$: 
		\begin{enumerate}
			\item  $\otheta{}$  is $\phifull{T_i}$-admissible  
			\item  for every $\alpha \in \cal{B}$ there is $\otheta'=(\theta', \cal{A}', \cal{B}')$ in $\oTheta^{T_i}$ such that:
			\begin{itemize}
				\item  $\alpha \in \theta'$; $\alpha \not\in \cal{B}'$; 
				\item $\cal{B}' \cup \theta' \subseteq \cal{B}$;
				$\cal{A} \cup \theta \subseteq \cal{A}'$ 
			\end{itemize}
			\item  symmetrically, for every $\alpha \in \cal{A}$ there is $\otheta'=(\theta', \cal{A}', \cal{B}')$ in $\oTheta^{T_i}$ such that:
			\begin{itemize}
				\item  $\alpha \in \theta'$; $\alpha \not\in \cal{A}'$, 
				\item $\cal{B} \cup \theta \subseteq \cal{B}'$;
				$\cal{A}' \cup \theta' \subseteq \cal{A}$ 
			\end{itemize}
		\end{enumerate}
		\item for every $\str{F}^{T_i}_{\otheta}$:
		\begin{enumerate}
			\item all special symbols are interpreted as equivalences,
			\item $\str{F}^{T_i}_{\otheta}$  is a neat model of $\phieq$ (cf.~Definition \ref{def:neat}), 
			\item for every special $T_j$, for every $T_j$-class $\str{C}$ in $\str{F}^{T_i}_{\otheta}$, there is $\otheta' \in \oTheta^{T_j}$  enriching the $M_\phi$-cutting of $\ctype{\str{C}}$,    
		  \item  assuming $\otheta=(\theta, \cal{A}, \cal{B})$ for some $\cal{A}$, $\cal{B}$, there is a $T_i$-class in $\str{F}^{T_i}_{\otheta}$ whose counting type cut to  $M_\phi$ equals $\theta$.     
		\end{enumerate}
	\end{enumerate}
	
\end{definition}

The following observation follows directly from the definition.

\begin{claim} \label{c:tsize}
For every certificate $(\boTheta, \cal{F})$ the size of each $\oTheta^{T_i}$ and the sizes of structures returned by  each $\cal{F}_i$  are bounded doubly exponentially in the size of the signature. Thus the  size of a description of a certificate is  bounded doubly exponentially in the size of the signature.   
\end{claim}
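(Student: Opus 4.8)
The plan is a direct counting argument: bound separately the combinatorial part $\boTheta$ and the structural part ${\cal F}$ of a certificate, then bound the size of the whole description by concatenation.

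First I would bound the number of enriched $M_\phi$-counting types over $\sigma_\phi$. Recall that $|\AAA_\phi|$ is at most exponential in $|\phi|$ and that $M_\phi = 3|\AAA_\phi||\phi|^3$ is likewise at most exponential in $|\phi|$. An $M_\phi$-counting type is a function $\AAA_\phi \to \{0,1,\ldots,M_\phi\}$, so there are at most $(M_\phi+1)^{|\AAA_\phi|}$ of them; since the base is exponential and the exponent $|\AAA_\phi|$ is already exponential in $|\phi|$, this number is at most doubly exponential in $|\phi|$. An enriched $M_\phi$-counting type $\otheta=(\theta,{\cal A},{\cal B})$ adds two subsets of $\AAA_\phi$, contributing a factor $2^{|\AAA_\phi|}\cdot 2^{|\AAA_\phi|}$, which is again doubly exponential. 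Hence the set of all enriched $M_\phi$-counting types over $\sigma_\phi$ has doubly exponential cardinality, so each $\oTheta^{T_i}$ does too; as $k\le|\sigma_\phi|\le|\phi|$, the list $\boTheta$ is described by doubly exponentially many bits (each single enriched counting type being describable by $|\AAA_\phi|\cdot\lceil\log(M_\phi+2)\rceil + 2|\AAA_\phi|$ bits, which is merely exponential).

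Next I would bound the structures returned by ${\cal F}$. By clause 2(b) of Definition \ref{d:cert}, every $\str{F}^{T_i}_{\otheta}$ is a neat model of $\phisym$, so clause (iii) of Definition \ref{def:neat} gives that its domain has at most doubly exponential size in $|\phi|$. A $\sigma_\phi$-structure with domain of size $N$ is completely determined by the $N$ one-types of its elements and the at most $N^2$ two-types of its pairs, hence describable in $O(N^2\log|\AAA_\phi|)$ bits; substituting $N$ doubly exponential and $\log|\AAA_\phi|$ polynomial in $|\phi|$, this description stays doubly exponential. Since ${\cal F}_i$ is a function whose domain $\oTheta^{T_i}$ has doubly exponentially many elements and whose values are such structures, ${\cal F}_i$, and therefore ${\cal F}$, is described by doubly exponentially many doubly exponential blocks, which is still doubly exponential. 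Concatenating the descriptions of $\boTheta$ and of ${\cal F}$ then gives the bound on the size of a description of a certificate.

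I do not expect a real obstacle here; the only thing requiring care is keeping the tower of exponentials straight. The point is that quantities such as $(M_\phi+1)^{|\AAA_\phi|}$ and $2^{|\AAA_\phi|}$ already live at the double-exponential level precisely because $M_\phi$ and $|\AAA_\phi|$ are themselves exponential in $|\phi|$, whereas the polynomial overhead of writing down a structure (the $N\mapsto N^2$ step) and the factor $k\le|\phi|$ for iterating over the $T_i$'s do not push us beyond that level.
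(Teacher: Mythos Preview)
Your proposal is correct and follows the same direct counting approach that the paper intends; in fact the paper does not give an explicit proof at all, merely stating that the observation follows directly from the definition, so you have supplied the details that the paper omits. One minor remark: the claim speaks of bounds ``in the size of the signature'' while you work with $|\phi|$; since $M_\phi$ itself depends on $|\phi|$, your parametrization is the appropriate one, and the discrepancy is a harmless imprecision in the paper's phrasing.
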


Let us see how, given a finite model of $\phi$,  to construct a tuple $(\boTheta, \cal{F})$ certifying finite satisfiability of $\phi$.

\begin{lemma}\label{l:certexist}
	If $\phi$ is a finitely satisfiable normal form \GFtTRG{} formula then it has a certificate for finite satisfiability.
\end{lemma}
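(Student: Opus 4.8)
The plan is to extract the certificate directly from a given finite model, reading off the enriched counting types and, for each of them, constructing the associated neat building-block model via the machinery already developed for \GFtEG.

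Concretely, suppose $\str{A}\models\phi$ is finite, where $\phi$ is a normal form \GFtTRG{} formula with special symbols $T_1,\dots,T_k$. For each $T_i$ and each $T_i$-clique $C$ in $\str{A}$, let $\otheta_C=(\theta_C,{\cal A}_C,{\cal B}_C)$ be its enriched $M_\phi$-counting type, and put $\oTheta^{T_i}:=\{\otheta_C : C\text{ a }T_i\text{-clique in }\str{A}\}$. These sets are non-empty (reflexivity guarantees every element lies in a clique) and, by Claim~\ref{c:tsize}-type reasoning, there are only doubly exponentially many of them. The admissibility condition 1(a) --- that each $\otheta\in\oTheta^{T_i}$ is $\phifull{T_i}$-admissible --- holds because each clause in Definition~\ref{d:tadmiss} is simply a statement true of the clique $C$ inside the model $\str{A}$: clause~1 is $\phisyms{T_i}$-admissibility, which follows from Remark~\ref{cor:admissible}; clauses 2 and 3 record the $2$-types actually realized between $C$ and elements strictly above/below it, which exist in $\str{A}$ and are $\phi_\forall$-admissible since $\str{A}\models\phiuniv$, and whenever a $\cfes$-conjunct with a non-symmetric guard applies to an element of $C$, its witness in $\str{A}$ lies strictly above or strictly below $C$ by transitivity-plus-guardedness (the guard forces $Txy\wedge\neg Tyx$ or $Tyx\wedge\neg Txy$), contributing the required $\alpha'\in{\cal B}$ or ${\cal A}$. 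Conditions 1(b) and 1(c) are the ``successor'' conditions: given $\alpha\in{\cal B}_C$, there is $b\in A$ strictly below $C$ with $\type{\str{A}}{b}=\alpha$; take $C'$ to be the $T_i$-clique of $b$ and $\otheta'=\otheta_{C'}$. Then $\alpha\in\theta'$; $\alpha\notin{\cal B}'$ since $b$ cannot be strictly below its own clique; and the inclusions ${\cal B}'\cup\theta'\subseteq{\cal B}_C$, ${\cal A}_C\cup\theta_C\subseteq{\cal A}'$ follow from transitivity of $T_i$ (anything below $C'$ or in $C'$ is below $C$; anything in $C$ or above $C$ is above $C'$). Condition 1(c) is symmetric.

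For the functions ${\cal F}_i$, fix $\otheta=(\theta,{\cal A},{\cal B})\in\oTheta^{T_i}$. I want to produce a neat model $\str{F}^{T_i}_{\otheta}\models\phisym$ all of whose special symbols are equivalences, with a distinguished $T_i$-clique realizing $\theta$, and such that every clique of every special symbol in it realizes a counting type that is already enriched by some element of the corresponding $\oTheta^{T_j}$. This is where \GFtEG{} enters: form the tuple $\bnTheta=(\nTheta^{E_1},\dots,\nTheta^{E_k})$ where $\nTheta^{E_j}$ is the set of $M_\phi$-counting types underlying $\oTheta^{T_j}$ (i.e.\ $\{\theta' : (\theta',{\cal A}',{\cal B}')\in\oTheta^{T_j}\}$, together with, say, the $M_\phi$-counting types of all cliques in $\str{A}$, so that the set is rich enough). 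One checks $\bnTheta$ is $\phi'$-admissible for the \GFtEG{} sentence $\phieq$ --- clause (i) from Remark~\ref{cor:admissible}, clauses (ii),(iii) because $\str{A}$ itself witnesses the relevant $\exists$- and $\forall\exists$-conjuncts using binary-free $2$-types. The system $\uGamma{\bnTheta}{\theta}$ has a solution (the one induced by $\str{A}$, cf.\ Proposition~\ref{f:model-to-solution}, after adding the (E3)-inequality for $\theta$, which is satisfied because some $T_i$-clique of $\str{A}$ has that counting type). Lemma~\ref{lem:from-solution-to-model} then yields the desired neat model, and its properties (i),(ii) there give exactly conditions 2(a)--(d) of Definition~\ref{d:cert}, with 2(c) following because every special clique in the constructed model realizes an $M_\phi$-counting type in $\nTheta^{T_j}$, which by construction is enriched by some enriched type in $\oTheta^{T_j}$.

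The main obstacle I expect is bookkeeping around condition 2(c): we must ensure the sets $\oTheta^{T_j}$ are closed enough that every counting type appearing in any building block $\str{F}^{T_i}_{\otheta}$ is genuinely enriched by a member of $\oTheta^{T_j}$ --- not merely that the bare counting type occurs, but that there is a compatible enriched version. The clean way to handle this is to observe that Lemma~\ref{lem:from-solution-to-model} only ever produces cliques whose counting types lie in the prescribed set $\nTheta^{T_j}$, and each such counting type $\theta'$ is, by definition of $\nTheta^{T_j}$, the underlying counting type of some enriched type in $\oTheta^{T_j}$; we are free to pick that enriched type in condition 2(c) since the condition only asks for \emph{some} $\otheta'\in\oTheta^{T_j}$ enriching it. A secondary subtlety is that $\str{A}$ may realize $M_\phi$-counting types of cliques that are not maximal for inclusion among the ${\cal A},{\cal B}$ data; but this causes no problem because we take $\oTheta^{T_i}$ to contain \emph{all} enriched types realized, not a pruned subset. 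Everything else is routine verification against Definitions~\ref{d:tadmiss} and~\ref{d:cert}.
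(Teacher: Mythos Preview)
There is a genuine gap in your verification of condition~1(b) (and, symmetrically,~1(c)) of Definition~\ref{d:cert}. You write that given $\alpha\in{\cal B}_C$ you take any $b$ strictly below $C$ of type $\alpha$, let $C'$ be its $T_i$-clique, and conclude $\alpha\notin{\cal B}'$ ``since $b$ cannot be strictly below its own clique.'' But ${\cal B}'$ records the $1$-types of \emph{all} elements strictly below $C'$, not merely whether the particular element $b$ lies below $C'$. Nothing prevents there being another element $b'$ of the same type $\alpha$ with $b'$ strictly $T_i$-below $b$; in that case $\alpha\in{\cal B}'$ and your chosen $\otheta'$ fails the requirement.

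The paper repairs this using finiteness of $\str{A}$: among all elements of type $\alpha$ strictly below $C$, choose one whose $T_i$-clique is minimal with respect to the strict order on cliques induced by $T_i$. Such a minimal clique exists precisely because $\str{A}$ is finite. For that choice no element of type $\alpha$ lies strictly below $C'$, since any such element would, by transitivity, also lie strictly below $C$ and contradict minimality. This is the one place in the proof where finiteness of the model is essential, so it cannot be glossed over.

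Apart from this, your construction of ${\cal F}$ is essentially the paper's: the paper first passes to the structure $\hat{\str{A}}$ obtained by deleting all non-symmetric transitive edges so that each $T_j$ becomes an equivalence, but this does not change the $M_\phi$-counting types of cliques, so your $\nTheta^{T_j}$ coincides with the paper's, and Lemma~\ref{lem:from-solution-to-model} applies in the same way. One minor point: the ``together with'' clause in your definition of $\nTheta^{E_j}$ is redundant, since $\oTheta^{T_j}$ already contains the enriched types of \emph{all} $T_j$-cliques of $\str{A}$.
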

\begin{proof}
	Let $\str{A}$ be a finite model of $\phi$. 
	We first construct $\boTheta=(\oTheta^{T_1}, \ldots, \oTheta^{T_k})$.
	Let $\oTheta^{T_i}$ be the set of the $M_\phi$-cuttings of the enriched counting types of $T_i$-classes of $\str{A}$.      
	Let us take any $\otheta \in \oTheta^{T_i}$, $\otheta=(\theta, \cal{A}, \cal{B})$. It is clear that $\otheta$ is $\phifull{T_i}$-admissible. Consider condition (i)(b) of
	Definition \ref{d:cert} and take any $\alpha' \in \cal{B}$. Choose any $T_i$-class $\str{C}$ of type $\otheta$ from $\str{A}$.
	Since $\alpha' \in \cal{B}$ there is an element $a' \in A$ of type $\alpha'$ such that for any $a \in C$ we have $\str{A} \models Taa' \wedge \neg Ta'a$.     
	Finiteness of $\str{A}$ guarantees that there is a minimal such $a'$ (that is such $a'$ that for any $a''$ of $1$-type $\alpha'$ we have $\str{A} \not\models Ta'a'' \wedge \neg Ta''a'$). Let $\otheta'=(\otheta, \cal{A}', \cal{B}')$ be the $M_\phi$-cutting of the enriched  
	counting type	 of the $T_i$-class
	of $a'$.  Obviously $\otheta' \in \oTheta^{T_i}$. It is readily verified that $\otheta'$ is as required, in particular $\alpha' \not\in \cal{B}'$
	due to the minimality of $a'$. 
	Analogously we can check that condition (i)(c) is satisfied.
	
	Let us now construct $\cal{F}=(\cal{F}_1, \ldots, \cal{F}_k)$. 
	Consider the structure $\hat{\str{A}}$---a modification of $\str{A}$ in
	which, in all 2-types, for all transitive symbols $T$ we substitute $Txy \wedge \neg Tyx$ or $Tyx
	\wedge \neg Txy$ with $\neg Txy \wedge \neg Tyx$, for all
	transitive $T$. This way, each $T$ is an equivalence relation in $\hat{\str{A}}$. 
	Obviously, $\hat{\str{A}} \models \phieq$.
	Moreover, since $\phieq$  
	uses only
	symmetric  guards in conjuncts of type $\cfes$, it may be treated as a \GFtEG{} formula.
	For each special symbol $T$,  let $\nTheta^{T}$ be the set of the $M_\phi$-cuttings of counting types 
	 realized in
	$\hat{\str{A}}$ by the $T$-classes, and let $\bnTheta=(\nTheta^{T_1},\ldots,\nTheta^{T_k})$. 
	For each $T_i$, for each $\otheta \in \oTheta^{T_i}$, $\otheta=(\theta, \cal{A}, \cal{B})$, consider
	the system of inequalities
	$\Gamma_{\bnTheta}(i,\theta)$, as defined in Section \ref{sec:eq-with-equality}.
	Note, that 	$\Gamma_{\bnTheta}(i,\theta)$ has a non-negative solution: the one
	corresponding to the model $\hat{\str{A}}$. Take a neat model guaranteed now by  Lemma \ref{lem:from-solution-to-model}  as $\str{F}_{\otheta}^{T_i}=\cal{F}_i(\otheta)$.
		It is readily verified that it meets the required conditions (ii)(a)-(ii)(d). In particular condition (ii)(c) is satisfied since the $M_\phi$-cuttings of all counting types 
		of $T_i$-classes realized in
	$\str{F}_{\otheta}^{T_i}$ are members of $\nTheta^{T_i}$, and the counting types from $\nTheta^{T_i}$ are obtained just by dropping the components $\cal{A}$ and $\cal{B}$ from  enriched types from $\oTheta^{T_i}$. Condition (ii)(d) is satisfied due to inequality (E3). 
	\end{proof}

\subsection{Construction of a bounded size model}

Now we show that given a certificate of finite satisfiability for a normal form \GFtTRG{} formula $\phi$ we can build its bounded finite model.

\begin{lemma} \label{l:modelt}
	If $(\boTheta, \cal{F})$ certifies finite satisfiability of $\varphi$ then $\varphi$ has a finite model.
\end{lemma}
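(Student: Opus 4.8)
The plan is to build a finite model $\str{B}$ of $\varphi$ by placing finitely many copies of the neat building blocks $\str{F}^{T_i}_{\otheta}$ on a cylindrical arrangement and wiring them together using only \emph{non-symmetric} transitive connections, so that the resulting structure retains all the properties guaranteed inside each block while additionally satisfying the $\cfes$-conjuncts with non-symmetric guards. First I would fix, for each special symbol $T_i$, a linear order on the (finitely many, by Claim~\ref{c:tsize}) enriched $M_\phi$-counting types in $\oTheta^{T_i}$, and use conditions 1(b) and 1(c) of Definition~\ref{d:cert} to read off a "successor-above" and "successor-below" function: for every $\otheta=(\theta,{\cal A},{\cal B})\in\oTheta^{T_i}$ and every $\alpha\in{\cal B}$ there is a witness type $\otheta'$ with $\alpha\in\theta'$, $\alpha\notin{\cal B}'$, ${\cal B}'\cup\theta'\subseteq{\cal B}$ and ${\cal A}\cup\theta\subseteq{\cal A}'$; symmetrically for ${\cal A}$. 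The monotone shrinking of the ${\cal B}$-component (and growth of the ${\cal A}$-component) along such witness chains is exactly what guarantees termination when we lay the blocks out in layers — this is the role of finiteness of $\str{A}$ in Lemma~\ref{l:certexist} being replaced here by the finiteness of the certificate.

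Concretely, I would introduce a layered/cylindrical index set: finitely many "levels" $0,1,\dots,N$ (where $N$ is bounded by the total number of enriched counting types across all $\oTheta^{T_i}$, hence doubly exponential), and at each level a bounded number of copies of blocks $\str{F}^{T_i}_{\otheta}$ arranged around a cycle so that each block can distinguish its "clockwise" neighbour from its "anticlockwise" one — mirroring the $D_0,\dots,D_3$ trick of Example~\ref{ex:large-no-equality} and the circular witness scheme in the proof of Theorem~\ref{theorem:noequality}. Each block contributes its internal equivalence/$T$-clique structure unchanged (so conditions (2a)--(2d) of Definition~\ref{d:cert} ensure $\phisym$ and the symmetric $\cfes$-conjuncts hold, and by (2c) every $T_j$-clique inside a block already has an enriched type listed in $\oTheta^{T_j}$). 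Then, for each special $T_i$ and each distinguished $T_i$-clique $C$ of a block of type $\otheta=(\theta,{\cal A},{\cal B})$, I would add, for every $\alpha\in{\cal B}$, non-symmetric $T_i$-edges $Tab\wedge\neg Tba$ from every $a\in C$ down to the distinguished $T_i$-clique of a block of the witness type $\otheta'$ sitting one level below (and symmetrically upward for ${\cal A}$); the $2$-types on these new pairs are chosen to be the $\phi_\forall$-admissible ones guaranteed by Definition~\ref{d:tadmiss}(2)(3), which exist precisely because $\otheta$ is $\phifull{T_i}$-admissible. Transitivity of $T_i$ is then maintained because the inclusions ${\cal B}'\cup\theta'\subseteq{\cal B}$, ${\cal A}\cup\theta\subseteq{\cal A}'$ make the new edges compose consistently with the already-present $T_i$-structure of the target block (whose own ${\cal A},{\cal B}$ components record exactly which further $T_i$-edges leave it), and reflexivity is inherited from the blocks.

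Finally I would verify the conjuncts of $\varphi$ one family at a time, exactly as in the proofs of Theorem~\ref{theorem:noequality} and Lemma~\ref{lem:from-solution-to-model}: $\ce$- and $\cfes$-with-symmetric-guard conjuncts and $\phisym$ hold inside each block; $\cff$-conjuncts hold because every $2$-type used is either $\phi_\forall$-admissible or binary-free; $\cfe$-conjuncts are satisfied by the circular scheme among copies within a level; and the new non-symmetric $\cfes$-conjuncts are satisfied by the downward/upward wiring, using Definition~\ref{d:tadmiss}(3) to see that the witness actually realizes $\psi(x,y)$ with the right orientation. Remaining pairs get binary-free $2$-types. The size bound is the product of the number of levels, the cycle length, the number of block types, and the (doubly exponential, by Claim~\ref{c:tsize}) size of a single block, hence still doubly exponential in $|\varphi|$. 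The main obstacle I expect is the bookkeeping for \emph{transitivity across blocks}: I must check that adding a non-symmetric $T_i$-edge from a clique $C$ into a target block does not force, by composition with $T_i$-edges internal to or leaving that block, any $T_i$-edge whose presence would contradict a $2$-type already fixed (in particular would not accidentally create a symmetric $T_i$-pair between distinct elements, violating ramification or an $x\neq y$ guard) — and that the layered structure with its strictly shrinking ${\cal B}$-components genuinely terminates rather than requiring an infinite descent. This is where conditions 1(b)/1(c) of Definition~\ref{d:cert}, together with $\phifull{T_i}$-admissibility, must be used with care.
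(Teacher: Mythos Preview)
Your high-level picture is right --- place copies of the blocks $\str{F}^{T_i}_{\otheta}$ on a cylindrical arrangement and wire them with non-symmetric transitive edges --- and this is indeed what the paper does. However, there is a genuine gap in your plan that would make the construction fail as stated.

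You only provide non-symmetric $T_i$-witnesses for elements of the \emph{distinguished} $T_i$-clique of a block $\str{F}^{T_i}_{\otheta}$. But every element of $\str{B}$ needs such witnesses, including those in non-central $T_i$-cliques of $\str{F}^{T_i}_{\otheta}$ and, more importantly, those in $T_i$-cliques of blocks $\str{F}^{T_j}_{\otheta'}$ for $j\neq i$. The paper handles this by introducing a function $enr$ that assigns to \emph{every} $T$-clique $C$ of \emph{every} block an enriched type from $\oTheta^{T}$ (this is exactly what condition~2(c) of Definition~\ref{d:cert} licenses), and then providing non-symmetric $T$-witnesses for every element according to the ${\cal A},{\cal B}$ components of $enr(C)$. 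The target of a witness is always a \emph{central} clique of a fresh copy of the appropriate block, but the \emph{source} may be an arbitrary clique.

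A second, related gap is the treatment of transitivity. Saying the inclusions ${\cal B}'\cup\theta'\subseteq{\cal B}$ make edges ``compose consistently'' is not enough: once you add a non-symmetric edge from $C$ into a block, composition with edges already inside or leaving that block forces many further $T$-edges, and you must set $\phi_\forall$-admissible $2$-types on all those pairs without conflict. The paper has an explicit \emph{Transitive closure} step, with a structural analysis of $T$-paths (their length is at most $2|\AAA_\phi|+1$ because ${\cal A}$ strictly grows along an ``upper-witness'' segment and ${\cal B}$ strictly shrinks along a ``lower-witness'' segment) and a $4$-column mechanism with functions $lower,upper:\{0,1,2,3\}\to\{0,1,2,3\}$ having disjoint images and no fixed points under composition. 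This mechanism is what guarantees that no clique is used both as an upper and a lower target, and that a $T$-path and a $T'$-path ($T\neq T'$) never connect the same pair of elements. Your ``cycle within a level'' idea gestures at this but does not supply the invariant; without it, the bookkeeping you flag as the ``main obstacle'' cannot be discharged. Finally, note that the paper places \emph{all} block types in \emph{every} cell and takes only $K=2|\AAA_\phi|+1$ rows (exponential, not doubly exponential); your level count tied to the number of enriched types is looser than necessary, though not incorrect.
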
 

The rest of this section is essentially devoted to the proof of the above lemma. 

Let $(\boTheta, \cal{F})$ be a certificate of finite satisfiability of a normal form \GFt{}-formula $\phi$. 
We define an auxiliary choice function $cc$ which for every $\str{F}^T_{\otheta}$, assuming $\otheta=(\theta, {\cal A}, {\cal B})$,  
returns one of its $T$-classes whose counting type cut to $M_\phi$
equals $\theta$. The role of $cc(\str{F}^T_{\otheta})$ will be to provide non-symmetric witnesses for some elements external to $\str{F}^T_{\otheta}$. 
We define another auxiliary function $enr$ which for each structure $\str{F}^T_{\otheta}$,   and  each $T'$-class $\str{C}$ of $\str{F}^T_{\otheta}$ returns
an enriched $M_\phi$-counting type from $\oTheta^{T'}$, enriching  the $M_\phi$-cutting of $\ctype{\str{C}}$.     
Namely, if $T'=T$ and $\str{C}=cc(\str{F}^T_{\otheta})$ 
 then $enr(\str{C})=\otheta$;      
 otherwise we choose an arbitrary type meeting 
the requirement. At least one appropriate type exists in $\oTheta^{T'}$ due to condition (ii)(c) of Definition \ref{d:cert}.
The purpose of $enr(\str{C})$ is to say, elements of which 1-types are allowed to be connected to the $T'$-clique  $C$ by the transitive relation 
$T'$ 
from above and from below. We remark that (though it would be possible) we will not try to make the $M_\phi$-cutting 
of the enriched type of a class 
$\str{C}$ in our final model to be exactly equal to $enr(\str{C})$, but rather, if $enr(\str{C})=(\theta, {\cal A}, {\cal B})$ then we make it equal
$\otheta=(\theta, {\cal A}', {\cal
	B}')$, for some $\cal{A}'$, $\cal{B}'$ such that ${\cal A}' \subseteq {\cal A}$ and ${\cal B}'
\subseteq {\cal B}$. 

We split the construction of the desired model $\str{B} \models \phi$ into a few steps.

\medskip\noindent
{\bf Universe.}
Let $K=2|\AAA_\phi|+1$. Let $m^i_l$ be the number of conjuncts of type $\cfes$ with guards  $T_ixy \wedge \neg T_iyx$,
and $m^i_u$ the number of conjuncts of type $\cfes$ with guards $T_iyx \wedge \neg T_ixy$. Let $m^i=\max \{m^i_l, m^i_u\}$ 
and let  $m=\max_i m^i$. 
We define the universe of $\str{B}$ to
be
 $$B=\{ 0,\ldots, K-1 \} \times
\{0,\ldots, 3 \} \times \{1, \ldots, m \} \times
{\bigcup\limits^{{\bm \cdot}}_{1\le i \le k; \otheta \in \oTheta^{T_i}}}
F_{\otheta}^{T_i}$$ 

We can think that $B$ consists of $K$
rows and $4$ columns, and that the intersection of each row and
each column contains $m$ disjoint copies of each $\str{F}^T_{\otheta}$.
We imagine that row $K-1$ is glued to row $0$, and thus  a
cylindrical surface is obtained.

Initially, we impose appropriate structures on the copies of
each  $\str{F}^T_{\otheta}$. For all $i, j, l$, and $a \in  F^T_{\otheta}$ let us set
$\type{\str{B}}{(i,j,l,a)} := \type{\str{F}^T_{\otheta}}{a}$, and for $a, b \in   {F}^T_{\otheta}$, 
$\type{\str{B}}{(i,j,l,a),(i,j,l,b)}:=\type{ \str{F}^T_{\otheta}}{a,b}$.
We also transfer the values of the functions $cc$ and $enr$ from the structures  $\str{F}^T_{\otheta}$ 
to their copies.

\medskip\noindent
{\bf Non-symmetric witnesses.}
Now we provide, for all elements, witnesses for all conjuncts of type $\cfes$ with non-symmetric  guards. 
  This is done in a regular manner. Elements from row $i$
find their lower witnesses, i.e., witnesses for conjuncts with $\eta(x,y)=Txy \wedge \neg Tyx$, in the row $i+1 (\mbox{mod} \; K)$, and 
their upper witnesses, i.e., witnesses for conjuncts with $\eta(x,y)=Tyx \wedge \neg Txy$, in the row $i-1 (\mbox{mod} \; K)$. Elements
from column $j$ look for their lower witnesses in column $lower(j)$
and for their upper witnesses in column $upper(j)$, where $lower, upper:
\{0,1,2,3 \} \rightarrow \{0,1,2,3 \}$ are defined as follows: 
$lower(0)=lower(2)=0$, $lower(1)=lower(3)=1$, $upper(0)=upper(3)=3$, $upper(1)=upper(2)=2$. 

This
strategy guarantees two important properties, which will allow us to carry our construction out without conflicts. First, there will be
no pair of elements $a$, $b$ such that $a$ use $b$ as a lower
witness and $b$ use $a$ as an upper witness (since both $upper \circ lower$ and
$lower \circ upper$ have no fixed points). Second, none of the cliques will be
used as a source for both a lower witness and an upper witness
(since $upper$ and $lower$ have disjoint images).

Let us consider explicitly the case of, say, providing lower
$T$-witnesses for element $a=(5,1,3,a_0)$, where $a_0 \in
F^{T'}_{\otheta'}$. 

Let $\alpha$ be the $1$-type of $a_0$.
Let $C$ be the $T$-clique of $a_0$ in
$\str{F}^{T'}_{\otheta'}$. Let $\otheta=(\theta, \cal{A}, \cal{B})=enr(\str{C})$. The outlined
strategy says that lower witnesses should be looked for in row 6,
column 1. Let $\delta_1, \ldots, \delta_l$ be the list of the conjuncts of $\phi$ of
the form $\cfes$  guarded by $Txy \wedge \neg Tyx$, 
$\delta_i =\forall x (\ung_i(x) \rightarrow \exists y (T xy \wedge \neg T yx \wedge \psi_i(x,y)))$.     
Note that $l \le m$. 

For each $1 \le i \le l$, if $\alpha(x) \models \ung_i(x)$ then take $\alpha_i' \in \cal{B}$ and $\beta_i$ guaranteed by condition (iii)(a) of Definition \ref{d:tadmiss}.
Let $\otheta_i'=(\theta_i', \cal{A}'_i, \cal{B}'_i) \in \oTheta^T$ be the enriched $M_\phi$-counting type guaranteed by condition (i)(b) of Definition \ref{d:cert} for $\alpha_i'$, i.e., 
$\alpha_i' \in \theta'_i$, $\alpha'_i \not\in \cal{B}'_i$, $\cal{B}'_i \cup \theta' \subseteq \cal{B}$ and $\cal{A} \cup \theta \subseteq \cal{A}'_i$. 
Choose an element $a'_i$ of $1$-type $\alpha'_i$ in the $T$-class 
$cc(\{ 6 \}   \times  \{1\}  \times   \{i\} \times \str{F}^T_{\otheta'_i})$. 
     
Set $\type{\str{B}}{a, a_i'}:=\tsplice{\beta_i}{T}$. 

In a similar way, in accordance with the outlined strategy, we provide lower and upper witnesses for all elements in $\str{B}$.

\medskip\noindent
{\bf Transitive closure.}
During the previous step of providing witnesses we defined 2-types
containing non-symmetric transitive connections between some pairs
of elements belonging to consecutive rows. 

Let  $C$, $C'$ be a pair of $T$-cliques in 
$\str{B}$. We write $C' <_T^{\str{B}} C$ if for all $a \in C'$, $b
\in C$, $\str{B} \models   Tab \wedge \neg Tba$. 
We  say that there exists a $T$-path from 
$C'$ to  $C$ in (the current, non-fully specified, version of) $\str{B}$ if there exists a sequence
of  $T$-cliques $C=C_0,C_1, \ldots C_{l-1}, C_l=C'$ such that for
every pair $C_i$, $C_{i+1}$ we defined a lower $T$-witness for an
element from $C_i$ in $C_{i+1}$ or an upper $T$-witness for an element from $C_{i+1}$ in $C_i$.

In order to complete the definition of $T$ we have to ensure that for every transitive $T$, for every pair of $T$-cliques $C$, $C'$ in $\str{B}$, if there is a $T$-path from $C'$ to $C$, then $C' <_T^{\str{B}} C$. 

Let us note basic properties of the above defined $T$-paths in (the current version of) $\str{B}$. 

\begin{claim} \label{c:shape} For every transitive symbol $T$, a $T$-path from a $T$-clique $C$ to a $T$-clique $C'$ has the form
$C_1, \ldots, C_s, C_{s+1}, \ldots, C_l$  (one of the fragments $C_1, \ldots, C_s$ or $C_{s+1}, \ldots, C_l$ may be empty), 
where 
\begin{enumerate}[(i)]
\item two consecutive cliques $C_i, C_{i+1}$ belong to rows $j$, $(j+1) \mod K$, for some $j$,
\item $C_1, \ldots, C_s$ belong to the same column ($2$ or $3$); similarly, $C_{s+1}, \ldots, C_{l}$ belong to the same column ($0$ or $1$),
\item each of the cliques from the considered $T$-path, except for possible at most one of  $C_s, C_{s+1}$, belongs to a copy of 
$\str{F}^T_{\otheta}$ (for some $\otheta$); one of $C_s, C_{s+1}$ may belong to a copy of  $\str{F}^{T'}_{\otheta}$ (for some $\otheta$ and $T' \not=T$), 
\item $s \le  |\AAA_\phi|+1$, $l-s \le  |\AAA_\phi|+1$. Moreover, $l \le 2 |\AAA_\phi| + 1$.    
\end{enumerate}
\end{claim}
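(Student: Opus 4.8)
The plan is to read all four items directly off the two rigid regularities built into the witness‑provision step — the \emph{row shift} (elements of row $i$ take their lower witnesses in row $i{+}1$ and their upper witnesses in row $i{-}1$, modulo $K$) and the \emph{column maps} $lower,upper\colon\{0,1,2,3\}\to\{0,1,2,3\}$, for which $\mathrm{Im}(lower)=\{0,1\}$, $\mathrm{Im}(upper)=\{2,3\}$, $lower\restr\{0,1\}=\mathrm{id}$ and $upper\restr\{2,3\}=\mathrm{id}$ — together with the certificate conditions of Definition~\ref{d:cert}. Item (1) is immediate: an edge of a $T$-path joining $C_i$ and $C_{i+1}$ comes either from a lower $T$-witness for an element of $C_i$ placed in $C_{i+1}$, or from an upper $T$-witness for an element of $C_{i+1}$ placed in $C_i$, and in both cases the row of $C_{i+1}$ is that of $C_i$ increased by one modulo $K$.

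For item (2) I would call an edge \emph{descending} if it is produced by a lower witness and \emph{ascending} if by an upper witness. A descending edge forces the column of its \emph{second} clique into $\{0,1\}$, and an ascending edge forces the column of its \emph{first} clique into $\{2,3\}$; hence a descending edge cannot be immediately followed by an ascending one, since the clique they share would need a column in $\{0,1\}\cap\{2,3\}=\emptyset$. So the ascending edges form an initial segment and the descending edges a final segment of the edge sequence, which partitions the cliques into two blocks $C_1,\dots,C_s$ and $C_{s+1},\dots,C_l$ (one possibly empty). Every clique of the first block other than possibly its last one is the \emph{source} of an ascending edge, hence lies in $\{2,3\}$; as $upper$ is the identity on $\{2,3\}$, successive such cliques share their column, so the block sits in a single column of $\{2,3\}$. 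Dually, using $lower\restr\{0,1\}=\mathrm{id}$, the second block sits in a single column of $\{0,1\}$; a one‑line check of $upper^{-1}$ and $lower^{-1}$ shows the transition clique lands in one of these two columns, and one attaches it to the matching block, fixing $s$.

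Item (3) follows from a single feature of the construction: a lower $T$-witness is always deposited in a \emph{central} $T$-clique $cc(\dots\times\str{F}^{T}_{\otheta'})$ of a copy of some $\str{F}^{T}_{\otheta'}$, and symmetrically for upper $T$-witnesses. Thus the target of every descending edge and the source of every ascending edge lies inside an $\str{F}^{T}$-copy; combined with the prefix/suffix picture of (2) this covers every clique of the path except the transition clique, which is merely the $T$-clique — inside its home structure — of an element receiving a witness, and so may belong to an $\str{F}^{T'}$-copy with $T'\neq T$.

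For item (4) I would track the enriched type carried by each clique through the function $enr$. When an ascending edge $C_i\to C_{i+1}$ is installed, the construction places the upper witness for the relevant element of $C_{i+1}$ in the central clique of a copy of $\str{F}^{T}_{\otheta'}$, where $\otheta'$ is supplied by condition 1(c) of Definition~\ref{d:cert} from $enr(C_{i+1})$ and the $1$-type $\alpha'$ of the witness; hence $enr(C_i)=\otheta'$, with ${\cal A}_{C_i}\cup\theta_{C_i}\subseteq{\cal A}_{C_{i+1}}$ and $\alpha'\in\theta_{C_i}\setminus{\cal A}_{C_i}$, so ${\cal A}_{C_i}\subsetneq{\cal A}_{C_{i+1}}$. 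Along the ascending block the ${\cal A}$-components strictly increase; being subsets of $\AAA_\phi$, this bounds the block's length by $|\AAA_\phi|$. Dually, condition 1(b) makes the ${\cal B}$-components strictly decrease along the descending block, bounding its length by $|\AAA_\phi|$. Summing gives $l\le 2|\AAA_\phi|\ (<K)$, together with $s\le|\AAA_\phi|+1$ and $l-s\le|\AAA_\phi|+1$. The step I expect to be the real obstacle is (2): one must be meticulous that ``source'' and ``target'' of an edge are attributed correctly to the two witness kinds (they are exchanged), that each image constraint is pinned on the right clique, and that the transition clique and the degenerate cases (one block empty) are handled; the remaining items are essentially a direct unfolding of the construction and of Definitions~\ref{d:cert} and~\ref{d:tadmiss}.
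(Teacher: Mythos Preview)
Your proposal is correct and follows essentially the same approach as the paper: items (1)--(3) are read off the row/column discipline of the witness-provision step, and item (4) is obtained by tracking the strict monotonicity of the ${\cal A}$- and ${\cal B}$-components of $enr(C_i)$ along the path, which is exactly what the paper does (the paper simply dismisses (1)--(3) as ``direct'', while you spell them out). One small arithmetic slip: summing the edge bounds gives $l-1\le 2|\AAA_\phi|$, hence $l\le 2|\AAA_\phi|+1$, not $l\le 2|\AAA_\phi|$; this is the bound actually stated in the claim and is what the paper obtains by noting that the transition edge contributes one additional strict containment.
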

\begin{proof}
Conditions (i)-(iii) follow directly from our strategy of finding non-symmetric witnesses (in the next row, in the column given by functions $upper$ and $lower$, in 
a copy of an appropriate $\str{F}^{T}_{\otheta}$ structure).
For (iv) let $\otheta_i=(\theta_i, \cal{A}_i, \cal{B}_i)= enr(\str{C}_i)$. Due to our strategy, for $i=1, \ldots, s-1$ we have $\cal{A}_i \subsetneq \cal{A}_{i+1}$.
Similarly, for $i=s, \ldots, l-1$ we have $\cal{B}_{i+1} \subsetneq \cal{B}_i$. Since $|\cal{A}_i|, |\cal{B}_i| \le |\AAA_\phi|$ the claim about the length
of each of the two fragments follows. The claim about the length of the whole path follows from the observation that also one of 
the containments $\cal{A}_s \subseteq \cal{A}_{s+1}$ or $\cal{B}_{s+1} \subseteq \cal{B}_{s}$ must be strict, since either
one of the elements of $C_{s+1}$ is a $T$-witness for an element of $C_s$, or the other way round.
\end{proof}

Suppose now that $C$ and $C'$ are $T$-cliques in $\str{B}$ and there exists a $T$-path, for some transitive $T$, from $C$ to $C'$,
$C=C_1, C_2, \ldots, C_{l-1}, C_l=C'$. 
We describe below how to make $C' <^{\str{B}}_T C$. 

$C' <^{\str{B}}_T C$: For every $a' \in C'$, $a
\in C$, if the 2-type for $a'$ and $a$ is not defined, then let
$\alpha$ be the $1$-type of $a$ and $\alpha'$ the $1$-type of $a'$.
Let $\otheta=(\theta, \cal{A}, \cal{B})=enr(\str{C})$, $\otheta'=(\theta', \cal{A}', \cal{B}')=enr(\str{C}')$.
Observe that $\theta' \subseteq \cal{B}$. Indeed, denoting
$\otheta_i=(\theta_i, \cal{A}_i, \cal{B}_i) = enr(\str{C}_i)$, we have 
by our construction $\theta'=\theta_l \subseteq \cal{B}_{l-1} \subseteq \ldots, \cal{B}_1 =\cal{B}$.
We have that $\alpha \in \theta$, and, by the remark above, $\alpha' \in \cal{B}$.
We take a $2$-type $\beta$ guaranteed by condition (ii) of Definition \ref{d:tadmiss} and
set $\type{\str{B}}{a,a'}:= \tsplice{\beta}{T}$. 

\smallskip
Now we use the above procedure for every transitive $T_i\in \sigma$, and for every $T_i$-cliques $C$, $C'$ in $\str{B}$, such that  there is a $T_i$-path from $C$ to $C'$. 
This step ensures that the interpretation of each $T_i$ is transitive in $\str{B}$. 
The following claim shows that setting $2$-types in the above step can be done without conflicts.

\begin{claim} \label{claimokc}
 Let  $a \in B$ belongs to a $T$-clique $D$, $b \in B$ belongs to a $T$-clique $D'$ and there is a $T$-path from $D$ to
		$D'$. Let  $T' \not =T$, let $C$ be the $T'$-clique of $a$ and $C'$ the $T'$-clique of $b$. Then  there is no $T'$-path from $C$ to $C'$ nor
		from $C'$ to $C$.
		\end{claim}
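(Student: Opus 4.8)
The plan is as follows. The key structural fact I would establish first is that every special clique of $\str{B}$ sits inside a single copy of one of the building blocks $\str{F}^{T}_{\otheta}$. Indeed, the only $2$-types the construction assigns to pairs of elements taken from two \emph{distinct} copies are spliced types $\tsplice{\beta}{U}$ — in which every special symbol other than $U$ is negated and $U$ itself occurs non-symmetrically — and binary-free $2$-types; so no pair of elements from two distinct copies belongs to a symmetric special pair, hence cannot lie in a common special clique. Consequently the $T$-clique $D$ and the $T'$-clique $C$ of $a$ lie in one and the same copy; write $\str{F}^{U_1}_{\otheta_1}$ for its building block and $r_1, c_1$ for its row and column, and similarly let $D'$ and $C'$ lie in a copy of building block $\str{F}^{U_2}_{\otheta_2}$, row $r_2$, column $c_2$. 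We may also assume $D \neq D'$ (otherwise no $T$-path from $D$ to $D'$ exists) and $C \neq C'$ (otherwise the assertion is vacuous), so each relevant path has two distinct endpoint cliques.

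Next I would assume, for contradiction, that there is a $T'$-path between $C$ and $C'$ and apply Claim~\ref{c:shape} both to the given $T$-path (endpoints $D$, $D'$) and to the assumed $T'$-path (endpoints $C$, $C'$). That claim says every clique on either path lies in a copy of $\str{F}^{T}_{\bullet}$ (resp.\ $\str{F}^{T'}_{\bullet}$) with at most one exception, and the exceptional clique is adjacent to the ``turn'' of the path; since the two endpoints of each path are distinct, at most one endpoint of each path is exceptional. Because $D$ and $C$ share a copy and $D'$ and $C'$ share a copy, a short case analysis on which of $D, D', C, C'$ are exceptional eliminates every configuration but one, up to interchanging the two paths: $U_1 = T$, $U_2 = T'$, the endpoint $D'$ is the exceptional clique of the $T$-path, and the endpoint $C$ is the exceptional clique of the $T'$-path. (For instance, if $D$ is not exceptional then $U_1 = T$; if moreover $C$ is not exceptional then $U_1 = T'$, contradicting $T \neq T'$; the other sub-cases are similar, using that $D'$ must be exceptional whenever $U_2 \neq T$, and that if neither path has an exceptional clique then $U_1 = U_2 = T = T'$.)

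To rule this configuration out, consider first a $T'$-path from $C$ to $C'$. Since $D'$ is an endpoint of the $T$-path adjacent to its turn, either the whole $T$-path lies in the ``upper'' columns $\{2,3\}$, or all of it except $D'$ does and $D'$ lies in $\{0,1\}$; in the latter case the step entering $D'$ cannot be a lower-witness step (this would place $D'$ inside a $\str{F}^T$-copy, against $U_2 = T'$), so it is an upper-witness step, forcing $c_1 = upper(c_2)$ with $c_2 \in \{0,1\}$. Symmetrically, from $C$ being an endpoint of the $T'$-path adjacent to its turn one gets: either the whole $T'$-path lies in the ``lower'' columns $\{0,1\}$, or all of it except $C$ does, with $C \in \{2,3\}$ and the step leaving $C$ a lower-witness step, whence $c_2 = lower(c_1)$ with $c_1 \in \{2,3\}$. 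Combining the two alternatives: in three of the four resulting combinations one of $c_1, c_2$ is forced to lie simultaneously in $\{2,3\}$ and in $\{0,1\}$ — e.g.\ $c_1$ is the common column both of $D$ (inside the upper part of the $T$-path) and of $C$ (inside the lower part of the $T'$-path) — which is impossible; the remaining combination gives $c_1 = upper(c_2)$ and $c_2 = lower(c_1)$, i.e.\ $c_1 = upper(lower(c_1))$, impossible since $upper \circ lower$ has no fixed point. For a $T'$-path from $C'$ to $C$ the parallel analysis leaves one further combination, in which both paths lie entirely in the upper columns (then necessarily $c_1 = c_2$); this is excluded by counting rows: as consecutive cliques of any path occupy consecutive rows, the two paths together span $K$ rows modulo $K$, yet by Claim~\ref{c:shape} a purely ``upper'' path has at most $|\AAA_\phi|$ steps, so together at most $2|\AAA_\phi| = K-1 < K$ — a contradiction. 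Hence no $T'$-path between $C$ and $C'$ exists, which proves the claim.

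I expect the main obstacle to be purely organisational: for each admissible position of the ``turn'' relative to the exceptional endpoint of each path, and each orientation of the witness step at that turn, one must read off the forced constraints on the rows, columns and building blocks of the copies of $a$ and of $b$ and verify that the combination is contradictory. What makes all cases close are exactly the three features deliberately built into the construction: $upper$ and $lower$ have disjoint ranges $\{2,3\}$ and $\{0,1\}$; $upper \circ lower$ and $lower \circ upper$ are fixed-point-free; and $K = 2|\AAA_\phi| + 1$ is strictly larger than twice the maximal length of an ``upper'' (equivalently ``lower'') fragment of a path.
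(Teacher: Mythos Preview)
Your argument is correct. It differs from the paper's proof in organisation rather than substance: the paper cases directly on the \emph{column behaviour} of the $T$-path (does it stay in one column or change from a column in $\{2,3\}$ to one in $\{0,1\}$?), and in each case reads off that one specific endpoint of the $T$-path must sit in an $\str{F}^T$-copy while the matching endpoint of the $T'$-path must sit in an $\str{F}^{T'}$-copy, contradicting that these endpoints share an element. You instead case on which endpoint of each path is the single ``exceptional'' clique allowed by Claim~\ref{c:shape}, reduce to one configuration, and then analyse the turn of each path to extract the column constraints. Both routes use exactly the same ingredients---the disjoint ranges of $upper$ and $lower$, the absence of fixed points for $upper \circ lower$ and $lower \circ upper$, and the row-counting bound coming from $K=2|\AAA_\phi|+1$---and both handle the reversed $T'$-path via the row count. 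The paper's organisation is shorter because casing on columns immediately isolates the relevant endpoint; your organisation is more systematic and makes the role of the exceptional clique explicit. One small imprecision: your symmetry ``up to interchanging the two paths'' reducing configuration (B) to (A) works cleanly for the forward $T'$-path but for the reversed $T'$-path it also reverses the $T$-path, so strictly speaking you should either note that the analysis of (A) is insensitive to which path is the ``given'' one, or verify (B) directly (it goes through by the same mechanism, with the surviving sub-case being both paths entirely in columns $\{0,1\}$, killed by the identical row count).
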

	
	\begin{proof}
		Assume first that  there is a $T'$-path from $C$ to $C'$. We consider two cases. In the first, let the $T$-path from $D$ to $D'$ stay
	in the same column. Then the $T'$-path from $C$ to $C'$ must also stay in this column. If it is column $0$ or $1$ then 
	$C'$ must belong to a copy of structure $\str{F}^{T'}_{\otheta'}$ for some $\otheta'$ and $D'$ - to a copy of structure $\str{F}^{T}_{\otheta}$ for some $\otheta$. 
	But these copies  are disjoint and $C'$ and $D'$ share the element $b$; contradiction. 
	Analogously if this column is $2$ or $3$ then $C$ must belong to a copy of structure $\str{F}^{T'}_{\otheta'}$ and $D$ - to a copy of structure $\str{F}^{T}_{\otheta}$; contradiction.
	Assume now that the $T$-path from $D$ to $D'$ changes the column. As an example consider a subcase in which  it starts in column $2$ and ends in column $0$. 
	Then the $T'$-path from $C$ to $C'$ also starts in column $2$ and ends in column $0$. Both paths change the column for the same reason: one of
	the elements from column $2$ looks for its lower witness in column $0$. Thus all cliques from the fragment of the $T$-path from column $0$ must belong to 
	copies of structures $\str{F}^T_{\otheta}$ for some types $\otheta$ and  all cliques from the fragment of the $T'$-path from column $0$ must belong to copies of structures 
	$\str{F}^{T'}_{\otheta'}$ for some types $\otheta'$. Again, it follows that $C'$ and $D'$ are disjoint; contradiction. The other subcases can be treated analogously.

	Observe now that  there is no $T'$-path from $C'$ to $C$. 
	Assume to the contrary that such a path exists. From Claim \ref{c:shape}
	it follows that both this path and the $T$-path from $D$ to $D'$ must stay in the same column. From the same claim we have then
	that each of these paths consists of at most $|\AAA_\phi| + 1$ cliques. This is however not enough to round our whole structure consisting of $K=2|\AAA_\phi| +1$ rows. 
 (Recall that a single step on each of the paths goes from row $i$ to row $i+1 \mod K)$. 
		\end{proof}

\medskip\noindent
{\bf Remaining 2-types}
	For every pair of elements $a,a' \in B$ for which we have not
	defined a 2-type yet, we set it to be the unique  binary free type
	containing the 1-types of $a$ and $a'$. This step completes the
	construction of the structure $\str{B}$.

\medskip
Let us now see that indeed:
	\begin{claim}
	$\str{B} \models \phi$.
	\end{claim}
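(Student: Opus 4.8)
We need to verify that the structure $\str{B}$ constructed above is a model of $\phi$. The proof proceeds by checking each type of conjunct of the normal form of $\phi$, using the properties established during the construction. First I would record the key structural facts about $\str{B}$: (a) each copy of $\str{F}^{T}_{\otheta}$ sits inside $\str{B}$ with its $1$-types and internal $2$-types untouched, so it remains a neat model of $\phisym$; (b) the only $2$-types added between distinct copies are either $T$-splices $\tsplice{\beta}{T}$ of $\phi_\forall$-admissible $2$-types (during the witness and transitive-closure steps), or binary-free $2$-types; (c) by Claim~\ref{claimokc} and the no-fixed-point properties of $upper\circ lower$ and $lower\circ upper$, these assignments never conflict, so $\str{B}$ is well-defined; (d) each transitive symbol $T_i$ is genuinely transitive in $\str{B}$, since we explicitly took $T_i$-paths to their $<^{\str{B}}_{T_i}$-closure, and reflexivity is inherited from the $\str{F}^{T_i}_{\otheta}$.

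\begin{proof}
We check that each conjunct of $\phi$ (in normal form, Definition~\ref{def:normal}) is satisfied in $\str{B}$.

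\emph{Conjuncts of type $\cff$.} Every $2$-type realized in $\str{B}$ is either an internal $2$-type of some $\str{F}^{T}_{\otheta}$, a $T$-splice $\tsplice{\beta}{T}$ of a $\phi_\forall$-admissible $2$-type $\beta$, or a binary-free $2$-type formed from two $\phi_\forall$-admissible $1$-types. In the first case it is $\phi_\forall$-admissible because $\str{F}^{T}_{\otheta}\models\phisym$ (condition 2(b) of Definition~\ref{d:cert}) and hence satisfies $\phiuniv$ by Fact~\ref{fact:models-admissible}. In the second and third cases it is $\phi_\forall$-admissible by Fact~\ref{fact:admisible-type}. Thus all $2$-types of $\str{B}$ are $\phi_\forall$-admissible, so every $\cff$-conjunct (and every $\cffs$, $\cfs$, $\cf$ conjunct, all absorbed into $\phiuniv$) holds.

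\emph{Conjuncts of type $\ce$.} Each $\str{F}^{T_i}_{\otheta}$ is a neat model of $\phisym$, hence by Fact~\ref{fact:models-admissible} realizes, for every $\ce$-conjunct $\exists x(p(x)\wedge\psi(x))$, a $1$-type $\alpha$ with $\alpha\models p(x)\wedge\psi(x)$. Since a copy of each $\str{F}^{T_i}_{\otheta}$ embeds into $\str{B}$ preserving $1$-types, such a witness exists in $\str{B}$.

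\emph{Conjuncts of type $\cfes$.} There are two cases. If the guard $\eta(x,y)$ is symmetric, say $\eta(x,y)=T_ixy\wedge T_iyx$, then the relevant $\cfes$-conjunct already belongs to $\phisyms{T_i}$; a witness for each element $a$ lying in a $T_i$-clique of some copy of $\str{F}^{T}_{\otheta}$ is found inside that clique because $\str{F}^{T}_{\otheta}\models\phisym$ and hence, by its neatness and condition 2(c) of Definition~\ref{d:cert} plus $\phisyms{T_i}$-admissibility of the clique's counting type, the clique satisfies all conjuncts of $\phisyms{T_i}$; note that the transitive-closure and binary-free steps never delete symmetric $T_i$-edges inside a clique. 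If the guard is non-symmetric, say $\eta(x,y)=T_ixy\wedge\neg T_iyx$ (the upper case $T_iyx\wedge\neg T_ixy$ is symmetric to this), then for each element $a=(r,c,l,a_0)$ we explicitly provided, in the Non-symmetric witnesses step, a witness $a_i'$ with $\type{\str{B}}{a_0,a_i'}=\tsplice{\beta_i}{T_i}$, where $\beta_i$ satisfies $T_ixy\wedge\neg T_iyx\wedge\psi_i(x,y)$ by condition 3(a) of Definition~\ref{d:tadmiss}; the $T_i$-splice preserves the $T_i$-atoms and all non-special atoms, hence $\tsplice{\beta_i}{T_i}\models T_ixy\wedge\neg T_iyx\wedge\psi_i(x,y)$, so $a_i'$ is a valid witness. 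It remains to observe that this non-symmetric edge is not subsequently overwritten: the transitive-closure step only sets $2$-types between cliques joined by a $T_i$-path, and by Claim~\ref{claimokc} together with Claim~\ref{c:shape} no such overwriting affects the edges we placed (and such edges, being part of a $T_i$-path, only get their missing companion pairs filled in consistently with $<^{\str{B}}_{T_i}$).

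\emph{Conjuncts of type $\cfe$.} These have non-special guards $q(x,y)$. Inside each copy of $\str{F}^{T_i}_{\otheta}$, which is a neat model of $\phisym$ \emph{but not necessarily of $\phieq$}, a witness for such a conjunct for an element $a$ need not exist internally; however, the neat models produced by Lemma~\ref{lem:from-solution-to-model} were built from a $\phi$-admissible tuple $\bnTheta$, and more to the point: we note that in the certificate each $\str{F}^{T_i}_{\otheta}$ is already a neat model of $\phisym$, and by the construction in Lemma~\ref{lem:from-solution-to-model} it actually satisfies all $\cfe$-conjuncts as well, since those are handled there by the three-layer circular copying within $\str{F}^{T_i}_{\otheta}$. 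Hence each $\cfe$-witness is already present inside the copy of $\str{F}^{T_i}_{\otheta}$ containing $a$, and the edge realizing it, being binary-free with respect to all special symbols, is not touched by any later step.

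Thus every conjunct of $\phi$ holds in $\str{B}$, so $\str{B}\models\phi$. Finiteness of $\str{B}$ is immediate from the definition of its universe, and by Claim~\ref{c:tsize} its size is doubly exponential in $|\phi|$.
\end{proof}

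\textbf{Main obstacle.} The delicate point—where I would expect to spend the most care—is verifying the \emph{non-conflict} of $2$-type assignments across the three construction steps (witnesses, transitive closure, binary-free completion), and in particular that the non-symmetric witness edges placed in the first step survive the transitive-closure step unchanged. This rests on Claims~\ref{c:shape} and~\ref{claimokc}: the cylinder has circumference $K=2|\AAA_\phi|+1$, strictly larger than twice the maximal $T$-path length $|\AAA_\phi|+1$ within a single column, which prevents a path from ``wrapping around'' and meeting itself; and the disjointness of the $upper$/$lower$ images prevents a clique from serving simultaneously as an upper and a lower witness source. One must check that whenever the transitive-closure procedure sets $\type{\str{B}}{a,a'}$ for $a\in C$, $a'\in C'$ with a $T$-path from $C$ to $C'$, the pair $(a,a')$ had no previously assigned $2$-type, or had one already consistent with $T$-membership; the witness step only assigns edges between a clique and a clique in the \emph{next} row, which are themselves the first links of $T$-paths, so consistency is maintained. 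Making this bookkeeping fully rigorous is the technical heart of the argument; everything else is a routine match of conjuncts against the admissibility conditions baked into Definitions~\ref{d:tadmiss} and~\ref{d:cert}.
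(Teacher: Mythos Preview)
Your proof is correct and follows the same approach as the paper's: checking each conjunct type in turn, with $\ce$, $\cfe$, and symmetric $\cfes$ handled inside the $\str{F}^T_{\otheta}$ blocks, non-symmetric $\cfes$ handled by the explicit witness step, and universal conjuncts by $\phi_\forall$-admissibility of all $2$-types used. You are right to flag the $\cfe$ case as subtle---Definition~\ref{d:cert} literally demands only $\str{F}^T_{\otheta}\models\phisym$, but as you note (and as the paper's proof tacitly assumes by saying ``when the structures $\str{F}_{\otheta}^T$ are built''), these structures come from Lemma~\ref{lem:from-solution-to-model} applied to $\phieq$ and hence satisfy all $\cfe$-conjuncts as well.
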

	\begin{proof}
	First note that all special symbols are indeed interpreted as transitive and reflexive relation: they are equivalences when 
	restricted to copies of $\str{F}_{\otheta}^T$ structures, the non-symmetric connections appearing in step \emph{Non-symmetric
	witnesses} are transitively closed in step \emph{Transitive closure}. We take care for witnesses for conjuncts of type $\ce$, $\cfe$, and 	conjuncts of type $\cfes$ with symmetric special guards when the structures $\str{F}_{\otheta}^T$ are built (recall that each $\str{F}_{\otheta}^T \models \phieq$),  and for witnesses for conjuncts of type $\cfes$ with
	non-symmetric special guards in step \emph{Non-symmetric
	witnesses}. The universal conjuncts of type $\cffs$ and $\cff$ are satisfied since any $2$-type appearing in $\str{B}$ is $\phi_\forall$-admissible (it is either a
	$2$-type realized in a copy of one of  the structures $\str{F}_{\otheta}^T$, a $2$-type guaranteed by parts (ii)-(iii) of Definition \ref{d:tadmiss}, or a binary-free type).
		\end{proof}
	
	This finishes the proof of Lemma \ref{l:modelt}. 
	Combining it with Lemmas \ref{lem:normalform}, \ref{l:certexist} and Claim \ref{c:tsize} we get the
	following tight bound on the size of finite models.
		
	\begin{theorem} \label{t:sizet}
		Every finitely satisfiable \GFtTRG{} formula $\phi$ has a model of
		size at most doubly exponential in $|\phi|$.     
	\end{theorem}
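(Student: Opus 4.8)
The plan is to assemble the results established above. First I would apply Lemma~\ref{lem:normalform} to reduce to the case where $\phi$ is a \GFtTRG{}-sentence in normal form: the lemma produces a disjunction $\bigvee_i \psi_i$ of normal form sentences of length $O(|\phi|\log|\phi|)$ that is satisfiable over exactly the same domains as $\phi$, so a doubly exponential bound on the size of minimal finite models of the $\psi_i$ transfers immediately to $\phi$. Thus it suffices to prove the bound for a finitely satisfiable normal form $\phi$.

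Assuming then that $\phi$ is finitely satisfiable and in normal form, Lemma~\ref{l:certexist} supplies a certificate $(\boTheta, \mathcal{F})$ of finite satisfiability for $\phi$, and Lemma~\ref{l:modelt} turns this certificate into a finite model $\str{B} \models \phi$. It remains only to read off the size of $\str{B}$ from the construction in the proof of Lemma~\ref{l:modelt}. Its universe is a Cartesian product of four factors: a set of rows of size $K = 2|\AAA_\phi|+1$, which is exponential in $|\phi|$; a set of four columns, which is constant; a set $\{1,\dots,m\}$ of copies, where $m$ is bounded by the number of $\cfes$-conjuncts of $\phi$ with non-symmetric guards and hence is at most linear in $|\phi|$; and the disjoint union of the structures $\str{F}^{T_i}_{\otheta}$ taken over all transitive symbols $T_i$ in $\sigma_\phi$ and all $\otheta \in \oTheta^{T_i}$.

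To bound the last factor I would invoke Claim~\ref{c:tsize}: the number of sets $\oTheta^{T_i}$ is $k \le |\phi|$, each $\oTheta^{T_i}$ has size at most doubly exponential in $|\phi|$, and each $\str{F}^{T_i}_{\otheta} = \mathcal{F}_i(\otheta)$ is a neat model of $\phisym$ by condition 2(b) of Definition~\ref{d:cert}, hence of size at most doubly exponential in $|\phi|$ by part (iii) of Definition~\ref{def:neat}. A sum of doubly exponentially many doubly exponential quantities is still doubly exponential (for instance $2^{2^n}\cdot 2^{2^n} = 2^{2^{n+1}}$), so the disjoint union has doubly exponential size; multiplying it by the exponential, constant, and linear factors keeps $|B|$ doubly exponential in $|\phi|$. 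This proves the theorem.

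Since the argument is a pure assembly, I do not expect a genuine obstacle. The only points needing a little care are verifying that the normal-form reduction preserves a doubly exponential bound --- it does, because the reduction multiplies lengths by only a logarithmic factor and preserves the class of admissible domains --- and checking that combining a doubly exponential number of doubly exponential quantities, by one disjoint union and a bounded number of Cartesian products, does not escalate to a triple exponential, which it does not since no product has more than a bounded number of factors. For completeness, the matching lower bound --- doubly exponentially many elements are sometimes necessary --- is already witnessed by the family $\{\lambda_n^*\}_{n\in\N}$ of Example~\ref{ex:large-no-equality}, whose finite models must contain $2^{2^n}$ distinct $T$-cliques.
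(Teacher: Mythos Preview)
Your proposal is correct and follows exactly the paper's approach: the paper's proof is the single sentence ``Combining it with Lemmas~\ref{lem:normalform}, \ref{l:certexist} and Claim~\ref{c:tsize} we get the following tight bound on the size of finite models,'' and you have spelled out precisely that assembly, together with the size estimate for the universe $B$ built in the proof of Lemma~\ref{l:modelt}. Your additional remark on the matching lower bound via Example~\ref{ex:large-no-equality} is a welcome but optional supplement.
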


We remark that our construction yields ramified models. Indeed, structures $\str{F}_{\otheta}^{T_i}$ are
	constructed as models for \GFtEG{} formulas in Section \ref{sec:eq-with-equality} which were ramified.
	In steps \emph{Non-symmetric witnesses} and \emph{Transitive closure} we use $T$-splices of $2$-types, 
	and in step \emph{Remaining $2$-types} we use binary-free types.

	\subsection{Complexity}
	
	Theorem \ref{t:sizet} immediately gives \TwoNExpTime-upper bound on the complexity
	of the finite satisfiability problem. We however work further
	to obtain the tight complexity bounds.
	
	\begin{theorem}\label{t:TRGcomp}
		The finite satisfiability problem for \GFtTRG{} is \TwoExpTime-complete. 
	\end{theorem}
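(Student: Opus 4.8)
The \TwoExpTime{} lower bound is inherited from the inclusion $\GFtPG\subseteq\GFtTRG$ (Section~\ref{sec:specialguards}) and the \TwoExpTime-hardness of \GFt{} with a single partial order~\cite{Kie06}, which already applies to finite satisfiability. For the upper bound I would first invoke Lemma~\ref{lem:normalform}: compute the disjunction $\Psi=\bigvee_{i\in I}\psi_i$ of normal-form sentences, note that $\phi$ is finitely satisfiable iff some $\psi_i$ is, that $|I|$ is at most exponential and each $|\psi_i|=O(|\phi|\log|\phi|)$, and run the procedure below on each $\psi_i$ in turn --- the extra exponential factor is harmless. So assume $\phi$ is a normal-form \GFtTRG{}-formula. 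By Lemmas~\ref{l:certexist} and~\ref{l:modelt}, $\phi$ is finitely satisfiable iff it has a certificate of finite satisfiability (Definition~\ref{d:cert}), so it remains to decide the existence of such a certificate in deterministic doubly exponential time. The candidate space is bounded: an enriched $M_\phi$-counting type is a triple $(\theta,{\cal A},{\cal B})$ with $\theta\colon\AAA_\phi\rightarrow\{0,1,\ldots,M_\phi\}$ and ${\cal A},{\cal B}\subseteq\AAA_\phi$, and since $M_\phi$ and $|\AAA_\phi|$ are exponential in $|\phi|$ there are at most doubly exponentially many of them, each of doubly exponential description size (cf.~Claim~\ref{c:tsize}).

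\textbf{A greatest-fixpoint search.} The certificate conditions are monotone in the right direction: conditions~1(b), 1(c) of Definition~\ref{d:cert} only assert the existence of a suitable $\otheta'\in\oTheta^{T_i}$, condition~2(c) is monotone in $\boTheta$, and ${\cal F}$ merely picks \emph{some} valid structure; hence the union of two certificates is again a certificate, so there is a largest tuple $\boTheta^{*}$ occurring as the first component of a certificate, and $\phi$ has a certificate iff every component of $\boTheta^{*}$ is non-empty. I would compute $\boTheta^{*}$ by pruning: initialise each $\oTheta^{T_i}$ to the set of all $\phifull{T_i}$-admissible enriched $M_\phi$-counting types (Definition~\ref{d:tadmiss}), then iterate --- deleting from the current $\boTheta$ every $\otheta$ that fails condition~1(b) or 1(c) against the current $\oTheta^{T_i}$, or for which no structure $\str{F}^{T_i}_{\otheta}$ satisfying conditions~2(a)--(d) relative to the current $\boTheta$ exists --- until the tuple stabilises. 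The sequence strictly decreases until stable, so it reaches $\boTheta^{*}$ after at most doubly exponentially many rounds; a straightforward induction (using the monotonicity above, and the fact that every $\otheta$ in the first component of a certificate is $\phifull{T_i}$-admissible) shows the limit is exactly $\boTheta^{*}$.

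\textbf{Cost of a round.} It remains to check that a single round runs in deterministic doubly exponential time. Deciding $\phifull{T_i}$-admissibility of a fixed $\otheta=(\theta,{\cal A},{\cal B})$ reduces to deciding $\phisyms{T_i}$-admissibility of $\theta$ together with finitely many searches for $\phi_\forall$-admissible $2$-types; for the former, a class realising $\theta$ exactly has size $\sum_\alpha\theta(\alpha)\le M_\phi|\AAA_\phi|$ (single exponential), so it suffices to inspect all $\sigma_\phi$-structures of that size --- doubly exponentially many, each tested against $\phisyms{T_i}$ in single exponential time. Conditions~1(b), 1(c) are checked by scanning, for each of the $\le|\AAA_\phi|$ relevant $1$-types, the current $\oTheta^{T_i}$. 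Deciding whether a required $\str{F}^{T_i}_{\otheta}$ exists relative to $\boTheta$ reduces, exactly as in the proof of Lemma~\ref{l:certexist}, to forming $\bnTheta=(\nTheta^{T_1},\dots,\nTheta^{T_k})$ with $\nTheta^{T_j}$ the set of $M_\phi$-counting types enriched by some member of $\oTheta^{T_j}$, checking the purely syntactic, per-type conditions of Definition~\ref{d:tupleadmissibility} for $\bnTheta$, and testing whether $\uGamma{\bnTheta}{\theta}$ has a non-negative integer solution; by Lemma~\ref{lem:equations} it suffices to test for a rational solution, and since linear programming is in \PTime~\cite{khachiyan79} while $\uGamma{\bnTheta}{\theta}$ has at most doubly exponential size, this is deterministic doubly exponential time, with Lemma~\ref{lem:from-solution-to-model} guaranteeing that a positive answer really yields a structure meeting condition~2. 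Multiplying doubly exponentially many rounds by doubly exponential work per round yields the \TwoExpTime{} bound.

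\textbf{Main obstacle.} The genuinely new point is eliminating the nondeterminism used in the \GFtEG{} analysis: one has to notice that within a \TwoExpTime{} budget one may afford to sweep the entire doubly-exponential space of enriched $M_\phi$-counting types, and that the governing closure conditions are monotone, so a single greatest-fixpoint computation replaces an existential guess. All the model-theoretic substance --- that cutting to $M_\phi$-counting types loses nothing (Lemma~\ref{l:admissibility}), and that the induced integer program faithfully encodes the existence of the neat building-block structures (Lemmas~\ref{lem:equations}--\ref{lem:from-solution-to-model}, together with Lemma~\ref{l:modelt}) --- is already in hand, so what remains is essentially careful bookkeeping of the above reductions and complexity estimates.
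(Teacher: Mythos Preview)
Your overall plan is reasonable, but there is a genuine gap in the test you propose for condition~2 of Definition~\ref{d:cert}. You check, for $\otheta=(\theta,{\cal A},{\cal B})$ in the current $\oTheta^{T_i}$, whether a suitable structure $\str{F}^{T_i}_{\otheta}$ exists by forming the \emph{full} tuple $\bnTheta$ of un-enriched types from the current $\boTheta$ and asking whether $\Gamma_{\bnTheta}(\theta)$ has a solution. But this test is \emph{not} monotone in $\boTheta$: the system $\Gamma_{\bnTheta}$ contains constraints (E1a)/(E1b) for \emph{every} $\alpha\in\AAA_{\bnTheta}$ and \emph{every} index $j$, and these force $\alpha$ to be realised. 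If some $\alpha$ appears in $\nTheta^{T_i}$ but in no element of $\nTheta^{T_j}$, then (E1b) for $(\alpha,j)$ reads $0\ge 1$, so the system is unsatisfiable for every $\theta$. Such stray $1$-types can occur already at stage~$0$ (nothing ensures that a $1$-type occurring in some $\phifull{T_i}$-admissible enriched type occurs in some $\phifull{T_j}$-admissible one), so your pruning may delete \emph{all} types even when a valid certificate exists, and the greatest-fixpoint induction breaks. Your appeal to the proof of Lemma~\ref{l:certexist} is misleading: there $\bnTheta$ is extracted from an actual model $\hat{\str{A}}$, so every $\alpha\in\AAA_{\bnTheta}$ automatically lies in every $\nTheta^{T_j}$ --- a property you cannot assume for your intermediate $\boTheta_n$.

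The paper's algorithm avoids exactly this by an explicit existential step: in Step~(3) it \emph{guesses} a small tuple $\bnTheta$ (each component of exponentially bounded size, which is possible by Lemma~\ref{lem:equations}) and checks the system for that tuple; this is the quantification over subsets that your reduction omits. Your approach can be repaired: for each $\otheta$, enumerate all subtuples $\bnTheta'$ of the current $\bnTheta$ with each component of size at most $m$ (the exponential bound from Lemma~\ref{lem:equations}), and test $\Gamma_{\bnTheta'}(\theta)$ for each --- the number of such subtuples is still only doubly exponential, so the \TwoExpTime{} budget is preserved. But as written, the monotonicity claim underlying your fixpoint argument is false.
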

	
	The lower bound can be
	obtained in the presence of just one transitive symbol, as
	shown in  \citeN{Kie06} where the  (unrestricted) satisfiability problem is considered. 
The proof  there  is by an encoding of alternating Turing machines working in
	exponential space. Actually, it involves  infinite tree-like models, since the
	machines are allowed to loop and thus to work infinitely. It is however easy
	to adapt the proof to the case of finite models: it suffices to consider only
	machines which stop after doubly exponentially many steps, which can then
	be naturally encoded by finite trees. 
		\begin{remark}\label{remark:partial} 
				The intended models 	in the proof from \citeN{Kie06} do not contain non-singleton cliques, thus, 
	the proof
	works also for \GFt{} with one partial order and the (finite) satisfiability problem for \GFtPG{} is \TwoExpTime-hard. 
		\end{remark}

	To justify the upper bound we design an alternating algorithm working in exponential space.
	The $\TwoExpTime$-bound follows then from a well known result that $\TwoExpTime=\textsc{AExpSpace}$ \cite{CKS81}. In the description of the algorithm we  use short phrases of the form \emph{guess an object $A$ such that condition $B$ holds.} In a more formal description it would be replaced with \emph{guess an object $A$ of the appropriate shape, verify that it meets condition $B$ and reject if $B$ is not satisfied}.
	
	\medskip\noindent
	{\bf Algorithm} {\tt GF$^2{+}$TG{}-FINSAT}. Input: a normal form \GFtTG{} formula $\phi$. 
	
	\begin{enumerate}
		
		\item $Counter:= |\{(i,\otheta) \mid 1 \le i \le k \mbox{ and } \otheta \mbox{ is an enriched $M_\phi$-counting type} \}|$ 
		 	\item 
		{\bf guess} a $\phifull{T_1}$-admissible enriched $M_\phi$-counting type $\otheta^*=(\theta^*, \cal{A}^*, \cal{B}^*)$; set $s:=1$
		\item {\bf guess} a tuple of sets of $M_\phi$-counting types $\nTheta=(\nTheta^{T_1}, \ldots, \nTheta^{T_k})$, each of the sets       
		of size  bounded exponentially
		by $\mathfrak{f}(|\phi|)$ (for the function $\mathfrak{f}$ from Definition \ref{def:neat}), 
				such that
		\begin{itemize}
			\item $\theta^* \in \nTheta^{T_s}$
					\item every element of $\nTheta^{T_i}$ is $\phisyms{T_i}$-admissible
			\item the system of inequalities $\Gamma_{\bnTheta}(s,\theta^*)$  
			 has an integer solution in which all the unknowns have non-zero values.
		\end{itemize}
		\item {\bf universally choose} either to stay in $\theta^*$ or to move to some other counting type appearing in $\nTheta$. In the latter case:
		\begin{enumerate} 
			\item {\bf universally choose} $s \in \{1, \ldots, k \}$ and  $\theta^*$ appearing in $\nTheta^{T_s}$ different from the current $\theta^*$ 
			\item {\bf guess}  $\cal{A}^*$ and $\cal{B}^*$ such that  $\otheta^*=(\theta^*, \cal{A}^*, \cal{B}^*)$ enriches $\theta^*$ and
			is $\phifull{T_s}$-admissible 
		\end{enumerate}
		\item {\bf universally choose} to go Down or Up
		\item if Down has been chosen then if $\cal{B}^*=\emptyset$ then {\bf accept}; otherwise {\bf universally choose} a $1$-type $\alpha \in \cal{B}^*$   and {\bf guess} an enriched $M_\phi$-counting type 
		$\otheta'=(\theta', \cal{A}', \cal{B}')$ such that:
		\begin{itemize}
			\item $\otheta'$ is $\phifull{T_s}$-admissible 
			\item $\alpha \in \theta'$; $\alpha \not\in \cal{B}'$
			\item $\cal{B}' \cup \theta' \subseteq \cal{B}^*;  \cal{A}^* \cup \theta^* \subseteq \cal{A}'$
					\end{itemize}
		if Up has been chosen then proceed symmetrically
		\item $\otheta^*:=\otheta'$
		\item $Counter:=Counter-1$; if $Counter =0$ then {\bf accept}; else goto Step (3)
		
	\end{enumerate}
	
	It is not difficult to see that the algorithm uses only exponential space. Indeed, the value of the variable $Counter$ is bounded
	doubly exponentially, thus it can be written using exponentially many bits. Also, an (enriched) $M_\phi$-counting type can be described using exponentially many bits, and the algorithm needs to store at most exponentially many such types.
	Let us now argue that the output of the algorithm is correct.
	
	\begin{claim}
		The Algorithm {\tt GF$^2{+}$TG-FINSAT} accepts its input normal form \GFtTG{} formula $\phi$ iff $\phi$ has a finite model.
	\end{claim}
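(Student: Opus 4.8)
The plan is to establish the claim via two implications, both routed through the notion of a certificate of finite satisfiability (Definition~\ref{d:cert}): for ``$\Leftarrow$'' we convert a certificate into an accepting run, and for ``$\Rightarrow$'' we extract a certificate from an accepting run and invoke Lemma~\ref{l:modelt}. (If $\phi$ is a \GFtTG{} formula we first pass, via Lemma~\ref{l:reduction}, to an equisatisfiable \GFtTRG{} formula, for which certificates are defined; we henceforth speak of that formula.) I will use the game-semantic reading of alternating computation: the \emph{prover} resolves the \emph{guess} steps, the \emph{refuter} resolves the \emph{universally choose} steps, and the space bound already claimed follows since the only data the algorithm holds are one enriched $M_\phi$-counting type, one exponentially bounded tuple $\nTheta$ of $M_\phi$-counting types, the system $\Sigma_{\nTheta}(\theta^*)$, and the doubly-exponentially-bounded counter $Counter$, together with $\TwoExpTime=\textsc{AExpSpace}$.

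\emph{Completeness} (finite model $\Rightarrow$ acceptance). Assume $\phi$ has a finite model; by Lemma~\ref{l:certexist} it has a certificate $(\boTheta,{\cal F})$, and, inspecting that proof, each $\str{F}^{T_i}_{\otheta}$ is the neat model produced by Lemma~\ref{lem:from-solution-to-model} from a solvable system, so that the set of $M_\phi$-counting types it actually realises is a (exponentially bounded) tuple $\nTheta$ for which $\Sigma_{\nTheta}(\theta)$ has an integer solution with \emph{all} unknowns nonzero. Let the prover play by the certificate: at Step~2 guess some $\otheta^*\in\oTheta^{T_1}$; whenever Step~3 is reached with current pair $(T_s,\otheta^*=(\theta^*,{\cal A}^*,{\cal B}^*))$, guess the tuple $\nTheta$ of counting types realised in $\str{F}^{T_s}_{\otheta^*}$ (so $\theta^*\in\nTheta^{T_s}$ by condition 2(d), each member of $\nTheta^{T_i}$ is $\phisyms{T_i}$-admissible by 2(a)--(c), and $\Sigma_{\nTheta}(\theta^*)$ is solvable with nonzero unknowns); at Step~4b, after the refuter moved to some $\theta$ in $\nTheta^{T_s}$, guess the enrichment of $\theta$ given by condition 2(c); at Step~6 use condition 1(b) (Down) or 1(c) (Up) to guess the required $\otheta'\in\oTheta^{T_s}$, which is $\phifull{T_s}$-admissible by 1(a). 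Since the certificate is a closed object, the prover is never stuck, so along every branch the run continues until it either reaches an enriched type with empty ${\cal B}^*$ (resp.\ ${\cal A}^*$) and accepts at Step~6, or exhausts $Counter$ and accepts at Step~8; hence every branch accepts and the run is accepting.

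\emph{Soundness} (acceptance $\Rightarrow$ finite model). Fix an accepting computation tree. First make it \emph{loop-free}: if a root-to-leaf branch revisits the same pair $(T_s,\otheta^*)$ at two Step~3 instances, then, as all guesses made from Step~3 on depend only on that pair and on $\phi$, graft onto the lower occurrence a copy of the accepting subtree hanging from the upper one; since a smaller $Counter$ only makes Step~8 acceptance occur sooner, the result is still accepting, and iterating yields a loop-free accepting tree. In such a tree no branch runs for $Counter$ iterations (that would force $Counter+1$ distinct pairs while only $Counter$ exist), so every branch ends by the ``${\cal B}^*=\emptyset$/${\cal A}^*=\emptyset$'' rule. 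Now define $\oTheta^{T_i}$ to consist of all enriched $M_\phi$-counting types that are the current type, with associated symbol $T_i$, at some Step~3 of the tree, and let ${\cal F}_i(\otheta)$ be the neat model given by Lemma~\ref{lem:from-solution-to-model} from the $\nTheta$ and solution guessed at that Step~3 (these exist since the algorithm verified $\phi$-admissibility of $\nTheta$ and solvability of $\Sigma_{\nTheta}(\theta)$). One checks $(\boTheta,{\cal F})$ is a certificate: 1(a) holds because admissibility of every enriched type was verified when guessed (Steps~2, 4b, 6); 1(b),(c) hold because, when $\otheta^*\in\oTheta^{T_i}$ is processed, the refuter's ``stay'' option at Step~4 together with the universal Down/Up choice at Step~5 and the universal choice of $\alpha\in{\cal B}^*$ (resp.\ ${\cal A}^*$) at Step~6 force the prover, on appropriate sibling branches, to exhibit precisely the $\otheta'$ of 1(b) (resp.\ 1(c)), and that $\otheta'$ is itself the current type at a later Step~3, hence lies in $\oTheta^{T_i}$; 2(a),(b),(d) hold by the construction of the neat models (2(d) via the inequality $X^{T_i}_{\theta}\ge1$ of $\Sigma_{\nTheta}(\theta)$); and 2(c) holds because every clique counting type realised in $\str{F}^{T_i}_{\otheta}$ belongs to the guessed $\nTheta$, and the refuter's Step~4 ``move to $\theta$'' option followed by Step~4b forces the prover to supply an enrichment of $\theta$, which enters the appropriate $\oTheta^{T_j}$. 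By Claim~\ref{c:tsize} this certificate is well-formed, so Lemma~\ref{l:modelt} yields a finite model of $\phi$.

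The main obstacle is the soundness direction and, within it, the role of $Counter$: a priori an accepting run could ``time out'' at Step~8 while its current enriched type still has undischarged obligations, so that the naive collection of visited types is not a certificate. The loop-freeness reduction --- grafting accepting subtrees, using that shrinking $Counter$ can only help --- removes this, and it is the one place where the bookkeeping (which enriched type is current after Step~4, and which transitive symbol it is attached to) must be tracked with care; everything else is a routine matching of the algorithm's steps against the clauses of Definition~\ref{d:cert}.
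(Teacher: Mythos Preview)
Your completeness direction is essentially the paper's argument. The soundness direction, however, has two genuine gaps.

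First, your grafting does not make the run loop-free. You propose to graft the subtree from the \emph{upper} occurrence of a repeated pair $(T_s,\otheta^*)$ onto the \emph{lower} one. But the upper subtree contains the lower occurrence, so the result still has the repeat $u\to v$ (and now a further copy of the same state below $v$). Iterating does not remove repeats; it just reproduces, with smaller $Counter$, what the original run already had. Consequently your pigeonhole sentence ``no branch runs for $Counter$ iterations (that would force $Counter+1$ distinct pairs)'' is false: a loop-free branch that visits $N$ distinct pairs at Step~3, with $N$ equal to the initial value of $Counter$, accepts at Step~8 after exactly $N$ iterations, and the $\otheta'$ produced in its last Step~6 need never be visited at any Step~3.

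Second, your definition $\oTheta^{T_i}=\{$types current at some Step~3$\}$ is too small for condition~2(c) of Definition~\ref{d:cert}. The enrichment produced at Step~4b is the current $\otheta^*$ only through Steps~5--7; Step~7 overwrites it with $\otheta'$, so the Step~4b type is never the current type at any Step~3. Hence your claim that this enrichment ``enters the appropriate $\oTheta^{T_j}$'' does not follow from your own definition, and without it the certificate need not satisfy~2(c).

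The paper repairs both points differently. It takes a \emph{regular} accepting run (same $(s,\otheta^*)$ implies identical guesses), and defines $\oTheta^{T_i}$ as the set of enriched types with which the run \emph{leaves Step~4} with $s=i$; this automatically contains the Step~4b types, giving~2(c). For the dangling $\otheta'$ created when a branch accepts at Step~8 with $Counter=0$, the paper uses pigeonhole: on such a branch the loop has been entered $N$ times, so together with the final pair after Step~7 there are $N{+}1$ pairs among only $N$ possibilities, forcing a repeat; regularity then lets one locate another branch where the same $\otheta'$ is reached with $Counter>0$, so it does occur at a later Step~3 (and via the ``stay'' option leaves Step~4), hence lies in $\oTheta^{T_s}$. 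Adopting the paper's definition of $\oTheta^{T_i}$ and replacing your grafting by the regularity-plus-pigeonhole argument fixes your proof.
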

	\begin{proof}
		Assume that $\phi$ has a finite model. We show how the guesses in our algorithm can be made to obtain an accepting run on $\phi$.
		By Lemma \ref{l:certexist} there is a certificate $(\boTheta, \cal{F})$ for finite satisfiability of $\phi$, $\boTheta=(\oTheta^{T_1}, \ldots, \oTheta^{T_k})$.
		In Step (2) we choose $\otheta^*$ to be any element of $\oTheta^{T_1}$.  We next explain that the following loop invariant is       
		ensured for the loop (3)-(8): whenever it is entered
		with $s$ and $\otheta^*$ such that $\otheta^{*} \in \oTheta^{T_s}$ then we can go through it and either accept or 
		reach Step (8) with $s$ and $\otheta^{*}$ such that again $\otheta^{*} \in \oTheta^{T_s}$. 
		Since the condition on the $Counter$ variable in Step (8) ensures that the procedure eventually stops, this will guarantee the existence of an accepting run. 
		
		For Step (3) consider the structure $\str{F}_{\otheta^{*}}^{T_s} = \cal{F}_s({\otheta^*})$. For each $i$ we guess $\nTheta^{T_i}$
		to be the set of $M_\phi$-cuttings of counting types
				realized in  $\str{F}_{\otheta^{*}}^{T_s}$ by $T_i$-classes. This way $\theta^* \in \nTheta^{T_s}$. Since by
		condition (ii)(b) of Definition \ref{d:cert}
		$\str{F}_{\otheta^{*}}^{T_s} \models \phieq$ 
		it follows that every element of $\nTheta^{T_i}$ is $\phifull{T_i}$-admissible. 
		By Fact \ref{f:model-to-solution}, the system $\Gamma_{\bnTheta}(s,\theta^*)$  has an appropriate solution corresponding to $\str{F}_{\otheta^{*}}^{T_s}$. 
		
		In Step (4b) we take $\otheta^* \in \oTheta^{T_s}$ guaranteed by condition (ii)(c) of Definition \ref{d:cert}. In Step (6) take
		$\otheta' \in \oTheta^{T_s}$ guaranteed by condition  (i)(b) or (i)(c) of Definition \ref{d:cert}. Since in (7) this $\otheta'$ is substituted
		for $\otheta^*$ our loop invariant is retained.
		
		Assume now that {\tt GF$^2{+}$TG-FINSAT} accepts $\phi$. We explain how to construct a certificate $(\boTheta, \cal{F})$,
		$\boTheta=(\oTheta^{T_1}, \ldots, \oTheta^{T_k})$, $\cal{F}=(\cal{F}_1, \ldots, \cal{F}_k)$ of finite
		satisfiability of $\phi$ (cf.~Definition~\ref{d:cert}). Consider an accepting run $\pi$ of the procedure, and w.l.o.g.~assume that this run is \emph{regular}, i.e., if the
		procedure enters the loop (3)-(8) several times with the same values of $\otheta^*$ and $s$    
		it always makes the same guesses. By a \emph{run} we mean here a tree in which every node is labelled by a step number and values of all the variables,
		having a single outgoing edge from the nodes representing steps with existential  guesses (or steps without guesses) and multiple outgoing edges from the nodes representing steps with universal choices (corresponding to all possible choices).    
		
		For each $i$, let $\oTheta^{T_i}$ consists of the enriched $M_\phi$-counting types $\otheta^*$ with which  $\pi$ leaves Step (4) with $s=i$ on some of its branches. 
		Note that this in particular covers the types $\otheta^*$ guessed in Step (2). Moreover this also covers the types $\otheta'$ guessed in Step (6). To see this
		consider such $\otheta'$ and assume that the corresponding value of $s$ is $i'$. In Step (7) we make $\otheta^*=\otheta'$. Now, there are two possibilities in Step (8): either we jump to Step (3) or  accept and stop. 
		In the former case, we can choose to stay in Step (4) in $\theta^*$ and leave (4) with the values of $s$ and $\otheta^*$ unchanged. In the latter, 
		we have $Counter=0$ which means that in the current branch of $\pi$ the beginning of the loop (Step (3)) was entered twice with the same values of $\otheta^*$ and $   s$. Due to our assumption on the regularity
		of $\pi$ this means that there is a branch in $\pi$ in which $\otheta'$ is guessed with $s=i'$ at a moment when the value of $Counter >0$; on this branch  Step (3) is 
		then entered with $s=i'$ and $\otheta'$, and, as previously we can keep these values in Step (4). 
		
		Condition (i)(a) of Definition \ref{d:cert} is satisfied since every time we construct an enriched $M_\phi$-counting type it is explicitly required to be
		$\phi$-admissible. Conditions (i)(b) and (i)(c) are taken care of in Step (6). 
		
		Now we define $\cal{F}$. It is readily verified that for each enriched $M_\phi$-counting type $\otheta=(\theta, \cal{A}, \cal{B}) \in \oTheta^{T_i}$ there
		is a node in $\pi$ labelled with Step (3), in which $\nTheta$ with $\theta \in \nTheta^{T_i}$ is guessed. We set $\cal{F}_i(\otheta)$ to be the structure
		corresponding to the solution of $\Gamma_{\bnTheta}(s,\theta^*)$  from the last condition in Step (3), guaranteed by Lemma
		\ref{lem:from-solution-to-model}. This construction
		immediately ensures condition (ii) of Definition \ref{d:cert}. 
	\end{proof}

	\section{Discussion}
	In Section~\ref{sec:eq-with-equality} we showed that the finite satisfiability problem for \GFtEG{} is  \NExpTime{}-complete. 	In the previous section we showed that  the finite satisfiability problem 
	for  \GFtTRG{} is \TwoExpTime-complete.      
 Recall Remark~\ref{remark:partial} that the \TwoExpTime{} lower bound  applies already to \GFtPG.  This fragment is subsumed by both  
 \GFtTG{} and \GFtTRG{}, hence the \TwoExpTime{} lower bound transfers also to these fragments. Owing to the  reduction from
	Lemma \ref{l:reduction}  also the finite satisfiability problems for \GFtTG{} and \GFtTRG{} are \TwoExpTime-complete. 
		 Moreover, also the finite satisfiability problem for the combined fragment \GFtEG$+{\textsc{tsG}}$ is \NExpTime-complete, and for other combined variants, where at least one special symbol is required to be only transitive, transitive-reflexive or a partial order, 
		 ---\TwoExpTime-complete.

		 We also remark that essentially the same technique as in Section~\ref{sec:transitive} can be used to show that the finite satisfiability problem for \GFt{} with one transitive relation that is not restricted to guards is decidable in \TwoExpTime. As the first step one can transform the formula in such a way that  the binary guards in conjuncts of type $\cff$ and $\cfe$  have the form $\bing(x,y) \wedge \neg Txy \wedge \neg Tyx$. Inspection of our constructions shows that  when we require new witnesses for these conjuncts, we always provide them using 2-types not containing positive atoms of the form $Txy$ or $Tyx$. Hence, we can state the following corollary.
		 \begin{corollary}
		 The finite satisfiability problem for \GFt{} with one transitive relation is \TwoExpTime-complete. 
		 \end{corollary}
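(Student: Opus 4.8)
The plan is to reduce to the development of Section~\ref{sec:transitive} after a preprocessing step that forces the (now unrestricted) transitive symbol $T$ into guard-like positions. The lower bound needs no new work: finite satisfiability is already \TwoExpTime-hard for \GFtPG{} (\GFt{} with one partial order), as obtained for Theorem~\ref{t:TRGcomp} by adapting~\cite{Kie06}; adjoining to such a hard instance the two conjuncts $\forall x\, Txx$ and $\forall x\forall y\,(Txy\wedge Tyx\rightarrow x=y)$ --- both legal here, since $T$ may now occur outside guards --- yields a \GFt-sentence with one transitive relation whose finite models with $T$ transitive are exactly the finite models with $T$ a partial order. Hence it remains to prove the \TwoExpTime{} upper bound.

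First I would sharpen the normal form of Lemma~\ref{lem:normalform}. In the present setting $T$ can additionally occur inside the quantifier-free matrices $\psi$ of conjuncts of types $\cff$, $\cfe$ and $\cfes$. I split every such conjunct according to the four mutually exclusive cases for the pair $(Txy,Tyx)$, writing $\tau_1=Txy\wedge Tyx$, $\tau_2=Txy\wedge\neg Tyx$, $\tau_3=\neg Txy\wedge Tyx$, $\tau_4=\neg Txy\wedge\neg Tyx$, and substituting into $\psi$ the truth values of $Txy,Tyx$ prescribed by $\tau_j$ to obtain a $T$-free $\psi_j$. A conjunct $\forall x\forall y\,(q(x,y)\rightarrow\psi(x,y))$ becomes the conjunction over $j$ of $\forall x\forall y\,(q(x,y)\wedge\tau_j(x,y)\rightarrow\psi_j(x,y))$; for $j\le 3$ this is a $\cffs$-conjunct with normalized special guard $\tau_j$ and $T$-free body $q(x,y)\rightarrow\psi_j$, and for $j=4$ it has the guard $q(x,y)\wedge\neg Txy\wedge\neg Tyx$. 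A conjunct $\forall x\,(p(x)\rightarrow\exists y\,(q(x,y)\wedge\psi(x,y)))$ is first turned into $\forall x\,(p(x)\rightarrow\bigvee_j\exists y\,(q(x,y)\wedge\tau_j(x,y)\wedge\psi_j(x,y)))$, and then, introducing fresh unary predicates $Q_1,\dots,Q_4$ and adding $\forall x\,(p(x)\rightarrow\bigvee_j Q_jx)$, is replaced by the four conjuncts $\forall x\,(Q_jx\rightarrow\exists y\,(q(x,y)\wedge\tau_j(x,y)\wedge\psi_j(x,y)))$ --- again $\cfes$-conjuncts for $j\le 3$ and, for $j=4$, an existential conjunct whose guard is $q(x,y)\wedge\neg Txy\wedge\neg Tyx$. ($T$-guarded conjuncts, and atoms $Txx$ in one-variable conjuncts, are absorbed exactly as in Definition~\ref{def:normal}.) The outcome is a sentence, of length polynomial in $|\phi|$ and finitely equisatisfiable with $\phi$, that is a conjunction of conjuncts of the types of Definition~\ref{def:normal} together with conjuncts of the two new shapes $\forall x\forall y\,(q(x,y)\wedge\neg Txy\wedge\neg Tyx\rightarrow\psi)$ and $\forall x\,(p(x)\rightarrow\exists y\,(q(x,y)\wedge\neg Txy\wedge\neg Tyx\wedge\psi))$ with $T$-free $\psi$.

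Then I would run the certificate machinery of Section~\ref{sec:transitive} essentially verbatim, extending the notions of $\phi_\forall$-admissibility and of a $\phifull{T}$-admissible enriched counting type (Definitions~\ref{d:admissible},~\ref{d:tadmiss}, and the certificate conditions of Definition~\ref{d:cert}) so that their admissibility checks also account for the two new conjunct shapes. The point --- already flagged in the Discussion --- is that all $2$-types placed during the constructions of Lemmas~\ref{lem:from-solution-to-model} and~\ref{l:modelt} are either binary-free joins of $\phi_\forall$-admissible $1$-types or $T$-splices: the circular witnessing inside the building blocks $\str{F}^T_{\otheta}$ already produces, for $\cfe$-conjuncts, $2$-types containing $\neg Txy\wedge\neg Tyx$ (the witness must lie outside the clique in any case), so the new existential conjuncts receive legal witnesses; on every pair where $Txy$ or $Tyx$ holds --- the pairs touched by non-symmetric witnessing and by transitive closure --- the new universal conjuncts are vacuous; and on the remaining (binary-free) pairs they hold because every $2$-type occurring in the model is $\phi_\forall$-admissible. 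Consequently Lemmas~\ref{l:certexist} and~\ref{l:modelt} go through and give a finite model of at most doubly exponential size, while the alternating exponential-space algorithm behind Theorem~\ref{t:TRGcomp} adapts literally, placing the problem in $\textsc{AExpSpace}=\TwoExpTime$. The main obstacle I anticipate is purely bookkeeping: re-auditing every assignment of a $2$-type in Sections~\ref{sec:eq-with-equality} and~\ref{sec:transitive} to confirm the ``$T$-free witness'' invariant for the new conjuncts, and checking that the fresh unary predicates from the normalization are harmless for the counting-type arguments --- they are, being ordinary non-special unary symbols that enlarge $\AAA_\phi$ only polynomially.
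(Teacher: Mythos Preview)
Your proposal is correct and follows essentially the same route as the paper: the paper's own argument (a brief remark in the Discussion) says exactly that one normalizes so that the non-special $\cff$ and $\cfe$ conjuncts carry guards of the form $q(x,y)\wedge\neg Txy\wedge\neg Tyx$, and then observes that the witnesses supplied for such conjuncts in the constructions of Sections~\ref{sec:eq-with-equality} and~\ref{sec:transitive} always use $2$-types with no positive $T$-atoms. Your normalization via the four-way case split on $(Txy,Tyx)$ and fresh unary selectors is just a more explicit rendering of that step; for the lower bound you could also simply note that \GFt{} with one transitive guard is a syntactic subfragment, so the hardness from~\cite{Kie06} transfers directly without the detour through partial orders.
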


	Finally, we mention a few interesting open questions. First there are several closely related fragments, for which the complexity bounds of the satisfiability problem have already been established but it is not know if the same applies to the finite satisfiability problems. These include:
	\begin{enumerate}
		\item \GFt{} with {\em one-way} transitive guards. In this variant we assume that if the formulas
	 of the form $\exists v \psi(u,v)$ ($\forall v \psi(u,v)$), where $\{u, v \} = \{x,y \}$,
		employ a transitive guard,
		then it is of the form $Tuv$, i.e., the quantified variable occurs in the second position. 
		(\citeN{Kie06} shows 
		 that the satisfiability problem for \GFt{} with one-way transitive guards is \ExpSpace-complete).
\item Full \GF{} (with \emph{unbounded number of variables}) with transitive guards. (\citeN{ST04} show  that the satisfiability problem for this logic is \TwoExpTime-complete).
	\item \GFt{} with conjunction of transitive guards but without equality. In this variant we allow, e.g., guards of the form $T_1xy \wedge T_2yx$. (\citeN{Kaz06} shows that the satisfiability problem for this extension of \GFtTG{} remains \TwoExpTime-complete). 
	\item \GFt{} with transitive closure operators in guards. (\citeN{Michaliszyn09} shows \TwoExpTime-completeness of the satisfiability problem).
	\end{enumerate}
 	
 	More importantly, it would be interesting to extend the results for \GFt{} with special guards (and also for the above mentioned fragments) by additionally allowing \emph{constants} that play an important role in some application areas. So far constants have been carefully avoided in perhaps all decidable extensions of the guarded fragment with transitivity, as their presence no longer allows one to construct ramified models where pairs of distinct elements are members of at most one transitive relation. Nevertheless, we believe  that the techniques of this paper can be expanded to the case of \GFt{} with special guards and constants. The case with an unbounded number of variables might require additional insights. 
\section*{Acknowledgements}
The authors would like to thank Ian Pratt-Hartmann for his insightful comments on the proofs from the conference paper \cite{KT07} and the anonymous reviewers for their constructive criticism and valuable suggestions on earlier versions of this paper.

\bibliographystyle{ACM-Reference-Format-Journals}
\bibliography{gf2egtg-L}{}

\received{Month Year}{Month Year}{Month Year}

\end{document}